\documentclass[journal,onecolumn]{IEEEtran}
\IEEEoverridecommandlockouts
\usepackage[T1]{fontenc} 
\usepackage{times}
\usepackage[cmex10]{amsmath}
\usepackage{amssymb}
\usepackage{amsthm}
\usepackage{color}
\usepackage{xcolor}
\usepackage{algorithm}
\usepackage[scr=rsfs]{mathalpha}
\usepackage{graphicx}
\usepackage[caption=false]{subfig}
\usepackage{multirow}
\usepackage{algpseudocode}
\usepackage[bookmarks=false,colorlinks=false,pdfborder={0 0 0}]{hyperref}
\usepackage{cite}
\usepackage{bm}
\usepackage{arydshln}
\usepackage{mathtools}
\usepackage{microtype}
\usepackage[figuresright]{rotating}
\usepackage{threeparttable}
\usepackage{booktabs}
\usepackage{verbatim}

\usepackage{amsthm} 

\newtheorem{construction}{Construction}
\newtheorem{example}{Example}
\theoremstyle{definition}
\newtheorem{definition}{Definition}
\usepackage{enumitem}

\newtheorem{theorem}{Theorem}

\newtheorem{lemma}[theorem]{Lemma}

\def\BibTeX{{\rm B\kern-.05em{\sc i\kern-.025em b}\kern-.08em
		T\kern-.1667em\lower.7ex\hbox{E}\kern-.125emX}}

\long\def\symbolfootnote[#1]#2{\begingroup
	\def\thefootnote{\fnsymbol{footnote}}\footnote[#1]{#2}\endgroup}
\IEEEoverridecommandlockouts

\title{Asymptotically Optimal Codes for Correcting Burst Deletions and Insertions in Labeled DNA Sequences}

\author{Wenhao Liu$^\S$, Zhengyi Jiang$^\S$, Zhongyi Huang$^\S$, Hanxu Hou$^\dagger$\\
	$^\S$ Department of Mathematical Sciences, Tsinghua University\\
	$^\dagger$ Shenzhen University of Advanced Technology
}

\begin{document}
	\let\emph\textit
	\maketitle
	\pagestyle{empty}  
	\thispagestyle{empty} 
	\begin{abstract}
        Fluorescent labeling is a cornerstone of DNA visualization and a key enabler of random access in DNA-based data storage. However, the stochastic nature of biochemical processes, including synthesis, hybridization, and optical readout, induces \emph{burst} synchronization errors within the resulting labeling sequences. To address this critical challenge, we formally introduce \emph{burst $t$-deletion/insertion $\mathcal{A}$-labeling codes,} 
        designed to correct a single burst of $t$  deletions or insertions in the label domain. Our contributions are threefold.
		\begin{itemize}
			\item \textbf{Fundamental limit.} We establish an information-theoretic lower bound of $\log_4 n + \mathcal{O}(1)$ on the redundancy of any such code for all $t \ge 1$ with $t \mid n$. To the best of our knowledge, this resolves the first information-theoretic lower bound even for the single-error case \(t=1\).
			\item \textbf{Explicit construction.} For $t \ge 2$, $t \mid n$, and $n \ge 7t + 3$, we propose explicit encoding and decoding algorithms, both running in $\mathcal{O}(n^2)$ time. A novel generalized Run-Length Limited (RLL) constraint is introduced to bridge the structural mismatch between the  DNA encoding domain and the label error domain.
			\item \textbf{Asymptotic optimality.} The proposed scheme achieves redundancy $\log_4 n + (t-1)\log_4 \log_{8/3} n + \mathcal{O}(1)$, matching the dominant term of the lower bound up to a small $\mathcal{O}(\log\log n)$ overhead, rendering the construction asymptotically optimal for fixed $t$.
		\end{itemize}
	\end{abstract}
	
	 \begin{IEEEkeywords}
		DNA labeling codes, burst errors, deletion/insertion, asymptotic optimality
	 \end{IEEEkeywords}
	
	\section{Introduction}
	DNA has attracted significant attention as an ultra-high-density and durable medium for data storage, yet efficient and accurate data retrieval remains a fundamental bottleneck. Conventional sequencing technologies, such as Next-Generation Sequencing and Nanopore sequencing~\cite{church2012next, goldman2013towards, lopez2019dna}, are well established but suffer from high cost and limited throughput. As an alternative, affinity-based fluorescent labeling has emerged as a promising low-cost readout mechanism~\cite{levy2013beyond, jeffet2021single}.
	
	Fluorescent labeling has been a cornerstone of molecular biology for decades, supporting visualization and targeted sequence identification through techniques such as Fluorescence \emph{in situ} Hybridization (FISH)~\cite{moter2000fluorescence} and optical mapping~\cite{muller2017optical}. Adapting these mature tools to DNA-based data storage avoids the cost of resolving every base individually: biochemical labels bind to short target patterns, and data is recovered by optically locating these labels along the strand.

    Recently, Hanania \textit{et al.}~\cite{2025ISIT, 2025TIT} formalized this probing process as a communication channel and analyzed its information-theoretic capacity. Their model makes explicit a fundamental ambiguity of the channel: each label reveals only the \emph{positions} at which it appears. For example, given the underlying sequence $\bm{x}=\mathrm{ACGACTA}$ and a single label $\bm{\alpha}=\mathrm{AC}$, the label matches starting at the 1st and 4th positions. The resulting labeling output $L_{\bm{\alpha}}(\bm{x})=1001000$ explicitly indicates these match positions with $1$s and the mismatches with $0$s. This sequence records matches and mismatches but does not specify which base sits at any non-matching position. Hence, a single label is generally insufficient for unique reconstruction.
	
	To overcome this ambiguity, a \emph{multi-label} probing scheme is essential. Hanania \textit{et al.}~\cite{2025TIT} proved that, among all label sets consisting of length-two patterns, a specific size-$10$ set
	\begin{equation*}
		\mathcal{S} = \{\mathrm{AC, CA, GA, GC, GG, GT, TA, TC, TG, TT}\}
	\end{equation*}
	is the minimum-cardinality set for which the original DNA sequence can be uniquely reconstructed from the joint labeling output, provided the two boundary bases are known. Let $\Sigma_q^n$ denote the set of length-$n$ sequences over a $q$-ary alphabet $\Sigma_q$. Throughout the introduction, we informally write $L_{\mathcal{S}}(\bm{x}) \in \Sigma_{11}^n$ (where $\Sigma_{11}=\{0,1,\ldots,10\}$) for the resulting \emph{labeling sequence} of a DNA strand $\bm{x}\in\Sigma_4^n$ (where $\Sigma_4$ represents the four DNA nucleotides): its $i$-th entry equals $k\in\{1,\ldots,10\}$ if the $i$-th dinucleotide of $\bm{x}$ is the $k$-th label of $\mathcal{S}$, and $0$ otherwise.
	
	Although early models assumed an error-free readout, practical biochemical labeling is intrinsically noisy. Stochastic hybridization, photophysics, and limited optical resolution all produce synchronization-type errors, predominantly insertions and deletions (indels)~\cite{muller2017optical, 2025ISIT}: missed labels, off-target labels, or unreadable spots shift the positional indices and severely degrade reconstruction. To handle this, Hanania \textit{et al.}~\cite{2025ISIT} proposed an error-correcting framework over $\mathcal{S}$ that corrects a \emph{single} label-domain indel. In their scheme the user data $\bm{u}$ is encoded into a constrained DNA strand $\bm{x}$, and the (possibly corrupted) labeling output is then decoded back to $\bm{u}$.
	
	Single-error model is, however, often inadequate in practice. Many physical artefacts of optical labeling produce \emph{bursts} of consecutive errors~\cite{schoeny2017codes, gabrys2017asymmetric}: continuous fluorophore photobleaching, regional binding failures, and localized resolution loss can erase or distort several adjacent label positions at once~\cite{lam2012genome, levy2013beyond}. This motivates the design of codes that correct a burst of $t$ consecutive label-domain indels with $t \ge 2$, which is the focus of the present work.
	
	\subsection{Technical Challenges}
	Although burst-deletion codes for ordinary $q$-ary alphabets have been studied extensively~\cite{schoeny2017codes, 2024DVT, 2023bound}, their constructions and bounds \emph{do not} carry over to the label-domain channel in any direct way. The difficulty stems from a structural mismatch between the encoding domain and the error domain:
	\begin{enumerate}
		\item \textbf{Mismatch between the encoder and channel domains.}
		The encoder may freely choose the underlying DNA sequence $\bm{x}\in\Sigma_4^n$, but the error acts on the labeling sequence $L_{\mathcal{S}}(\bm{x})\in\Sigma_{11}^n$. The map $L_{\mathcal{S}}$ is many-to-one in general (only invertible given boundary bases) and \emph{non-surjective}: most sequences in $\Sigma_{11}^n$ are not valid labelings of any DNA strand. Standard $11$-ary burst-deletion codes (i.e., conventional error-correcting codes designed to correct consecutive deletions over an 11-symbol alphabet) cannot be used because their codewords might not lie in the image of $L_{\mathcal{S}}$.
		\item \textbf{Strong inter-symbol dependence.}
		Adjacent labels share an underlying base, so the labeling sequence inherits sharp local dependencies that ordinary i.i.d.-type counting arguments fail to capture. A single burst of insertions or deletions in the label domain corresponds, via $L_{\mathcal{S}}^{-1}$, to a structurally constrained perturbation of $\bm{x}$; the size of the resulting labeling ball (i.e., the set of all possible corrupted labeling sequences generated from $\bm{x}$) depends in a non-trivial way on $\bm{x}$ itself.
		\item \textbf{No matching lower bound.}
		Even for the single-error case $t=1$, the information-theoretic redundancy lower bound for this channel was previously unknown, which makes it impossible to certify the optimality of any explicit construction.
	\end{enumerate}
    These issues mean that even formulating an appropriate code—let alone proving optimality—requires techniques specific to the labeling channel. To bridge the structural gap between the encoder and the error channel, we propose a novel ``generalized RLL'' constraint. This constraint is applied directly to the DNA domain but is carefully designed to deterministically enforce a traditional run-length limit on the resulting labeling sequence, which is crucial for bounding the positional ambiguity of indels. Our explicit matrix construction then couples this generalized RLL constraint with label-domain Varshamov--Tenengolts (VT)~\cite{1984VT_def} and shifted VT (SVT)~\cite{2017QVT_def} redundancy, effectively reducing a length-$t$ burst error into a set of localizable single-row indels. Furthermore, to certify the theoretical efficiency of this scheme, our redundancy lower bound is established via a sphere-packing argument that operates strictly over the valid image of $L_{\mathcal{S}}$, completely circumventing the non-surjectivity of the labeling map.
	
	\subsection{Contributions}
	We consider the end-to-end DNA storage pipeline of encoding, synthesis, label-based readout, and decoding. Our main contributions are:
	\begin{enumerate}
		\item \textbf{Code model.} We formally define burst $t$-deletion/insertion $\mathcal{A}$-labeling codes over the minimal label set $\mathcal{S}$. To the best of our knowledge, this is the first coding model in this channel that handles burst indels of length $t \ge 2$.
		\item \textbf{Fundamental limit.} We derive the first information-theoretic lower bound on redundancy, establishing that the redundancy is at least $\log_4 n + \mathcal{O}(1)$ for any burst $t$-deletion/insertion $\mathcal{A}$-labeling codes. This resolves an open problem that was previously unsettled even for $t=1$.
        \item \textbf{Explicit construction.} For any burst length $t \ge 2$ and code length $n$ satisfying $t \mid n$ and $n \ge 7t + 3$, we propose an explicit code construction along with the corresponding encoding and decoding algorithms, both running in $\mathcal{O}(n^2)$ time, hence practical at realistic block lengths.
		\item \textbf{Asymptotic optimality.} The proposed scheme achieves redundancy
		\begin{equation*}
			r(n,t) = \log_4 n + (t-1)\log_4 \log_{8/3} n + \mathcal{O}(1),
		\end{equation*}
		which matches the dominant term of the lower bound up to an $\mathcal{O}(\log\log n)$ gap.
	\end{enumerate}
	
	\section{Preliminaries and System Model}
	\label{sec:preliminaries}
	
	This section fixes notation, formally introduces the label mapping, and reviews the classical building blocks used in our construction.
	
	\subsection{Notation and Label Mapping}
	
	Let $q \ge 2$ and $n$ be positive integers. We denote the $q$-ary alphabet by $\Sigma_q = \{0,1,\ldots,q-1\}$. We write $\Sigma_q^n$ for the set of length-$n$ sequences over $\Sigma_q$. For DNA storage, we identify the four nucleotides with $\Sigma_4$ via the bijection $\mathrm{A}\mapsto 0,\, \mathrm{T}\mapsto 1,\, \mathrm{C}\mapsto 2,\, \mathrm{G}\mapsto 3$.
	 
	We denote the cardinality of a set $S$ by $|S|$, the ring of integers modulo $n$ by $\mathbb{Z}_n$, and the integer interval $\{i,i+1,\ldots,j\}$ by $[i:j]$ for $i\le j$. For a vector $\bm{x}=(x_1,\ldots,x_n)\in\Sigma_q^n$, we write $\bm{x}_{[i_1:i_2]} \triangleq (x_{i_1},\ldots,x_{i_2})$. Matrix indexing follows standard conventions: For a matrix $\bm{X}$, $\bm{X}_{i,j}$ denotes the $(i,j)$-th entry; $\bm{X}_{i,[j_1:j_2]}$ and $\bm{X}_{[i_1:i_2],j}$ are the corresponding row and column segments; and $\bm{X}_{[i_1:i_2],[j_1:j_2]}$ is the submatrix indexed by these rows and columns. All logarithms in asymptotic statements are base $e$ (natural log) unless otherwise specified. Furthermore, a \emph{run} in a sequence $\bm{s}\in\Sigma_q^n$ is defined as a maximal contiguous subsegment of identical symbols. Formally, a substring $\bm{s}_{[i:i+r-1]}$ is a run of length $r\ge 1$ if $s_i=\cdots=s_{i+r-1}$, with the boundary conditions that $s_{i-1}\neq s_i$ (when $i>1$) and $s_{i+r-1}\neq s_{i+r}$ (when $i+r-1<n$).
	
	We now recall the formalism of labels and labeling sequences from~\cite{2025TIT, 2025ISIT}.
	
	\begin{definition}[Labels and Labeling Sequences \cite{2025TIT, 2025ISIT}]
		A \emph{label} is any sequence $\bm{\alpha}\in\Sigma_4^\ell$. Given a set of $k$ labels $\mathcal{A}=\{\bm{\alpha}_1,\ldots,\bm{\alpha}_k\}$ with $|\bm{\alpha}_j| = \ell_j$, and assuming that no label is a substring of another, the \emph{$\mathcal{A}$-labeling sequence} of $\bm{x}=(x_1,\ldots,x_n)\in\Sigma_4^n$ is the sequence $L_{\mathcal{A}}(\bm{x}) = (c_1,\ldots,c_n)\in\Sigma_{k+1}^n$ with
		\begin{equation*}
			c_i = \begin{cases}
				j, & \text{if } \exists\, j\in[1:k] \text{ such that } 
				 i \le n-\ell_j+1 \text{ and } \bm{x}_{[i:i+\ell_j-1]} = \bm{\alpha}_j, \\
				0, & \text{otherwise}.
			\end{cases}
		\end{equation*}
		Here, $c_i=0$ indicates a mismatch at position $i$. We assume no label is a substring of another to ensure unambiguous labeling. 
	\end{definition}
	
	Throughout the paper, we adopt the mini-cardinality label set $\mathcal{S}$ identified in \cite{2025TIT, 2025ISIT}.
	
	\begin{definition}[Minimal Label Set $\mathcal{S}$ \cite{2025TIT}]
		Let
		\begin{equation*}
			\mathcal{S} = \{\mathrm{AC, CA, GA, GC, GG, GT, TA, TC, TG, TT}\},
		\end{equation*}
		enumerated as $\bm{\alpha}_1,\ldots,\bm{\alpha}_{10}$. Consequently,  $L_{\mathcal{S}}(\bm{x})\in\Sigma_{11}^n$, where $0$ denotes a mismatch.
	\end{definition}
	
	The set $\mathcal{S}$ is distinguished by the following reconstruction property.
	
	\begin{theorem}[Unique Reconstruction and Minimality \cite{2025TIT,2025ISIT}]
		\label{thm:reconstruction}
		For any $\bm{x}=(x_1,\ldots,x_n)\in\Sigma_4^n$, if the boundary bases $x_1$ and $x_n$ are known, then $\bm{x}$ is uniquely determined by $L_{\mathcal{S}}(\bm{x})$. Moreover, among all label sets consisting of length-two labels, $\mathcal{S}$ is the minimum-cardinality set with this reconstruction property.
	\end{theorem}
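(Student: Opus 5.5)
The proof has two parts: a sequential decoding argument for unique reconstruction, and a combinatorial argument on the directed graph of unlabeled dinucleotides for minimality.

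\textbf{Unique reconstruction.} The key structural fact is read off directly from $\mathcal{S}$. Every dinucleotide starting with $\mathrm{G}$ or $\mathrm{T}$ is a label. Among dinucleotides starting with $\mathrm{A}$ only $\mathrm{AC}$ is a label, and among those starting with $\mathrm{C}$ only $\mathrm{CA}$ is. Hence, for $i\le n-1$, $c_i=0$ forces $x_i\in\{\mathrm{A},\mathrm{C}\}$, and more precisely $(x_i,x_{i+1})\in\{\mathrm{A}\}\times\{\mathrm{A},\mathrm{T},\mathrm{G}\}\cup\{\mathrm{C}\}\times\{\mathrm{C},\mathrm{T},\mathrm{G}\}$. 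I will recover $x_j$ by induction on $j$, starting from the known $x_1$. Suppose $x_{j-1}$ is known and $2\le j\le n-1$. If $c_{j-1}\neq 0$, then $x_j$ is the second letter of the label $\bm{\alpha}_{c_{j-1}}$. If $c_{j-1}=0$, look at $c_j$. If $c_j\ne 0$, then $x_j$ is the first letter of $\bm{\alpha}_{c_j}$. If $c_j=0$, then $x_j\in\{\mathrm{A},\mathrm{C}\}$, and combined with the mismatch at $j-1$ the only admissible pairs are $(\mathrm{A},\mathrm{A})$ and $(\mathrm{C},\mathrm{C})$, so $x_j=x_{j-1}$. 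The last symbol $x_n$ is given, which also handles the fact that $c_n=0$ always. This step is routine; its only role is to isolate the \emph{``zero runs are constant and lie in $\{\mathrm{A},\mathrm{C}\}$''} phenomenon.

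\textbf{Minimality.} Encode a label set $\mathcal{A}$ of dinucleotides by the directed graph $U$ on vertex set $\Sigma_4$, with loops allowed, whose edges are the \emph{unlabeled} dinucleotides. Then $|\mathcal{A}|=16-|U|$. Two sequences with equal endpoints have equal labelings exactly when, at every index $i$, either $(x_i,x_{i+1})=(y_i,y_{i+1})$ or both pairs are edges of $U$. In particular, reconstruction fails as soon as $U$ contains two distinct walks of the same length with the same start and end vertices; call such a pair a \emph{diamond}. It then suffices to show that every diamond-free $U$ has at most $6$ edges, that is, $|\mathcal{A}|\ge 10$. One should also check that the complement of $\mathcal{S}$, namely $\{\mathrm{AA,AT,AG,CC,CT,CG}\}$, is diamond-free. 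This holds because $\mathrm{T}$ and $\mathrm{G}$ are sinks and $\mathrm{A}$, $\mathrm{C}$ have only their own loops as cycles, which is consistent with the first part.

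\textbf{Main obstacle: bounding diamond-free $U$.} I expect the bound $|U|\le 6$ for diamond-free $U$ to be the hard step. My plan is a case analysis built on a few elementary forbidden patterns.
\begin{enumerate}
\item Two edges $a\to b$ and $a\to b'$ together with $b\to c$ and $b'\to c$ give a length-$2$ diamond.
\item A vertex $v$ with a loop and an edge $v\to w$, where $w$ also has a loop, gives the diamond $v\,v\,w$ versus $v\,w\,w$.
\item More generally, two distinct cycles through a common vertex generate diamonds by concatenating them in different orders; the same goes for a cycle and a loop of compatible lengths after padding.
\end{enumerate}
From the third pattern, the cycle structure of $U$ is very restricted: each strongly connected component is a single loop, a single directed cycle, or a single vertex. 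Acyclic edges between components are constrained by the first two patterns.

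I would first try to bound the number of edges by counting. The components contribute at most one cycle's worth of edges each. Between components, each ordered pair of ``source'' and ``target'' vertices can be joined by at most one walk of each length. Enumerating the cases where $U$ has $0,1,2,\dots$ looped vertices should show that $7$ edges always produce a diamond. If the counting is messy, a fallback is to argue directly that $7$ edges on $4$ vertices force either two loops joined by an edge, or a vertex with two out-neighbours sharing a common out-neighbour, or two cycles through a common vertex, and then exhibit the diamond in each case.
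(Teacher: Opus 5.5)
Your first part is correct and complete. Every dinucleotide beginning with $\mathrm{G}$ or $\mathrm{T}$ is a label, and the only unlabeled pairs ending in $\mathrm{A}$ or $\mathrm{C}$ are $\mathrm{AA}$ and $\mathrm{CC}$. Hence the forward induction from $x_1$ recovers $x_2,\dots,x_{n-1}$, and $x_n$ is given. Your reduction for minimality is also right: two strands with equal boundary bases have equal labelings exactly when they agree except on stretches where both follow edges of $U$. So any $\mathcal{A}$ with $|\mathcal{A}|\le 9$ fails once every $U$ with $|U|\ge 7$ is shown to contain a diamond. The paper itself gives no proof of this theorem (it is imported from Hanania et al.), so there is no in-paper argument to compare with.

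The genuine gap is that the statement carrying the whole minimality claim is never proved. That statement is: a diamond-free digraph on four vertices, loops allowed, has at most $6$ edges. You only say that an enumeration ``should show'' it, and the tools you list are not enough to run that enumeration.

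Patterns 1--2 only detect length-$2$ coincidences. Your vague ``cycle and a loop after padding'' must become the lemma that \emph{no walk leads from one cycle of $U$ to a different cycle}. To prove it, let $C_1,C_2$ be distinct cycles of lengths $p,q$; they are vertex-disjoint by your SCC observation. Let $P$ be a shortest path, of length $s\ge 1$, from $u\in C_1$ to $w\in C_2$. Then $C_1^{q}P$ and $PC_2^{p}$ are walks of length $pq+s$ from $u$ to $w$. They differ at time $pq$: the first is back at $u$, while the second is on $C_2$ or at an interior vertex of $P$, and neither is $u$. This lemma is needed, for example, to exclude a loop at $v$ with an edge into a $2$-cycle $a\leftrightarrow b$ ($vvva$ versus $vaba$). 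It is also needed to exclude a path $u\to x\to v$ between two looped vertices ($uuxv$ versus $uxvv$). Patterns 1--2 catch neither.

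With the lemma the count is short. Suppose $k$ vertices lie on cycles. They contribute exactly $k$ edges, no other edge lies inside one cycle or joins two cycles, and every remaining vertex pair carries at most one edge. This gives $|U|\le 6$ for $k\in\{0,3,4\}$. For $k=1$ (a loop at $v$) and for $k=2$ (two loops or one $2$-cycle, plus two free vertices) the bound is $7$, attained only if every remaining pair is used. Pattern 1 then produces a diamond:
\begin{itemize}
\item \textbf{Case $k=1$.} The three free vertices form a tournament. Each of its edges $a\to b$ forces $a,b$ onto opposite sides of $v$: $v\to a,v\to b$ gives $vvb$ versus $vab$, and $a\to v,b\to v$ gives $abv$ versus $avv$. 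This is impossible on a triangle.
\item \textbf{Case $k=2$.} Each free vertex is entirely upstream or entirely downstream of the cycle vertices, by the lemma or the SCC structure. Each of the three resulting configurations gives a length-$2$ diamond, through the free edge or through the two cycle vertices.
\end{itemize}
Until this analysis is written out, the second sentence of the theorem remains unproved.
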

	
	\begin{example}
		Let $\bm{x} = \mathrm{ACGA}$, and suppose both boundary bases ($x_1=\mathrm{A}$ and $x_4=\mathrm{A}$) are known. In $\mathcal{S}$, $\mathrm{AC}$ has index $1$, $\mathrm{CG}\notin\mathcal{S}$ contributes a $0$, and $\mathrm{GA}$ has index $3$. Hence $L_{\mathcal{S}}(\bm{x}) = (1,0,3,0)$.
		
		\textbf{Reconstruction.} Given $\bm{x} = \mathrm{A??A}$ and $L_{\mathcal{S}}(\bm{x}) = (1,0,3,0)$:
		\begin{enumerate}
			\item $c_1=1$ corresponds to $\mathrm{AC}$, so $\bm{x}_{[1:2]} = \mathrm{AC}$, i.e., $\bm{x} = \mathrm{AC?A}$.
			\item $c_3=3$ corresponds to $\mathrm{GA}$, so $\bm{x}_{[3:4]} = \mathrm{GA}$, i.e., $\bm{x} = \mathrm{ACGA}$.
		\end{enumerate}
		The labeling sequence couples consecutive bases, so the boundary bases propagate the reconstruction inward in a domino fashion.
	\end{example}
	
	\subsection{Error Models for Labeling Codes}
	
	We first recall the general indel error model of~\cite{2025ISIT}, in which errors act on the labeling sequence \emph{after} the labeling process.
	
	\begin{definition}[General Indel Error Model \cite{2025ISIT}]
		Let $\bm{e} = (e_1, e_2, e_3) \in \mathbb{N}^3$, and let $\mathcal{A}$ be a label set over $\Sigma_4$. For $\bm{x} \in \Sigma_4^n$, the \emph{$\bm{e}$-error $\mathcal{A}$-labeling ball} $\mathcal{B}L_{\mathcal{A}}(\bm{x}, \bm{e})$ is the set of all labeling sequences that can result from at most $e_1$ substitutions, $e_2$ insertions, and $e_3$ deletions to $L_{\mathcal{A}}(\bm{x})$.
		
		A code $\mathcal{C} \subseteq \Sigma_4^n$ is an \emph{$\bm{e}$-error $\mathcal{A}$-labeling code} if for any distinct $\bm{x}_1, \bm{x}_2 \in \mathcal{C}$,
		\begin{equation*}
			\mathcal{B}L_{\mathcal{A}}(\bm{x}_1, \bm{e}) \cap \mathcal{B}L_{\mathcal{A}}(\bm{x}_2, \bm{e}) = \emptyset.
		\end{equation*}
		Special cases give substitution, insertion, and deletion $\mathcal{A}$-labeling codes when the other two error counts vanish.
	\end{definition}
	
	Unlike the scattered-error, this paper considers a single \emph{burst} of $t$ consecutive deletions or insertions. 
	
	\begin{definition}[Burst $t$-Deletion/Insertion $\mathcal{A}$-Labeling Ball]
		\label{def:burst_ball}
		For $t \ge 1$ and $\bm{x} \in \Sigma_q^n$, the \emph{burst $t$-deletion $\mathcal{A}$-labeling ball} $\mathcal{B}L_{\mathcal{A}}^{\text{del}}(\bm{x}, t)$ (resp.\ \emph{burst $t$-insertion $\mathcal{A}$-labeling ball} $\mathcal{B}L_{\mathcal{A}}^{\text{ins}}(\bm{x}, t)$) is the set of sequences obtainable by deleting (resp.\ inserting) exactly $t$ consecutive symbols from $L_{\mathcal{A}}(\bm{x})$.
	\end{definition}
	
	\begin{definition}[Burst $t$-Deletion/Insertion $\mathcal{A}$-Labeling Code]
		\label{def:burst_code}
		Let $t \ge 1$ be an integer. A code $\mathcal{C} \subseteq \Sigma_q^n$ is a \emph{burst $t$-deletion/insertion $\mathcal{A}$-labeling code} if for any distinct $\bm{x}_1, \bm{x}_2 \in \mathcal{C}$, both
		\begin{equation*}
			\mathcal{B}L_{\mathcal{A}}^{\text{del}}(\bm{x}_1, t) \cap \mathcal{B}L_{\mathcal{A}}^{\text{del}}(\bm{x}_2, t) = \emptyset
		\end{equation*}
		and
		\begin{equation*}
			\mathcal{B}L_{\mathcal{A}}^{\text{ins}}(\bm{x}_1, t) \cap \mathcal{B}L_{\mathcal{A}}^{\text{ins}}(\bm{x}_2, t) = \emptyset
		\end{equation*}
		hold.
	\end{definition}
	
	The efficiency of such a code is measured by its \emph{redundancy}~\cite{1984VT_def}, defined below; it counts, in $\Sigma_q$-symbols, how much information the code sacrifices for error correction relative to the unconstrained space $\Sigma_q^n$.

	\begin{definition}[Redundancy~\cite{1984VT_def}]
		\label{def:redundancy}
		For a code $\mathcal{C}\subseteq\Sigma_q^n$, the \emph{redundancy} of $\mathcal{C}$ is
		\begin{equation*}
			r(\mathcal{C}) \triangleq n - \log_q |\mathcal{C}|.
		\end{equation*}
	\end{definition}
	
	\subsection{Classical VT and Shifted-VT Codes}
	\label{subsec:1d_insdel_codes}
	
	The Varshamov--Tenengolts (VT) code and its shifted variant (SVT) are the row-level building blocks of our construction.
	
	\begin{definition}[$q$-ary VT Codes \cite{1984VT_def}]
		For integers $n \ge 1$, $q \ge 2$, $0 \le a < n$, and $0 \le b < q$, define
		\begin{equation*}
    \mathrm{VT}_{a,b}(n, q) \triangleq \bigg\{ \bm{x} \in \Sigma_q^n : \sum_{i=1}^n (i-1)\alpha_i \equiv a \pmod n, \quad \sum_{i=1}^n x_i \equiv b \pmod q \bigg\},
\end{equation*}
	\end{definition}
	
	\begin{theorem}[\cite{1984VT_def}]
		$\mathrm{VT}_{a,b}(n, q)$ can correct a single insertion or deletion in $\mathcal{O}(n)$ time.
	\end{theorem}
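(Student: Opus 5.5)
The statement is the classical result that $\mathrm{VT}_{a,b}(n,q)$ corrects a single insertion or deletion in $\mathcal{O}(n)$ time. Read as in \cite{1984VT_def}, the first congruence is over the auxiliary binary sequence $\alpha_i = 1$ if $x_i \ge x_{i-1}$ and $\alpha_i=0$ otherwise (for $i\ge 2$, with $\alpha_1=1$). I will follow Tenenbaum's original argument and treat deletion first; insertion is symmetric.

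\textbf{Step 1: identify the deleted symbol.} Let $\bm{y}$ be obtained from $\bm{x}\in\mathrm{VT}_{a,b}(n,q)$ by deleting $x_p$. Since $\sum x_i \equiv b \pmod q$, the deleted value is $x_p \equiv b-\sum_i y_i \pmod q$. Because $0\le x_p<q$, this determines $x_p$ exactly.

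\textbf{Step 2: locate the deletion in $\bm{\alpha}$.} Deleting $x_p$ from $\bm{x}$ corresponds to deleting a single bit from $\bm{\alpha}$, yielding $\bm{\alpha}'$, the $\alpha$-sequence of $\bm{y}$. This is the step I expect to be the main obstacle. The issue is that a deletion at $x_p$ changes comparisons at positions $p$ and $p+1$, and these merge into the single comparison $y_p$ versus $y_{p-1}$. I will check by case analysis on the relative order of $x_{p-1},x_p,x_{p+1}$ that the resulting change to $\bm{\alpha}$ is exactly a deletion of one bit from the run structure. The boundary conventions at $p=1$ and $p=n$ need separate checking.

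\textbf{Step 3: compute the deficiency.} The weighted checksum $\sum (i-1)\alpha_i \bmod n$ is (up to index shift) a binary VT syndrome. So the standard binary VT argument applies. Let $w$ be the weight of $\bm{\alpha}'$ and $\Delta = a - \sum (i-1)\alpha'_i \bmod n$. If the deleted bit is $0$, then $\Delta$ equals the number of $1$s to its right. If the deleted bit is $1$, then $\Delta$ equals $w$ plus the number of $0$s to its left, up to the off-by-one from the $(i-1)$ weighting. In both cases $0\le \Delta\le n-1$, so $\Delta$ is recovered exactly from its residue.

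\textbf{Step 4: reconstruct $\bm{x}$ and account for running time.} Comparing $\Delta$ with $w$ tells whether the deleted bit is $0$ or $1$ and pins down the run of $\bm{\alpha}$ in which it lies. That run corresponds to a maximal monotone (nondecreasing or strictly decreasing) stretch of $\bm{y}$. Inserting the known value $x_p$ at any point of that stretch consistent with monotonicity yields the same $\bm{x}$, since all valid insertion points within the stretch give identical sequences. Every step is one linear scan, computing the syndromes, $\bm{\alpha}'$, and the location, so the total time is $\mathcal{O}(n)$. For a single insertion, the same argument applies with $\Delta$ replaced by the surplus.
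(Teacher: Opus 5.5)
Your proposal is correct and is essentially Tenengolts' original argument, which is all the paper relies on here: it states this theorem without proof and simply defers to \cite{1984VT_def}. The off-by-one you flag in Step~3 resolves cleanly. With the $(i-1)$ weights, a deleted $1$ gives exactly $\Delta = w+z$, where $z$ is the number of $0$s to its left, while a deleted $0$ gives $\Delta\le w-1$ because $\alpha_1=1$ always lies to its left, so the test is $\Delta<w$ versus $\Delta\ge w$ (for insertions the surplus can hit colliding boundary values, including $n\equiv 0$, but the colliding cases prescribe the same corrected sequence, exactly as in the binary VT insertion decoder).
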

	
	\begin{definition}[$q$-ary Shifted VT Codes \cite{2017QVT_def,2017q-ary-QVT_def}]
		For integers $n \ge 1, q \ge 2$, $0 \le a \le P$, $0 \le b < q$, and $c \in \{0, 1\}$, the $q$-ary shifted VT code $\mathrm{SVT}_{a,b,c}(n, P, q)$ is the set of $\bm{x}\in\Sigma_q^n$ satisfying
		\begin{equation*}
			\left\{
			\begin{aligned}
				\sum_{i=1}^n i \alpha_i &\equiv a \pmod{P + 1}, \\
				\sum_{i=1}^n x_i &\equiv b \pmod q, \\
				\sum_{i=1}^n \alpha_i &\equiv c \pmod 2,
			\end{aligned}
			\right.
		\end{equation*}
		where $\bm{\alpha}$ is the same characteristic sequence as for VT codes.
	\end{definition}

\begin{theorem}[\cite{2017q-ary-QVT_def}]
    The $q$-ary SVT code $\mathrm{SVT}_{a,b,c}(n, P, q)$ can correct a single insertion or deletion, provided the exact index of the error is known a priori to fall within a specific interval of $P$ consecutive positions, i.e., within $[w, w + P - 1]$ for some known integer $w$ ($1 \le w \le n$). The complexity of this decoding algorithm is $\mathcal{O}(n)$.
\end{theorem}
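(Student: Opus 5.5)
The statement to prove is the cited theorem of \cite{2017q-ary-QVT_def}: $\mathrm{SVT}_{a,b,c}(n,P,q)$ corrects a single insertion or deletion in $\mathcal{O}(n)$ time, provided the error position is known to lie in a window $[w,w+P-1]$. I treat the deletion case in detail; insertion is symmetric. Take $\bm{\alpha}$ to be the usual binary characteristic sequence of $\bm{x}$, with $\alpha_i=1$ if $x_{i}\ge x_{i-1}$ and $0$ otherwise (with a fixed convention for $i=1$). The plan is to reduce to the binary shifted VT argument.

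\textbf{Step 1: recover the deleted symbol.} Let $\bm{y}$ be the received sequence. The checksum $\sum x_i \equiv b \pmod q$ gives the deleted symbol as $v=b-\sum y_i \bmod q$. Once $v$ is known, a deletion in $\bm{x}$ corresponds to a single deletion in $\bm{\alpha}$, up to the local adjustment at the neighbouring index, which is handled as in classical $q$-ary VT. The parity constraint $\sum\alpha_i\equiv c \pmod 2$ then tells us whether the deleted bit of $\bm{\alpha}$ was $0$ or $1$.

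\textbf{Step 2: localize the deletion in $\bm{\alpha}$ (the main obstacle).} Suppose the deleted bit is $\alpha_p$ with $p$ in the known window. Compare the weighted sum of $\bm{\alpha}$ with that of the received characteristic sequence $\bm{\alpha}'$:
\begin{itemize}
\item if the deleted bit is $0$, the weighted sum drops by $N_1$, the number of $1$'s to the right of $p$;
\item if the deleted bit is $1$, it drops by $p+N_1$.
\end{itemize}
Either way, the drop is monotone in $p$ over the window. Because the window has length $P$, the drop also varies by at most $P$ across it. Hence its residue modulo $P+1$ determines it uniquely, after subtracting a baseline computed at $p=w$. The monotonicity holds in the non-strict sense, so moving $p$ within a run gives the same drop. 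This identifies the correct insertion point up to the run containing it, which is exactly enough: inserting the bit anywhere in that run yields the same $\bm{\alpha}$. The care needed here is getting the offsets exact, including the $(i-1)$ versus $i$ indexing and the boundary cases where $p$ is at the window's ends.

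\textbf{Step 3: lift back to $\bm{x}$.} Having the position in $\bm{\alpha}$ and the value $v$, insert $v$ into $\bm{y}$ at the corresponding run of $\bm{x}$. Among positions consistent with $\bm{\alpha}$, pick one where inserting $v$ reproduces the known $\bm{\alpha}$, as in the standard $q$-ary VT decoder. All steps are single linear scans, giving $\mathcal{O}(n)$ time.

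For insertion, the same checksums identify the inserted value and bit, and the weighted-sum increase localizes the position within the window. This is the standard argument of \cite{2017QVT_def,2017q-ary-QVT_def}, which the paper assumes here.
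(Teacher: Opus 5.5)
Your sketch is correct and is essentially the standard argument behind the cited result: reduce to the characteristic sequence $\bm{\alpha}$, use the parity check to identify the deleted bit, localize it within the window with the shifted weighted checksum, and lift back uniquely to $\bm{x}$. The paper itself gives no proof here and simply imports the theorem from \cite{2017q-ary-QVT_def}; the one offset you should state explicitly is that deleting $x_p$ deletes $\alpha_p$ or $\alpha_{p+1}$, so the search range in the $\bm{\alpha}$-domain is $[w,w+P]$ with $P+1$ positions, not the window itself—this is exactly why the drop can vary by up to $P$ and why the modulus $P+1$ in the paper's definition is what your localization step needs.
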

	
	SVT codes have substantially lower redundancy than VT codes at the cost of requiring a coarse a-priori localization of the error. In our matrix construction this is exploited as follows: a VT code in the first row localizes the burst, after which all remaining rows can be decoded by high-rate SVT codes within a small window, for details see Section~\ref{sec:dec}.

	\section{Fundamental Limit of Burst $t$-Deletion/Insertion $\mathcal{A}$-Labeling Codes} \label{sec:theoretical_bound}

    We now establish a sphere-packing-style lower bound on the redundancy of any burst $t$-deletion/insertion $\mathcal{A}$-labeling code. A direct application of standard $q$-ary burst arguments yields a valid but loose lower bound of $\log_{11} n + \mathcal{O}(1)$. To achieve a tighter bound of $\log_4 n + \mathcal{O}(1)$ that accurately reflects the fundamental limit, our proof must circumvent the non-surjectivity of $L_{\mathcal{A}}$ by operating strictly over its valid image space. We accomplish this by combining an interleaving decomposition with a delicate counting argument on long runs within the underlying DNA sequence. The formal statement and proof are as follows.

	
	\begin{theorem} \label{thm:theoretical_bound_new}
		For $t \ge 1$ and $t \mid n$, the redundancy of any burst $t$-deletion/insertion $\mathcal{A}$-labeling code $\mathcal{C}\subseteq\Sigma_4^n$ satisfies
		\begin{equation*}
			r(\mathcal{C}) \ge \log_4 n + \mathcal{O}(1).
		\end{equation*}
	\end{theorem}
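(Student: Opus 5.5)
We follow the sphere-packing route, but in the label domain. The \emph{deletion} condition of Definition~\ref{def:burst_code} suffices. For distinct $\bm{x}_1,\bm{x}_2\in\mathcal{C}$ the balls $\mathcal{B}L_{\mathcal{S}}^{\text{del}}(\bm{x}_i,t)$ are disjoint. Every element of such a ball has length $n-t$ over $\Sigma_{11}$. Hence
\begin{equation*}
\sum_{\bm{x}\in\mathcal{C}} \bigl|\mathcal{B}L_{\mathcal{S}}^{\text{del}}(\bm{x},t)\bigr| \le \bigl|\mathcal{U}_{n-t}\bigr|,
\end{equation*}
where $\mathcal{U}_{n-t}$ is any set of length-$(n-t)$ label sequences containing all the balls.

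Bounding $|\mathcal{U}_{n-t}|$ by $11^{n-t}$ only gives the loose $\log_{11}n$ bound. So we must show that every sequence obtained by a burst deletion from a valid labeling is itself (close to) a valid labeling. Then $|\mathcal{U}_{n-t}| = \mathcal{O}(4^{n-t})$ up to constants depending on $t$.

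\textbf{Step 1: image size.} A burst deletion of $t$ consecutive labels from $L_{\mathcal{S}}(\bm{x})$ leaves a prefix $L_{\mathcal{S}}(\bm{x})_{[1:i-1]}$ and a suffix $L_{\mathcal{S}}(\bm{x})_{[i+t:n]}$. Each is the labeling of a DNA segment, except for the boundary entries.
\begin{itemize}
\item By Theorem~\ref{thm:reconstruction}-style counting, the number of label strings of length $m$ that are labelings of DNA strings is at most $4^{m+1}$.
\item The concatenation is then determined by $\bm{x}_{[1:i]}$, $\bm{x}_{[i+t:n]}$, the deletion position $i$, and $\mathcal{O}(1)$ boundary symbols.
\item A naive count of all such concatenations gives $n\cdot 4^{n-t}\cdot\mathcal{O}(1)$. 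That is too weak, since it cancels the factor $n$ we hope to gain.
\end{itemize}
We therefore refine the count. Call a concatenated string \emph{bad} if the junction is incompatible with a valid labeling. The aim is to show that the number of strings of length $n-t$ reachable from \emph{some} DNA word is $\mathcal{O}_t(4^{n-t})$. We do this by charging each reachable string to a DNA word of length $n-t$ together with $\mathcal{O}(1)$ junction data. The argument is the ``glue two labelings'' observation: a string of the form (labeling)(junction symbols)(labeling) is specified by $\le 4^{n-t}\cdot 11^{\mathcal{O}(1)}$ choices, \emph{once the junction position is fixed}.

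\textbf{Main obstacle: the junction position.} The number of junction positions is $n$, so a fixed-junction count loses the needed factor $n$. I expect this to be the hardest step. The first fix I would try is interleaving. Since $t\mid n$, write the label sequence as a $t\times(n/t)$ array, in the spirit of standard burst arguments; a burst of $t$ deletions removes exactly one entry from each row.

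Instead of bounding the universe, we then lower bound the ball sizes and restrict to typical codewords:
\begin{itemize}
\item Let $\mathcal{C}'\subseteq\mathcal{C}$ consist of the codewords whose labeling sequence has at least $cn$ distinct burst-deletion outcomes, for a suitable constant $c$.
\item Equivalently, we need many positions $i$ where deleting $L_{\mathcal{S}}(\bm{x})_{[i:i+t-1]}$ yields distinct results. Outcomes at $i$ and $i+1$ coincide exactly when $c_i=c_{i+t}$. So the ball size is at least the number of ``runs'' in the interleaved rows.
\item The counting on long runs enters here. Codewords whose labelings have fewer than $cn$ such runs must have DNA sequences with long periodic or constant stretches. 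A standard run-length counting shows there are at most $\epsilon\,4^n$ of them for small $c$.
\end{itemize}
Hence either $|\mathcal{C}|\le 2\epsilon 4^n$-type bounds handle the atypical part, or $|\mathcal{C}'|\ge |\mathcal{C}|/2$.

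\textbf{Concluding.} Combining the two parts,
\begin{equation*}
\tfrac{1}{2}|\mathcal{C}|\cdot cn \le \sum_{\bm{x}\in\mathcal{C}'}\bigl|\mathcal{B}L_{\mathcal{S}}^{\text{del}}(\bm{x},t)\bigr| \le \bigl|\mathcal{U}_{n-t}\bigr|.
\end{equation*}
It remains to show $|\mathcal{U}_{n-t}|=\mathcal{O}_t(4^{n})$, which resolves the obstacle. A reachable string determines $\bm{x}$ outside a window of $t+1$ bases, up to the $n$ choices of window. By Theorem~\ref{thm:reconstruction} applied to each side, it is the labeling of the DNA word obtained by deleting $t$ bases, except at one junction entry. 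The identity of that junction entry is recoverable from the string, because it is the unique position inconsistent with a valid labeling, or else the string is a genuine labeling. Then the junction position is not extra information, and $|\mathcal{U}_{n-t}|\le 4^{n-t+1}\cdot 11$ up to a polynomial-free constant. This gives $|\mathcal{C}|\le C\,4^n/n$, i.e.\ $r(\mathcal{C})\ge\log_4 n+\mathcal{O}(1)$.

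The delicate point I would verify carefully is exactly this claim: that a single bad junction entry determines its own location. Where several inconsistencies are possible, I would instead cap them by $\mathcal{O}(1)$ using the structure of $\mathcal{S}$ (each label fixes two consecutive bases).
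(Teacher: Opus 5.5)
\textbf{There is a genuine gap, at exactly the step you flagged.} Your conclusion needs a universe $\mathcal{U}_{n-t}$ of size $\mathcal{O}(4^{n-t})$ that contains the \emph{full} burst-deletion balls. In fact the set of all burst-deletion outcomes has size $\Theta(n\,4^{n-t})$. Your recoverability argument proves this rather than refuting it: if a string determines its junction position, then strings with junctions at different positions are distinct, so their counts add over the $\Theta(n)$ positions.

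Concretely, fix $i$ and take $\bm{x}$ with $c_{i-1}\neq 0$, $c_{i+t}\neq 0$ and $x_i\neq x_{i+t}$; this is a constant fraction of $\Sigma_4^n$.
\begin{itemize}
\item A nonzero label fixes both of its bases, so the adjacent pair $(c_{i-1},c_{i+t})$ in the outcome occurs in no labeling.
\item All other adjacent pairs come from genuine labelings and are consistent. So the outcome forces the junction to be at $i$, and then determines $\bm{x}_{[1:i]}$ and $\bm{x}_{[i+t:n]}$ up to a bounded number of choices.
\item This gives $\Omega(4^{n-t+1})$ outcomes per position, pairwise disjoint across positions.
\end{itemize}
Hence your naive count $n\cdot 4^{n-t}\cdot\mathcal{O}(1)$ is tight. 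Packing full balls of size $\Theta(n)$ into it only yields $|\mathcal{C}|=\mathcal{O}(4^{n-t})$, which has no $\log_4 n$ term. Your ball lower bound has the same defect: the positions with $c_k\neq c_{k+t}$ that you count are exactly those whose outcomes typically land in this large non-labeling region.

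\textbf{The missing idea is to discard those outcomes.} Intersect each ball with the valid image $\mathcal{R}_{\text{valid}}=L'_{\mathcal{S}}(\Sigma_4^{n-t})$, which has at most $4^{n-t}$ elements; disjointness is inherited by subsets. Then lower-bound only this valid part. That requires a DNA-domain condition, not a label-domain one.
\begin{itemize}
\item If $x_k=x_{k+t}$ with $k\ge 2$, deleting $\bm{x}_{[k:k+t-1]}$ gives a DNA word whose labeling is exactly the label-domain burst deletion at $k$. The reason is that the junction pair $(x_{k-1},x_{k+t})=(x_{k-1},x_k)$ is unchanged.
\item Choose $k=1+(s-1)t$ for each run of length $\ge 2$, not touching the ends, in the interleaved DNA row $(x_1,x_{1+t},x_{1+2t},\ldots)$.
\item Distinct runs then give distinct deleted DNA words with the same boundary bases $x_1,x_n$, hence distinct labelings by Theorem~\ref{thm:reconstruction}.
\end{itemize}
Finally, you must show that the words with fewer than $\delta n/t$ such runs form an \emph{exponentially} small fraction of $\Sigma_4^n$. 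You need this fraction to be $o(1/n)$. Your ``at most $\epsilon 4^n$'' only gives $|\mathcal{C}|\le 2\epsilon 4^n$, which is not enough.
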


    \begin{IEEEproof}
	By Definition~\ref{def:burst_code}, a burst $t$-deletion/insertion $\mathcal{A}$-labeling code must strictly satisfy the disjointness conditions for both error types. Consequently, the redundancy required to satisfy the deletion constraint alone inherently serves as a lower bound for the joint error model. Thus, it is sufficient to establish the fundamental limit by restricting our analysis exclusively to burst $t$-deletions.
	
	Let $\mathcal{C}\subseteq\Sigma_4^n$ be any burst $t$-deletion/insertion $\mathcal{A}$-labeling code. For $\bm{x}\in\mathcal{C}$, let $\bm{y}=L_{\mathcal{A}}(\bm{x})\in\Sigma_{11}^n$. A burst $t$-deletion reduces $\bm{y}$ to length $n-t$.
	
	\emph{Step 1: A reduced received space.}
	Let $L'_{\mathcal{A}}:\Sigma_4^{n-t}\to\Sigma_{11}^{n-t}$ denote the labeling map applied to length-$(n-t)$ DNA strands (using the same rule as $L_{\mathcal{A}}$), and let the \emph{valid received space} be its image:
	\begin{equation*}
		\mathcal{R}_{\text{valid}} \triangleq L'_{\mathcal{A}}\big(\Sigma_4^{n-t}\big).
	\end{equation*}
	Since $L'_{\mathcal{A}}$ is defined on $\Sigma_4^{n-t}$,
	\begin{equation}\label{eq:Rvalid_bound}
		|\mathcal{R}_{\text{valid}}| \le 4^{n-t}.
	\end{equation}
	Define the \emph{effective deletion ball} of $\bm{x}$ as
	\begin{equation*}
		\mathcal{B}_{\text{valid}}(\bm{x}) \triangleq \mathcal{B}L_{\mathcal{A}}^{\text{del}}(\bm{x}, t) \cap \mathcal{R}_{\text{valid}}.
	\end{equation*}
	By Definition~\ref{def:burst_code}, the balls $\mathcal{B}L_{\mathcal{A}}^{\text{del}}(\bm{x},t)$ are pairwise disjoint for distinct $\bm{x}\in\mathcal{C}$, hence so are their subsets $\mathcal{B}_{\text{valid}}(\bm{x})$. Combining with Eq.~\eqref{eq:Rvalid_bound}:
	\begin{equation} \label{eq:sphere_packing_valid}
		\sum_{\bm{x} \in \mathcal{C}} |\mathcal{B}_{\text{valid}}(\bm{x})| \le |\mathcal{R}_{\text{valid}}| \le 4^{n-t}.
	\end{equation}
	
	\emph{Step 2: A run-based lower bound on the effective ball.}
	Let $w \triangleq n/t$, and consider the first interleaved subsequence
	\begin{equation*}
		\bm{x}^{(1)} \triangleq (x_1, x_{1+t}, x_{1+2t}, \dots, x_{1+(w-1)t}) \in \Sigma_4^w.
	\end{equation*}
	Denote by $R_{\ge 2}(\bm{x}^{(1)})$ the number of runs of length at least $2$ in $\bm{x}^{(1)}$. We claim that
	\begin{equation}\label{eq:ball_lower}
		|\mathcal{B}_{\text{valid}}(\bm{x})| \ge R_{\ge 2}(\bm{x}^{(1)}) - 2.
	\end{equation}

	To establish Eq.~\eqref{eq:ball_lower}, we construct an injection from a structurally constrained set of runs into $\mathcal{B}_{\text{valid}}(\bm{x})$. 
	
	Let $\mathcal{R}$ be the collection of runs of length $\ge 2$ in $\bm{x}^{(1)}$. To ensure the boundary bases $x_1$ and $x_n$ are strictly preserved during deletion, we define $\mathcal{R}^* \subseteq \mathcal{R}$ as the subset obtained by excluding any run that contains the first index ($1$) or the last index ($w$) of $\bm{x}^{(1)}$. Because $\mathcal{R}$ consists of maximal blocks, at most two runs touch these extremities, yielding $|\mathcal{R}^*| \ge R_{\ge 2}(\bm{x}^{(1)}) - 2$.

	\emph{(i) Each internal run yields a valid deletion index.}
	Each run $\mathsf{r}\in\mathcal{R}^*$ starts at some index $s$ in $\bm{x}^{(1)}$. By our boundary exclusion, $2 \le s \le w-1$. The corresponding start index in the original DNA strand $\bm{x}$ is $k_{\mathsf{r}} \triangleq 1+(s-1)t$. The first two symbols of the run guarantee $x_{k_{\mathsf{r}}} = x_{k_{\mathsf{r}}+t}$. 
	
	Let $\bm{x}_{\text{del}}(k_{\mathsf{r}})$ and $\bm{y}_{\text{del}}(k_{\mathsf{r}})$ be the sequences obtained by deleting $t$ consecutive symbols starting at index $k_{\mathsf{r}}$ from $\bm{x}$ and $\bm{y}$, respectively. Because $x_{k_{\mathsf{r}}} = x_{k_{\mathsf{r}}+t}$, every dinucleotide defining a label in $\bm{x}_{\text{del}}(k_{\mathsf{r}})$ retains its original value, so $L'_{\mathcal{A}}\big(\bm{x}_{\text{del}}(k_{\mathsf{r}})\big) = \bm{y}_{\text{del}}(k_{\mathsf{r}})$. Thus, $\bm{y}_{\text{del}}(k_{\mathsf{r}}) \in \mathcal{B}_{\text{valid}}(\bm{x})$.

	\emph{(ii) Distinct runs yield distinct deleted DNA strands.}
	Let $\mathsf{r},\mathsf{r}'\in\mathcal{R}^*$ be distinct, and write $k\triangleq k_{\mathsf{r}}<k_{\mathsf{r}'}\triangleq k'$ (so $k'-k\ge t$). Because $\mathsf{r}$ and $\mathsf{r}'$ are distinct runs, there exists an integer $m \ge 2$ where the value changes:
	\begin{equation*}
		x_{1+(m-1)t} \ne x_{1+mt}, \qquad k < 1+(m-1)t < k'.
	\end{equation*}
	Comparing the two deletions:
	\begin{itemize}
		\item In $\bm{x}_{\text{del}}(k)$, the position $1+(m-1)t$ is shifted left by $t$ to index $1+(m-2)t$.
		\item In $\bm{x}_{\text{del}}(k')$, the position $1+(m-1)t$ lies to the left of the deleted block and remains at index $1+(m-1)t$.
	\end{itemize}
	Consequently, the coordinate $1+(m-1)t$ of $\bm{x}_{\text{del}}(k)$ equals $x_{1+mt}$, whereas the same coordinate of $\bm{x}_{\text{del}}(k')$ equals $x_{1+(m-1)t}\ne x_{1+mt}$. Hence $\bm{x}_{\text{del}}(k)\ne\bm{x}_{\text{del}}(k')$.

	\emph{(iii) Injectivity transfers to the label domain.}
	For any $\mathsf{r} \in \mathcal{R}^*$, the deleted block spans indices $[k_{\mathsf{r}}, k_{\mathsf{r}}+t-1]$. Since $s \ge 2$, the smallest deleted index is $k_{\mathsf{r}} \ge 1+t \ge 2$, leaving $x_1$ untouched. Since $s \le w-1$, the largest deleted index is $k_{\mathsf{r}}+t-1 \le (1+(w-2)t)+t-1 = n-t$, leaving $x_n$ untouched.
	
	Therefore, both $\bm{x}_{\text{del}}(k)$ and $\bm{x}_{\text{del}}(k')$ are length-$(n-t)$ DNA strands sharing the same undisturbed boundary bases $x_1$ and $x_n$. By Theorem~\ref{thm:reconstruction}, $L'_{\mathcal{A}}$ is injective on length-$(n-t)$ strands with fixed boundary bases, so
	\begin{equation*}
		\bm{y}_{\text{del}}(k) = L'_{\mathcal{A}}(\bm{x}_{\text{del}}(k)) \;\ne\; L'_{\mathcal{A}}(\bm{x}_{\text{del}}(k')) = \bm{y}_{\text{del}}(k').
	\end{equation*}

	Combining (i)--(iii), the map $\mathsf{r}\mapsto\bm{y}_{\text{del}}(k_{\mathsf{r}})$ is an injection from $\mathcal{R}^*$ into $\mathcal{B}_{\text{valid}}(\bm{x})$, proving Eq.~\eqref{eq:ball_lower}.
	
	\emph{Step 3: Few codewords have $R_{\ge 2}(\bm{x}^{(1)}) \le \rho$.}
	Set $\delta = 0.03$ and $\rho \triangleq \lfloor \delta w \rfloor$, and partition $\mathcal{C}$ into
	\begin{align*}
		\mathcal{C}_{\text{low}} &\triangleq \{\bm{x} \in \mathcal{C} : R_{\ge 2}(\bm{x}^{(1)}) \le \rho\}, \\
		\mathcal{C}_{\text{high}} &\triangleq \mathcal{C}\setminus\mathcal{C}_{\text{low}}.
	\end{align*}
	
	To bound $|\mathcal{C}_{\text{low}}|$, we first count sequences $\bm{u}\in\Sigma_4^w$ with $R_{\ge 2}(\bm{u})\le\rho$. Suppose such a $\bm{u}$ has exactly $K$ runs, of which exactly $j$ have length $\ge 2$, with $0\le j\le\min(\rho,K)$ and $1\le K \le w-j$ (since the total length is $w \ge (K-j) + 2j = K+j$). Then:
	\begin{itemize}
		\item the chain of $K$ run values has $4\cdot 3^{K-1}$ choices (adjacent runs carry distinct symbols);
		\item the $\binom{K}{j}$ ways to choose which runs are length $\ge 2$;
		\item the $\binom{w-K-1}{j-1}$ ways to distribute the $w-K$ extra units among the $j$ runs of length $\ge 2$ (which is equivalent to the number of positive integer solutions to the equation $x_1 + \dots + x_j = w-K$; defined as $0$ when $j=0$ and $w>K$).
	\end{itemize}
	Let $V_{\text{low}} \triangleq \big|\{\bm{u}\in\Sigma_4^w : R_{\ge 2}(\bm{u})\le\rho\}\big|$. Then
	\begin{equation*}
		V_{\text{low}} = \sum_{K=1}^{w} \sum_{j=0}^{\rho} 4 \cdot 3^{K-1} \binom{K}{j} \binom{w-K-1}{j-1}.
	\end{equation*}

	Setting $\delta = 0.03$ guarantees that $\rho \le 0.03w < w/2$. We bound the binomial coefficient $\binom{K}{j}$ under two conditions:
    \begin{enumerate}
		\item If $K \ge 2\rho$: Since $j \le \rho \le K/2$, due to the monotonicity of binomial coefficients in the first half of their domain, we have $\binom{K}{j} \le \binom{K}{\rho}$. Furthermore, since $K \le w$, it follows that $\binom{K}{\rho} \le \binom{w}{\rho}$.
		\item If $K < 2\rho$: Bounding by the central binomial coefficient yields $\binom{K}{j} \le \binom{K}{\lfloor K/2 \rfloor} \le \binom{2\rho}{\rho}$. Combined with the condition $2\rho \le w$, it holds that $\binom{2\rho}{\rho} \le \binom{w}{\rho}$.
	\end{enumerate}
	Therefore, $\binom{K}{j} \le \binom{w}{\rho}$ holds for all cases. By analogous reasoning, $\binom{w-K-1}{j-1} \le \binom{w}{\rho}$. Combined with $\sum_{K=1}^w 3^{K-1} < 3^w/2$, this gives
	\begin{equation*}
		V_{\text{low}} \le 4(\rho+1)\binom{w}{\rho}^2 \cdot \tfrac{3^w}{2} \le 4w\binom{w}{\rho}^2 \cdot 3^w.
	\end{equation*}
	
	By the binary-entropy bound~\cite[Theorem~11.1.3]{Cover2005}, $\binom{w}{\rho} \le 2^{w H(\delta)}$ with $H(\delta) = -\delta\log_2\delta-(1-\delta)\log_2(1-\delta)$. For $\delta=0.03$, $H(0.03)<0.195$, so
	\begin{equation*}
		V_{\text{low}} \le 4w \cdot \big(3\cdot 4^{H(0.03)}\big)^w \le 4w\cdot(3.93)^w.
	\end{equation*}
	Since $3.93 < 4$, $V_{\text{low}} = o(4^w/n^2)$. The other $n-w$ bases of $\bm{x}$ are unconstrained, so
	\begin{equation*}
		|\mathcal{C}_{\text{low}}| \le 4^{n-w}\cdot V_{\text{low}} = o\!\left(\frac{4^n}{n^2}\right).
	\end{equation*}
	
	\emph{Step 4: Sphere-packing for $\mathcal{C}_{\text{high}}$.}
	For every $\bm{x}\in\mathcal{C}_{\text{high}}$, $R_{\ge 2}(\bm{x}^{(1)}) > \rho \ge \delta w - 1$, so by Eq.~\eqref{eq:ball_lower},
	\begin{equation*}
		|\mathcal{B}_{\text{valid}}(\bm{x})| \ge \rho - 2 \ge \delta w - 3 \ge \frac{0.03 n}{t} - 3.
	\end{equation*}
	Substituting into Eq.~\eqref{eq:sphere_packing_valid}:
	\begin{equation*}
		|\mathcal{C}_{\text{high}}|\cdot\bigg(\frac{0.03 n}{t} - 3\bigg)\le\sum_{\bm{x} \in \mathcal{C}_{\text{high}}} |\mathcal{B}_{\text{valid}}(\bm{x})|  \le 4^{n-t},
	\end{equation*}
	hence $|\mathcal{C}_{\text{high}}| = \mathcal{O}(4^{n-t}/n)$.
	
	\emph{Step 5: Conclusion.}
	Combining Steps~3 and~4,
	\begin{equation*}
		|\mathcal{C}| = |\mathcal{C}_{\text{low}}| + |\mathcal{C}_{\text{high}}| = \mathcal{O}\!\left(\frac{4^n}{n}\right),
	\end{equation*}
	so $r(\mathcal{C}) = n - \log_4|\mathcal{C}| \ge \log_4 n + \mathcal{O}(1)$, completing the proof.
\end{IEEEproof}

    It is worth noting that while the sphere-packing bound in Theorem 4 is explicitly derived for the minimal length-two label set $\mathcal{S}$ (i.e., $l=2$), the fundamental lower bound of $\log_4 n + \mathcal{O}(1)$ remains robust against variations in the label length. Specifically, the combinatorial framework can be generalized to any fixed label length $l \ge 2$, entirely independent of the burst length $t$. For completeness, a rigorous and step-by-step proof of this extension is provided in Appendix \ref{sec:appendix_generalization}.
	
	\section{Generalized RLL Codes}
	\label{sec:generalized_rll}
	
	In this section, we introduce \emph{generalized RLL codes}, a DNA-domain constraint that deterministically enforces an ordinary run-length-limited (RLL) property in the resulting label sequence. This indirection is necessary because the encoder may only design $\bm{x}\in\Sigma_4^n$, not the labeling output $L_{\mathcal{S}}(\bm{x})\in\Sigma_{11}^n$.
	
	\subsection{Design Motivation and DNA Pair Mapping}
	
	A deletion or insertion within a long run of identical symbols is positionally ambiguous, which hurts row-level error localization in our matrix construction. To bound this ambiguity, certain label-domain subsequences must satisfy the classical $\ell$-RLL constraint: every run has length at most $\ell$.
	
	Since the encoder controls only $\bm{x}$, we need a constraint on $\bm{x}$ that \emph{implies} an $\ell$-RLL property on $L_{\mathcal{S}}(\bm{x})$. We achieve this through the \emph{generalized $\ell$-RLL constraint}, whose implication relation will be proved in Theorem~\ref{thm:gen_rll_implication}.
	
	Because the label at position $i$ is determined by the DNA dinucleotide $(x_i,x_{i+1})$, we group $\bm{x}$ into adjacent base pairs and map them to a $16$-ary alphabet, which becomes the natural alphabet on which our constraint will be stated.
	
	\begin{definition}[DNA Pair Mapping]\label{def:DNApair}
		A \emph{DNA pair} is an ordered pair $(a,b)\in\Sigma_4\times\Sigma_4$. The bijection
		\begin{equation*}
			\psi:\Sigma_4\times\Sigma_4\to\Sigma_{16},\qquad \psi(a,b) = 4a+b,
		\end{equation*}
		represents any even-length DNA sequence $\bm{x}=(x_1,\ldots,x_{2k})$ as a length-$k$ sequence $\bm{z}=(z_1,\ldots,z_k)\in\Sigma_{16}^k$ with $z_j = \psi(x_{2j-1},x_{2j})$.
	\end{definition}
	
	\subsection{Definition of the Generalized RLL Constraint}
	\label{subsec:gen_RLL_def_property}
	
	To carry the RLL idea over to sequences in $\Sigma_{16}$, we group $\Sigma_{16}$ according to the label each pair generates. Define $\phi_{\text{label}}:\Sigma_{16}\to\Sigma_{11}$ by
	\begin{equation*}
		\phi_{\text{label}}(z) \triangleq \begin{cases}
			k, & \text{if } \psi^{-1}(z) \text{ is the } k\text{-th label in } \mathcal{S}, \\
			0, & \text{if } \psi^{-1}(z) \notin \mathcal{S},
		\end{cases}
	\end{equation*}
	divide $\Sigma_{16}$ into equivalence classes of $\phi_{\text{label}}$. This yields exactly $11$ classes:
	
	\begin{enumerate}
		\item \textbf{Special class $\mathbb{C}_{spec}$.}
		The six DNA pairs $\{\mathrm{AA, AT, AG, CA, CC, CG}\}$ all map to label $0$, so
		\begin{equation*}
			\mathbb{C}_{spec} = \{0, 1, 3, 9, 10, 11\} \subset \Sigma_{16}.
		\end{equation*}
		\item \textbf{Singleton normal classes $\{\mathbb{C}_v\}_{v\in\mathcal{N}}$.}
		Let $\mathcal{N} \triangleq \Sigma_{16}\setminus\mathbb{C}_{spec} = \{2,4,5,6,7,8,12,13,14,15\}$. Each $v\in\mathcal{N}$ maps to a distinct non-zero label, so each forms its own class $\mathbb{C}_v = \{v\}$.
	\end{enumerate}

    Based on this classification, we define the generalized RLL
constraint. Unlike traditional RLL, which limits consecutive
identical symbols, generalized RLL limits consecutive symbols
from the same equivalence class.
	
	\begin{definition}[Generalized $\ell$-RLL Constraint]
		A sequence $\bm{z}\in\Sigma_{16}^n$ satisfies the \emph{generalized $\ell$-RLL constraint} if no substring of $\bm{z}$ of length $\ell+1$ is contained in a single equivalence class.
	\end{definition}
	
	In other words, no class may contribute more than $\ell$ consecutive symbols.
	
	\subsection{Properties of Generalized RLL Codes}
	
	The primary goal of imposing the generalized RLL constraint on the $\Sigma_{16}$ sequence is to control the properties of the resulting labeling sequence. The following theorem establishes this relationship.
	
	\begin{theorem}[Implication of Generalized RLL]
		\label{thm:gen_rll_implication}
		Let $\bm{z} \in \Sigma_{16}^n$ be a sequence satisfying the generalized $\ell$-RLL constraint. Let $\bm{y} \in \Sigma_{11}^n$ be the corresponding labeling sequence defined by $y_i = \phi_{\text{label}}(z_i)$. Then, $\bm{y}$ satisfies the traditional $\ell$-RLL constraint.
	\end{theorem}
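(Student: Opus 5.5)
The plan is a direct contrapositive argument. We show that any violation of the traditional $\ell$-RLL constraint in $\bm{y}$ lifts to a violation of the generalized $\ell$-RLL constraint in $\bm{z}$. The key fact is that the equivalence classes of $\Sigma_{16}$ were defined as exactly the fibres of $\phi_{\text{label}}$. For $u,v\in\Sigma_{16}$ we have $u,v$ in the same class if and only if $\phi_{\text{label}}(u)=\phi_{\text{label}}(v)$. I would first record this explicitly from the classification. The fibre $\phi_{\text{label}}^{-1}(0)=\mathbb{C}_{spec}$, because the six pairs $\mathrm{AA,AT,AG,CA,CC,CG}$ are precisely the pairs not in $\mathcal{S}$; here I note that $\mathrm{CA}$ should be checked against the list. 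For each nonzero label $k$, the fibre $\phi_{\text{label}}^{-1}(k)$ is the singleton $\{\psi(\bm{\alpha}_k)\}$, since $\psi$ is a bijection and the labels in $\mathcal{S}$ are distinct.

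Next, suppose for contradiction that $\bm{y}$ contains a run of length at least $\ell+1$. Then there exist an index $i$ and a symbol $c\in\Sigma_{11}$ with $y_i=y_{i+1}=\cdots=y_{i+\ell}=c$. By the fibre characterization, every $z_j$ with $j\in[i:i+\ell]$ lies in $\phi_{\text{label}}^{-1}(c)$, which is a single equivalence class. For $c=0$ this class is $\mathbb{C}_{spec}$, and for $c\neq0$ it is the singleton $\mathbb{C}_v$. Hence the length-$(\ell+1)$ substring $\bm{z}_{[i:i+\ell]}$ is contained in one class, contradicting the generalized $\ell$-RLL constraint. Therefore every run of $\bm{y}$ has length at most $\ell$.

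The only step needing care is the bookkeeping in the fibre characterization. One must confirm that the six listed pairs really are exactly $\Sigma_4^2\setminus\mathcal{S}$: there are 16 pairs and 10 labels, so six are excluded. In particular, $\mathrm{CA}$ appears in $\mathcal{S}$, so the listed special class must be reconciled with the set $\mathcal{S}$; presumably the intended excluded pair is $\mathrm{CT}$. The argument does not depend on which six pairs are excluded. It only uses that $\mathbb{C}_{spec}=\phi_{\text{label}}^{-1}(0)$ and that the classes are exactly the level sets of $\phi_{\text{label}}$. So I would phrase the proof abstractly in terms of level sets, making it robust to that detail.
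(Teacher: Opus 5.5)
Your argument is correct and is essentially the paper's proof. A run of $\ell+1$ equal labels $v$ in $\bm{y}$ forces $z_i,\ldots,z_{i+\ell}$ into the single fibre $\phi_{\text{label}}^{-1}(v)$, which is one equivalence class, and this contradicts the generalized $\ell$-RLL constraint. Your remark about $\mathrm{CA}$ is also right: the paper's numerical set $\mathbb{C}_{spec}=\{0,1,3,9,10,11\}$ contains $9=\psi(\mathrm{C},\mathrm{T})$ and not $8=\psi(\mathrm{C},\mathrm{A})$, so the textual list should read $\mathrm{CT}$, and, as you observe, the proof only uses that the classes are the level sets of $\phi_{\text{label}}$.
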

	
	\begin{IEEEproof}
		Suppose for the sake of contradiction that $\bm{y}$ violates the traditional $\ell$-RLL constraint. This signifies the existence of a substring $y_i, \dots, y_{i+\ell}$ of length $\ell+1$, where all elements are identical to a specific value $v \in \Sigma_{11}$.
		
		According to the definition of equivalence classes, this implies that the corresponding source symbols $z_i, \dots, z_{i+\ell}$ are all mapped to the exact same label value $v$. Consequently, they must all belong to the same equivalence class (specifically, $\mathbb{C}_{spec}$ if $v=0$, or a particular $\mathbb{C}_v$ if $v>0$). 
		
		This forms a run of length $\ell+1$ comprising symbols from the identical class, directly violating the premise that $\bm{z}$ inherently satisfies the generalized $\ell$-RLL constraint. Thus, we reach a contradiction, completing the proof.
	\end{IEEEproof}
	
	\subsection{Coding Capacity Analysis}\label{subsec:gen_RLL_capacity}
	
	To bound the capacity of generalized RLL codes, we count the number of valid sequences of length $n$ using two state variables:
	
	\begin{itemize}
		\item $S[n, \ell]$: number of generalized $\ell$-RLL sequences of length $n$ ending in some symbol of $\mathbb{C}_{spec}$.
		\item $N[n, \ell]$: number of generalized $\ell$-RLL sequences of length $n$ ending in some normal class $\mathbb{C}_v$.
	\end{itemize}
	
	By symmetry, the number of valid sequences ending in a fixed $\mathbb{C}_v$ is $N[n,\ell]/10$. These counts obey the following recurrence.
	
	\begin{lemma}[Recurrence relations]
		\label{lem:recur}
		For $n \ge 1$,
		\begin{align}
			S[n, \ell] &= \sum_{i=1}^{\min(n, \ell)} 6^i \cdot N[n-i, \ell], \label{eq:rec_S} \\
			N[n, \ell] &= \sum_{i=1}^{\min(n, \ell)} 10 \cdot \left( S[n-i, \ell] + \tfrac{9}{10}N[n-i, \ell] \right), \label{eq:rec_N}
		\end{align}
		with formal initial values $N[0,\ell] = 1$ and $S[0,\ell] = 1/10$. The fractional value $S[0,\ell]=1/10$ is not a count; it is an algebraic convenience that makes Eqs.~\eqref{eq:rec_S}--\eqref{eq:rec_N} hold uniformly for $n\ge 1$, and produces the same $S[n,\ell],N[n,\ell]$ values as the equivalent integer base $N[1,\ell]=10,\,S[1,\ell]=6$.
	\end{lemma}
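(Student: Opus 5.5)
We will prove the recurrence by decomposing each valid sequence according to its \emph{last maximal class-run}. Every generalized $\ell$-RLL sequence $\bm{z}\in\Sigma_{16}^n$ with $n\ge 1$ ends in a maximal block $z_{[n-i+1:n]}$ of length $i$ whose symbols all lie in one equivalence class. The constraint forces $1\le i\le \min(n,\ell)$. Removing this block leaves a prefix $z_{[1:n-i]}$ that is itself a generalized $\ell$-RLL sequence; if nonempty, it ends in a \emph{different} class. Conversely, appending to any valid prefix of length $n-i$ a block of $i\le\ell$ symbols from a class different from the prefix's last class gives a valid sequence. So the map (sequence) $\mapsto$ (prefix, final block) is a bijection onto admissible pairs. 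Summing over $i$ then gives both equations.

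\textbf{Deriving Eq.~\eqref{eq:rec_S}.} Suppose the final block lies in $\mathbb{C}_{spec}$, so there are $6^i$ choices for it. The prefix must end in a normal class, giving $N[n-i,\ell]$ choices when $n-i\ge 1$. When $i=n$ the prefix is empty and there is exactly one choice, which matches the formal value $N[0,\ell]=1$.

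\textbf{Deriving Eq.~\eqref{eq:rec_N}.} Suppose the final block lies in a fixed normal class $\mathbb{C}_v=\{v\}$, so the block is forced to be $v^i$.
\begin{itemize}
\item A prefix ending in $\mathbb{C}_{spec}$ is always admissible, contributing $S[n-i,\ell]$.
\item A prefix ending in a normal class $\mathbb{C}_{v'}$ is admissible exactly when $v'\ne v$. By the symmetry noted before the lemma, each normal class accounts for $N[n-i,\ell]/10$ sequences, so this contributes $\tfrac{9}{10}N[n-i,\ell]$.
\end{itemize}
Summing over the $10$ choices of $v$ gives the factor $10$. For the empty prefix ($i=n$) the true count is $1$ per choice of $v$. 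The formula gives $S[0,\ell]+\tfrac{9}{10}N[0,\ell]=\tfrac{1}{10}+\tfrac{9}{10}=1$, which is exactly why the fractional initial value $S[0,\ell]=1/10$ was chosen. The symmetry claim itself needs a short justification: any permutation of the ten normal symbols preserves the class structure and hence the constraint. It can also be folded into a joint induction on $n$.

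\textbf{Consistency with the integer base case.} At $n=1$ the formulas give $S[1,\ell]=6\cdot 1=6$ and $N[1,\ell]=10\bigl(\tfrac{1}{10}+\tfrac{9}{10}\bigr)=10$, matching direct counts. Hence the two initializations generate identical sequences for $n\ge 1$.

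The main obstacle is the careful handling of the $i=n$ (empty prefix) term. There the distinct-class requirement is vacuous, and the formal values must reproduce it exactly; I would verify this separately, as above, rather than rely on the generic argument.
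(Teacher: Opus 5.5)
Your proposal is correct and follows the same route as the paper: decompose by the maximal trailing class-run of length $i\le\min(n,\ell)$, count $6^i$ (resp.\ exactly one) choices for that block, and use the symmetry among the ten singleton normal classes to obtain the $\tfrac{9}{10}N[n-i,\ell]$ term. Two details in your write-up are left implicit in the paper's proof: the separate check that the empty-prefix term $i=n$ is reproduced by the formal values $N[0,\ell]=1$, $S[0,\ell]=1/10$, and the permutation argument justifying the symmetry.
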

	
	\begin{IEEEproof}
		\emph{Derivation of Eq.~\eqref{eq:rec_S}.}
		Decompose each valid sequence counted by $S[n,\ell]$ according to the length $i$ of its maximal trailing $\mathbb{C}_{spec}$-run; the constraint forces $1\le i\le\min(n,\ell)$. The trailing $i$ symbols are arbitrary in $\mathbb{C}_{spec}$, giving $6^i$ choices. The preceding symbol (if any) must lie in a normal class, so its prefix has $N[n-i,\ell]$ choices. Summing over $i$ yields Eq.~\eqref{eq:rec_S}.
		
		\emph{Derivation of Eq.~\eqref{eq:rec_N}.}
		Fix a specific normal class $\mathbb{C}_v$ and let $i$ be the length of the trailing $\mathbb{C}_v$-run. Since $|\mathbb{C}_v|=1$, this segment is uniquely determined. The preceding symbol must lie either in $\mathbb{C}_{spec}$ (contributing $S[n-i,\ell]$ prefixes) or in a normal class other than $\mathbb{C}_v$ (contributing $\frac{9}{10}N[n-i,\ell]$ prefixes by symmetry). Summing over $i$ and multiplying by the $10$ choices of $v\in\mathcal{N}$ yields Eq.~\eqref{eq:rec_N}.
	\end{IEEEproof}
	
	Let $\mathcal{C}(n, \ell)$ represent the set of all valid sequences of length $n$ satisfying the generalized $\ell$-RLL constraint. The total cardinality of this set, denoted as $|\mathcal{C}(n,\ell)|$, is therefore the sum of the sequences of the two terminal states:
	\begin{equation*}
		|\mathcal{C}(n, \ell)| = S[n, \ell] + N[n, \ell].
	\end{equation*}
	
	To evaluate the impact of the proposed generalized RLL constraint on encoding efficiency, we investigate its asymptotic code rate. The code rate is formally defined as:
	\begin{equation*}
		r_{\mathcal{C}(n, \ell)} \triangleq \frac{\log_{16}(|\mathcal{C}(n, \ell)|)}{n}.
	\end{equation*}
	The base-16 logarithm is used because each sequence symbol represents a DNA base pair (an element of $\Sigma_{16}$), which corresponds to $4^2=16$ possible combinations.
	
	The following theorem establishes the exact asymptotic code rate as the sequence length approaches infinity.
	
	\begin{theorem}\label{thm:genRLL_redundancy}
		For a given maximum run length $\ell$, the asymptotic code rate of the generalized $\ell$-RLL constraint is given by:
		\begin{equation*}
			\lim_{n \to \infty} r_{\mathcal{C}(n, \ell)} = \log_{16}\lambda,
		\end{equation*}
		where $\lambda$ is the largest positive real root of the following characteristic equation:
		\begin{equation*}
			10 A\left(\frac{1}{\lambda}\right) B\left(\frac{1}{\lambda}\right) + 9 B\left(\frac{1}{\lambda}\right) = 1.
		\end{equation*}
		Here, the auxiliary polynomials are defined as $A(x) = \sum_{i=1}^\ell 6^i x^i$ and $B(x) = \sum_{i=1}^\ell x^i$.
	\end{theorem}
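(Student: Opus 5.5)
The plan is to treat the recurrences of Lemma~\ref{lem:recur} as a linear system and read off the growth rate from generating functions. Define
\begin{equation*}
S(x)=\sum_{n\ge 0} S[n,\ell]x^n,\qquad N(x)=\sum_{n\ge 0}N[n,\ell]x^n .
\end{equation*}
Because $S[n-i,\ell]$ and $N[n-i,\ell]$ with $i>\ell$ never appear, the truncation $i\le\min(n,\ell)$ is simply a convolution with the polynomials $A(x)=\sum_{i=1}^\ell 6^ix^i$ and $B(x)=\sum_{i=1}^\ell x^i$. Eqs.~\eqref{eq:rec_S}--\eqref{eq:rec_N} therefore become
\begin{equation*}
S(x)-\tfrac{1}{10}=A(x)N(x),\qquad N(x)-1=B(x)\big(10S(x)+9N(x)\big),
\end{equation*}
where the constant terms are the formal initial values. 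Substituting the first identity into the second gives
\begin{equation*}
N(x)\big(1-10A(x)B(x)-9B(x)\big)=1+B(x),
\end{equation*}
so $N(x)$ and $S(x)$ are rational functions with common denominator $D(x)=1-10A(x)B(x)-9B(x)$. Hence $|\mathcal{C}(n,\ell)|=S[n,\ell]+N[n,\ell]$ is the coefficient sequence of $P(x)/D(x)$ for some polynomial $P$.

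Next I locate the dominant singularity. The coefficients of $10AB+9B$ are nonnegative, and the function is strictly increasing on $x>0$, starts at $0$, and tends to infinity. It therefore equals $1$ at a unique $x_0>0$, and by construction $x_0=1/\lambda$ for $\lambda$ as in Theorem~\ref{thm:genRLL_redundancy}. Pringsheim's theorem, or equivalently Perron--Frobenius applied to the companion matrix of the nonnegative recurrence, shows that $x_0$ is the smallest-modulus zero of $D$ and that it is simple, since $D'(x_0)<0$. I must also check that $x_0$ is not cancelled by the numerator. This holds because $1+B(x_0)>0$ and $N$ has positive coefficients: the numerator $1+B$ is positive at $x_0$, so the pole survives. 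Consequently $\limsup_n N[n,\ell]^{1/n}=1/x_0=\lambda$.

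Finally, I upgrade the limsup to a true limit. Aperiodicity follows because $B$ contains the term $x$, so $D$ has a nonzero linear coefficient. Other roots on the circle $|x|=x_0$ are then excluded by the strict triangle inequality $|10A(x)B(x)+9B(x)|<10A(x_0)B(x_0)+9B(x_0)$ for $|x|=x_0$, $x\ne x_0$. This strictness holds since the nonnegative series has coefficients at consecutive exponents $1$ and $2$ when $\ell\ge 2$; for $\ell=1$ I check directly. Partial fractions then give $|\mathcal{C}(n,\ell)|=c\lambda^n(1+o(1))$ with $c>0$. Taking $\log_{16}$ and dividing by $n$ yields the limit $\log_{16}\lambda$.

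The main obstacle is making the dominant-root argument clean: simplicity of the pole, the absence of other roots on the critical circle, and the guarantee that the numerator does not vanish at $x_0$. For the rate statement alone, a cheaper route suffices. The growth bounds $\lambda^n\ll |\mathcal{C}(n,\ell)|\ll n^{O(1)}\lambda^n$ already follow from $x_0$ being the radius of convergence together with the nonnegativity of the coefficients, so I would fall back on that if the aperiodicity details become messy.
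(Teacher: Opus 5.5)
Your proposal follows the paper's route: generating functions built from the recurrences of Lemma~\ref{lem:recur}, the same rational form with denominator $1-9B(x)-10A(x)B(x)$, and dominant-pole coefficient asymptotics. It also supplies the checks the paper only asserts when it invokes the rational-function asymptotics theorem: the positive root is unique, there are no other zeros of modulus $\le x_0$ (by the triangle inequality and the linear term $9x$), the pole is simple, and the numerator does not vanish at $x_0$. The one soft spot is your fallback remark, because a radius of convergence plus nonnegative coefficients gives only $\limsup_n |\mathcal{C}(n,\ell)|^{1/n}=\lambda$, not the lower bound $\lambda^n \ll |\mathcal{C}(n,\ell)|$; if you want that shortcut, use $|\mathcal{C}(n+m,\ell)|\le|\mathcal{C}(n,\ell)|\,|\mathcal{C}(m,\ell)|$ and Fekete's lemma to upgrade the limsup to a limit without any aperiodicity argument.
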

	
	\begin{IEEEproof}
		For brevity in the subsequent proof, we denote the state counts $S[n,\ell]$ and $N[n,\ell]$ simply as $S(n)$ and $N(n)$, respectively. We define the generating functions $G_S(x) = \sum_{n=0}^\infty S(n)x^n$ and $G_N(x) = \sum_{n=0}^\infty N(n)x^n$.
		
		First, we convert the recurrence relation Eq.~\eqref{eq:rec_S} into its generating function form:
		\begin{align*}
			A(x)G_N(x) &= \left(\sum_{i=1}^\ell 6^i x^i\right) \left(\sum_{n=0}^\infty N(n)x^n\right) \nonumber \\
			&= \sum_{n=0}^\infty \sum_{i=1}^\ell 6^i N(n) x^{n+i} \nonumber \\
			&\overset{m=n+i}{=} \sum_{m=1}^\infty \left( \sum_{i=1}^{\min(m, \ell)} 6^i N(m-i) \right) x^m \nonumber \\
			&\overset{\text{Eq.~\eqref{eq:rec_S}}}{=} \sum_{m=1}^\infty S(m) x^m \nonumber \\
			&= G_S(x) - S(0).
		\end{align*}
		Using the initial condition $S(0) = \frac{1}{10}$, we obtain:
		\begin{equation} \label{eq:gen_S}
			G_S(x) = A(x) G_N(x) + \frac{1}{10}.
		\end{equation}
		
		Similarly, for the recurrence relation Eq.~\eqref{eq:rec_N}, we have:
		\begin{align*}
			B(x)G_S(x) &= \sum_{m=1}^\infty \left( \sum_{i=1}^{\min(m, \ell)} S(m-i) \right) x^m, \\
			B(x)G_N(x) &= \sum_{m=1}^\infty \left( \sum_{i=1}^{\min(m, \ell)} N(m-i) \right) x^m.
		\end{align*}
		Combining these two equations and multiplying by their respective coefficients yields:
		\begin{align*}
			& 10 B(x) G_S(x) + 9 B(x) G_N(x) \nonumber \\
			&= \sum_{m=1}^\infty \sum_{i=1}^{\min(m, \ell)} \left[ 10 S(m-i) + 9 N(m-i) \right] x^m \nonumber \\
			&= \sum_{m=1}^\infty N(m) x^m \quad (\text{by Eq.~\eqref{eq:rec_N}}) \nonumber \\
			&= G_N(x) - N(0).
		\end{align*}
		Substituting the initial condition $N(0)=1$ and rearranging, we get:
		\begin{equation} \label{eq:gen_N_temp}
			10 B(x) G_S(x) + 9 B(x) G_N(x) = G_N(x) - 1.
		\end{equation}
		
		Substituting Eq.~\eqref{eq:gen_S} into Eq.~\eqref{eq:gen_N_temp} to eliminate $G_S(x)$ gives:
		\begin{equation*}
			10 B(x) \left[ A(x) G_N(x) + \frac{1}{10} \right] + 9 B(x) G_N(x) = G_N(x) - 1.
		\end{equation*}
		Rearranging this equation to solve for $G_N(x)$, we obtain:
		\begin{align*}
			G_N(x) = \frac{ B(x) + 1}{1 - B(x) (9 + 10 A(x))}.
		\end{align*}
		Subsequently, we can express $G_S(x)$ as:
		\begin{equation*}
			G_S(x) = \frac{A(x)B(x)+A(x)}{1 - 9 B(x) - 10 A(x) B(x)}+\frac{1}{10}.
		\end{equation*}
		Defining the generating function for the total number of sequences as $G_{\mathcal{C}}(x) \triangleq \sum_{n=0}^\infty |\mathcal{C}(n,\ell)|x^n$, we have:
		\begin{equation*}
			G_{\mathcal{C}}(x) \!=\! G_S(x) + G_N(x) \!=\! \frac{(1 + A(x))(1+B(x))}{1 - 9 B(x) - 10 A(x) B(x)}+\frac{1}{10}.
		\end{equation*}
		
		The coefficient of the $x^n$ term in $G_{\mathcal{C}}(x)$ represents the desired value $|\mathcal{C}(n,\ell)|$. To determine the asymptotic growth rate $\lambda$ of $|\mathcal{C}(n,\ell)|$ (i.e., $|\mathcal{C}(n,\ell)| \approx C \lambda^n$), we apply the Rational Function Coefficient Asymptotics theorem \cite[Theorem IV.9]{Algebra2009}:
		If $G(x) = \frac{P(x)}{Q(x)}$ is a rational function analytic at zero, and $\alpha$ is the root of $Q(x)$ with the smallest modulus (the dominant pole), then $[x^n]G(x) \sim C \cdot (1/\alpha)^n$, where $[x^n]G(x)$ denotes the coefficient of $x^n$ in $G(x)$.
		
		Therefore, the total number of sequences satisfies $|\mathcal{C}(n, \ell)|\sim C \cdot (1/\alpha)^n$, where $\alpha$ is the smallest positive root of the denominator $1-9 B(x) -10 A(x) B(x)$. Consequently, $r_{\mathcal{C}(n, \ell)} \sim \log_{16}\frac{1}{\alpha}$.
		By letting $\lambda = 1/\alpha$, we obtain $r_{\mathcal{C}(n, \ell)} \sim \log_{16}\lambda$, where $\lambda$ is the largest positive real root of the characteristic equation:
		\begin{equation*}
			1 = 9 B\left(\frac{1}{\lambda}\right) + 10 A\left(\frac{1}{\lambda}\right) B\left(\frac{1}{\lambda}\right).
		\end{equation*}
		This completes the proof.
	\end{IEEEproof}

	\section{Encoding and Decoding Scheme for Generalized RLL Codes}
	\label{sec:grll_coding_scheme}
	
	This section details the encoding and decoding procedures for the generalized RLL codes. We employ the enumerative coding technique \cite{2021RLL}, whose fundamental principle is to establish a strict bijection between the information space (a set of integer sequences) and the codeword space (the set of sequences satisfying the constraint).
	
	To realize this bijection, we design two core auxiliary algorithms: \emph{ranking} and \emph{unranking}. It is important to note that these algorithms serve solely as mathematical subroutines invoked by the encoder and decoder. Their relationship with the overall process is as follows:
	
	\begin{itemize}
		\item \textbf{Encoding Algorithm}: Converts the input information bitstream into an integer index, and subsequently calls the unranking algorithm to generate the corresponding valid codeword.
		\item \textbf{Decoding Algorithm}: Receives the codeword, calls the ranking algorithm to compute its lexicographical rank (integer index), and thereby recovers the original information bits.
	\end{itemize}

    To understand these two core subroutines intuitively, imagine that all valid sequences satisfying the generalized $\ell$-RLL constraint are sorted lexicographically in a dictionary. The \emph{ranking} algorithm acts as a lookup function: given a valid sequence, it calculates its exact integer index (its ``rank'') within this dictionary. Conversely, the \emph{unranking} algorithm performs the exact reverse: given an index, it reconstructs the specific valid sequence located at that position. Furthermore, although encoding practically precedes decoding in a communication pipeline, we deliberately present the ranking algorithm first in the following subsections. The reason for this ordering is that ranking straightforwardly introduces the lexicographical traversal logic and demonstrates how precomputed tables are used to count bypassed branches. Once the intuition of accumulating capacities to find a sequence's position is clear, the unranking process---which dynamically uses these identical branch capacities to construct a sequence from a target index---becomes a natural and intuitive extension.
	
	\subsection{Definitions and Notation}
	
	To eliminate ambiguity and unify our presentation, we define the following notation and auxiliary functions:
	
	\begin{itemize}
		\item \textbf{Sequence Decomposition}: 
		For a sequence $\bm{x}= (x_1, x_2, \ldots, x_n)\in \Sigma_{16}^n$, we define its \textit{suffix} as $\bm{x}_{suf} = (x_i, \ldots, x_n)$, where $i$ is the smallest index such that $x_i, \ldots, x_n$ all belong to the same equivalence class (i.e., they all belong to $\mathbb{C}_{spec}$ or a specific $\mathbb{C}_v$). The corresponding \textit{prefix} is defined as $\bm{x}_{pre} = (x_1, \ldots, x_{i-1})$.
		
		\item \textbf{Forbidden Constraint ($F$)}:
		We define the variable $F$ as the ``forbidden continuation type'' of the prefix. Its value domain is $\{\mathbb{C}_{spec}\} \cup \{\mathbb{C}_v \mid v \in \mathcal{N}\} \cup \{\emptyset\}$. This variable is utilized during the recursive process to ensure that the concatenated sequence strictly complies with the generalized RLL constraint. Specifically, $F = \emptyset$ signifies that no equivalence class is currently forbidden.
		
		\item \textbf{Special Class Mappings}:
		\begin{itemize}
			\item $\text{EncodeSpecial}(\bm{s})$: Maps a special class sequence $\bm{s}$ of length $len$ (composed entirely of elements from $\mathbb{C}_{spec}$) to an integer $val \in [0, 6^{len}-1]$. In practice, this is achieved by mapping the 6 elements $\{0, 1, 3, 9, 10, 11\}$ in $\mathbb{C}_{spec}$ to $\{0, \dots, 5\}$ in ascending order and then evaluating $\bm{s}$ as a base-6 integer.
			\item $\text{DecodeSpecial}(val, len)$: The inverse mapping of $\text{EncodeSpecial}$, which recovers the special class sequence $\bm{s}$ of length $len$ given the integer $val$.
		\end{itemize}
		
	\end{itemize}

    \begin{example}
        Consider a sequence $\bm{s} = (3, 10) \in \mathbb{C}_{spec}^2$ of length $len = 2$. By mapping the elements of $\mathbb{C}_{spec}$ to $\{0, \dots, 5\}$ in ascending order, the values $3$ and $10$ correspond to $2$ and $4$, respectively. To encode, we evaluate this mapped sequence as a base-6 integer: $\text{EncodeSpecial}(\bm{s}) = 2 \cdot 6^1 + 4 \cdot 6^0 = 16$. Conversely, for $\text{DecodeSpecial}(16, 2)$, we take the integer $16$, convert it back into a length-2 base-6 sequence $(2, 4)_6$, and map the digits back to their original values in $\mathbb{C}_{spec}$ to perfectly recover $(3, 10)$.
    \end{example}
	
	\subsection{Ranking Algorithm (for Decoding)}
	
	Before presenting the formal algorithmic steps, we briefly introduce the core intuition behind the ranking algorithm. Conceptually, enumerative coding organizes all valid generalized RLL sequences into a lexicographical tree. The rank of a sequence corresponds to its absolute index within this ordered tree.
	
	To compute this rank, the algorithm iteratively processes the target sequence by dividing it into prefixes and suffixes. At each step, it calculates the number of valid sequences in the branches that lexicographically precede the current suffix. By summing the total capacity of these bypassed branches—which are efficiently retrieved in $\mathcal{O}(1)$ time from the precomputed tables $S[n,\ell]$ and $N[n,\ell]$—the algorithm determines the exact global position of the sequence.
	
	The lexicographical traversal order for these branches is defined as follows:
	\begin{enumerate}
		\item Iterate through the suffix length $i$ (from $1$ to $\ell$);
		\item Traverse the suffix types, prioritizing the special class $\mathbb{C}_{spec}$, followed by the normal classes $\mathbb{C}_v$ in numerical order.
	\end{enumerate}
	
	Based on this traversal logic, the formal procedure for evaluating the rank of a sequence (and its corresponding inverse unranking process) is detailed in Algorithm \ref{alg:ranking}.
	
	\begin{algorithm}[htbp]
		\caption{Ranking Algorithm: $Rank(n, \ell, \bm{x}, F)$}
		\label{alg:ranking}
		\begin{algorithmic}[1]
			\State \textbf{Input:} Sequence length $n$, run-length limit $\ell$, sequence to be ranked $\bm{x}$, forbidden type $F$
			\State \textbf{Output:} Rank $M \in \mathbb{Z}^+$
			
			\If{$n=0$} \Return 1 \EndIf
			
			\State Parse the suffix $\bm{x}_{suf}$ of $\bm{x}$ to obtain its length $len$ and type $type$
			\State $accum \leftarrow 0$
			
			\For{$i \leftarrow 1$ to $\min(n, \ell)$}
			\State \textit{// Branch 1: Special Class ($\mathbb{C}_{spec}$)}
			\If{$F \neq \mathbb{C}_{spec}$}
			\If{$i = len$ \textbf{and} $type = \mathbb{C}_{spec}$}
			\State $R_{pre} \leftarrow Rank(n-i, \ell, \bm{x}_{pre}, \mathbb{C}_{spec})$
			\State $V_{suf} \leftarrow \text{EncodeSpecial}(\bm{x}_{suf})$
			\State \Return $accum + (R_{pre} - 1) \times 6^i + V_{suf} + 1$
			\EndIf
			\State $count \leftarrow 6^i \times N[n-i, \ell]$
			\State $accum \leftarrow accum + count$
			\EndIf
			
			\State \textit{// Branch 2: Normal Classes (Traverse all $v \in \mathcal{N}$)}
			\For{$v \in \mathcal{N}$}
			\If{$F \neq \mathbb{C}_v$}
			\If{$i = len$ \textbf{and} $type = \mathbb{C}_v$}
			\State $R_{pre} \leftarrow Rank(n-i, \ell, \bm{x}_{pre}, \mathbb{C}_v)$
			\State \Return $accum + R_{pre}$
			\EndIf
			\State $count \leftarrow S[n-i, \ell] + \frac{9}{10} N[n-i, \ell]$
			\State $accum \leftarrow accum + count$
			\EndIf
			\EndFor
			\EndFor
		\end{algorithmic}
	\end{algorithm}
	
	To further clarify the execution flow and the recursive accumulation of the rank, we provide a step-by-step numerical illustration below.
	
	\begin{example}
		Assume $n=4, \ell=2$. Objective: Calculate the Rank of the sequence $\bm{x} = [4, 9, 4, 11]$, i.e., $Rank(4, 2, [4, 9, 4, 11], \emptyset)$.
		Assume the following precomputed counting tables (calculated via the recurrence relations):
		$S[0,2]=0.1, N[0,2]=1$;
		$S[1,2]=6, N[1,2]=10$;
		$S[2,2]=96, N[2,2]=160$.
		
		\textbf{Execution Trace}:
		
		1. \textbf{Top-level recursion}: $Rank(4, 2, \bm{x}, \emptyset)$
		Sequence decomposition: Prefix $\bm{p}_1=[4,9,4]$, Suffix $\bm{s}_1=[11]$.
		Suffix type is $\mathbb{C}_{spec}$, length $len=1$.
		\begin{itemize}
			\item Attempt $i=1$, Special class: Match! ($F=\emptyset \neq \mathbb{C}_{spec}$).
			\item Offset $V_{suf} = \text{EncodeSpecial}([11]) = 5$ (11 is the 6-th element in $\mathbb{C}_{spec}$, index 5).
			\item Recursively compute prefix rank $R_{pre} = Rank(3, 2, [4,9,4], \mathbb{C}_{spec})$.
			\item Contribution: $(R_{pre} - 1) \times 6^1 + 5 + 1$.
		\end{itemize}
		
		2. \textbf{Subproblem 1}: $Rank(3, 2, [4,9,4], \mathbb{C}_{spec})$
		Sequence decomposition: Prefix $\bm{p}_2=[4,9]$, Suffix $\bm{s}_2=[4]$.
		Suffix type is $\mathbb{C}_4$ (Normal), length $len=1$. Forbidden type $F=\mathbb{C}_{spec}$.
		\begin{itemize}
			\item Attempt $i=1$, Special class: $F=\mathbb{C}_{spec}$, Skip.
			\item Attempt $i=1$, Normal class $v=2$: Mismatch.
			\begin{itemize}
				\item Accumulate count: $count = S[2,2] + \frac{9}{10}N[2,2] = 96 + 0.9 \times 160 = 240$.
				\item $accum \leftarrow 240$.
			\end{itemize}
			\item Attempt $i=1$, Normal class $v=4$: Match!
			\begin{itemize}
				\item Recursively compute prefix $R_{pre2} = Rank(2, 2, [4,9], \mathbb{C}_4)$.
				\item Contribution: $240 + R_{pre2}$.
			\end{itemize}
		\end{itemize}
		
		3. \textbf{Subproblem 2}: $Rank(2, 2, [4,9], \mathbb{C}_4)$
		Sequence decomposition: Prefix $\bm{p}_3=[4]$, Suffix $\bm{s}_3=[9]$.
		Suffix type is $\mathbb{C}_{spec}$, length $len=1$. Forbidden type $F=\mathbb{C}_4$.
		\begin{itemize}
			\item Attempt $i=1$, Special class: Match!
			\item $V_{suf} = \text{EncodeSpecial}([9]) = 3$ (9 is the fourth element in $\mathbb{C}_{spec}$).
			\item Recursively compute $R_{pre3} = Rank(1, 2, [4], \mathbb{C}_{spec})$.
			\item Contribution: $0 + (R_{pre3} - 1) \times 6 + 3 + 1$.
		\end{itemize}
		
		4. \textbf{Subproblem 3}: $Rank(1, 2, [4], \mathbb{C}_{spec})$
		Suffix type is $\mathbb{C}_4$. Forbidden type $F=\mathbb{C}_{spec}$.
		\begin{itemize}
			\item Attempt $i=1$, Special class: Skip.
			\item Attempt $i=1$, Normal class $v=2$: Mismatch.
			\begin{itemize}
				\item $count = S[0] + 0.9 N[0] = 0.1 + 0.9 = 1$.
				\item $accum \leftarrow 1$.
			\end{itemize}
			\item Attempt $i=1$, Normal class $v=4$: Match!
			\begin{itemize}
				\item Contribution: $1 + Rank(0) = 1 + 1 = 2$.
			\end{itemize}
		\end{itemize}
		
		\textbf{Backtracking Computation}:
		\begin{itemize}
			\item $R_{pre3} = 2$.
			\item $R_{pre2} = (2-1) \times 6 + 4 = 10$.
			\item $R_{pre} = 240 + 10 = 250$.
			\item $Rank(4, 2, \bm{x}, \emptyset) = (250-1) \times 6 + 6 = 1494 + 6 = 1500$.
		\end{itemize}
		Therefore, the Rank of the sequence $[4, 9, 4, 11]$ is 1500.
	\end{example}
	
	\subsection{Unranking Algorithm (for Encoding)}
	
	We now describe the inverse of ranking, namely the \emph{unranking} algorithm. Unranking is the encoder's core subroutine: it reconstructs the generalized RLL sequence $\bm{x}$ from a given integer rank $M$.
	
	To achieve this, the algorithm explores all valid sequence branches following the exact same lexicographical traversal order established during the ranking phase (prioritizing suffix length $i$, followed by suffix types). Rather than accumulating counts to find a position, the unranking process iteratively compares the target rank $M$ against the precomputed sequence capacities of the current branch. If $M$ exceeds a branch's capacity, the algorithm bypasses that branch and updates the accumulated count. Once the correct branch containing $M$ is identified, the algorithm deduces the localized rank and recursively resolves the prefix structure.
	
	The complete recursive procedure for this reconstruction is formalized in Algorithm \ref{alg:unranking}, where the operator $\circ$ denotes sequence concatenation, and the notation $(v)^i$ within the algorithm represents a sequence consisting of the symbol $v$ repeated $i$ times.
	
	\begin{algorithm}[htbp]
		\caption{Unranking Algorithm: $Unrank(n, \ell, M, F)$}
		\label{alg:unranking}
		\begin{algorithmic}[1]
			\State \textbf{Input:} Length $n$, run-length limit $\ell$, rank $M$, forbidden type $F$
			\State \textbf{Output:} Sequence $\bm{x}$
			
			\If{$n=0$} \Return $\emptyset$ \EndIf
			\State $accum \leftarrow 0$
			
			\For{$i \leftarrow 1$ to $\ell$}
			\State \textit{// Attempt Special Class}
			\If{$F \neq \mathbb{C}_{spec}$}
			\State $W_{suf} \leftarrow 6^i$
			\State $W_{pre} \leftarrow N[n-i, \ell]$
			\State $count \leftarrow W_{suf} \times W_{pre}$
			
			\If{$M \le accum + count$}
			\State $M_{local} \leftarrow M - accum$
			\State $R_{pre} \leftarrow \lceil M_{local} / W_{suf} \rceil$
			\State $V_{suf} \leftarrow (M_{local} - 1) \pmod{W_{suf}}$
			
			\State $\bm{x}_{pre} \leftarrow Unrank(n-i, \ell, R_{pre}, \mathbb{C}_{spec})$
			\State $\bm{x}_{suf} \leftarrow \text{DecodeSpecial}(V_{suf}, i)$
			\State \Return $\bm{x}_{pre} \circ \bm{x}_{suf}$
			\EndIf
			\State $accum \leftarrow accum + count$
			\EndIf
			
			\State \textit{// Attempt Normal Classes}
			\For{$v \in \mathcal{N}$}
			\If{$F \neq \mathbb{C}_v$}
			\State $count \leftarrow S[n-i, \ell] + \frac{9}{10} N[n-i, \ell]$
			
			\If{$M \le accum + count$}
			\State $R_{pre} \leftarrow M - accum$
			\State $\bm{x}_{pre} \leftarrow Unrank(n-i, \ell, R_{pre}, \mathbb{C}_v)$
			\State \Return $\bm{x}_{pre} \circ (v)^i$
			\EndIf
			\State $accum \leftarrow accum + count$
			\EndIf
			\EndFor
			\EndFor
		\end{algorithmic}
	\end{algorithm}
	
	\subsection{Overall Encoding and Decoding Procedures} \label{subsec:overall_process}
	
	Leveraging the ranking and unranking algorithms, the complete encoding and decoding procedures for the generalized RLL codes are formulated as follows:
	
	\begin{itemize}
		\item \textbf{Encoding Procedure}: 
		
		\textbf{Input}: Code length $n$ and run-length limit $\ell$. Calculate the information capacity $m = \lfloor \log_{16} |\mathcal{C}(n,\ell)| \rfloor$. Let the information symbols be $\bm{x} = (x_1, \ldots, x_m) \in \Sigma_{16}^m$ (where each $x_i$ corresponds to a DNA pair).
		
		\textbf{Steps}:
		\begin{enumerate}
			\item Convert the information sequence $\bm{x}$ into an integer rank $M$. Since the unranking algorithm outputs a rank ranging from $1$ to $|\mathcal{C}|$, we treat $\bm{x}$ as a base-16 integer and add 1 to its value:
			\begin{equation*}
				M = 1 + \sum_{k=1}^{m} x_k \cdot 16^{m-k}
			\end{equation*}
			\item Invoke the unranking algorithm to generate the valid codeword:
			\begin{equation*}
				\bm{c} = Unrank(n, \ell, M, \emptyset)
			\end{equation*}
		\end{enumerate}
		\textbf{Output}: A codeword $\bm{c} \in \Sigma_{16}^n$ strictly satisfying the generalized RLL constraint.
		
		\item \textbf{Decoding Procedure}:
		
		\textbf{Input}: Code length $n$, run-length limit $\ell$, and the retrieved codeword $\bm{c} \in \mathcal{C}(n,\ell)$.
		
		\textbf{Steps}:
		\begin{enumerate}
			\item Invoke the ranking algorithm to calculate the rank $M$:
			\begin{equation*}
				M = Rank(n, \ell, \bm{c}, \emptyset)
			\end{equation*}
			\item Subtract 1 from $M$ and convert the result back into its $m$-digit base-16 representation, recovering the original information sequence $\bm{x}$.
		\end{enumerate}
		\textbf{Output}: The original information sequence $\bm{x}$.
	\end{itemize}
	
	\subsection{Complexity Analysis}\label{subsec:gen_RLL_complexity}
	
	By design, the encoded codewords strictly satisfy the generalized $\ell$-RLL constraint, and the decoding procedure losslessly recovers the original information. The algorithmic complexity is evaluated as follows:
	
	\begin{itemize}
		\item \textbf{Space Complexity}: The primary overhead stems from storing the precomputed counting tables $S[i, \ell]$ and $N[i, \ell]$ for $0 \le i \le n$. These tables contain $\mathcal{O}(n)$ entries. Because the values in the tables scale exponentially, the number of bits required to store each large integer grows linearly with $n$ (i.e., $\mathcal{O}(n)$ bits per entry). Consequently, the total space required for the counting tables is $\mathcal{O}(n^2)$. Furthermore, the recursion stack depth is bounded by $\mathcal{O}(n)$. Thus, the overall space complexity is strictly bounded by $\mathcal{O}(n^2)$.
		
		\item \textbf{Time Complexity}: Both the ranking and unranking operations inherently involve at most $n$ levels of recursion. During each recursive step, the algorithm performs large integer addition, multiplication, or division operations. Given that the precision of these integers is $\mathcal{O}(n)$ bits, each arithmetic operation takes at most $\mathcal{O}(n)$ time. Therefore, the total time complexity evaluates to $n \times \mathcal{O}(n) = \mathcal{O}(n^2)$.
	\end{itemize}
	
	\section{Construction of Burst $t$-Deletion/Insertion $\mathcal{A}$-Labeling Codes}
	\label{sec:construction}
	
	We now present the explicit construction. The key idea is to interleave the codeword into a $t\times(n/t)$ matrix so that a burst of $t$ consecutive label-domain indels becomes \emph{one} single-row indel in each row. Cascaded VT/SVT decoding then corrects all rows, provided that the burst can be coarsely localized; the latter is achieved by a label-domain $L$-RLL constraint enforced through the generalized RLL constraint of Section~\ref{sec:generalized_rll} together with a dedicated separator pattern between the information and parity regions.
	
	\subsection{Matrix Mapping and Error Model}

    Throughout this section we assume $t \mid n$ and $n \ge 7t + 3$ for simplicity of presentation.
    
    \textbf{\textit{Remark:}} If the target code length $n$ is not a multiple of $t$, one can logically pad the DNA sequence with up to $t-1$ predefined dummy symbols to reach an effective length of $n' = \lceil n/t \rceil t$. The encoding and decoding procedures then operate on this padded sequence. Because the burst length $t$ is a constant independent of $n$, this padding operation introduces an additional redundancy of at most $t-1$. This extra overhead is strictly absorbed into the $\mathcal{O}(1)$ term, thereby fully preserving the asymptotic optimality of the proposed construction for any arbitrary $n$.
    
    The following column-major interleaver is the centrepiece of the construction.
	
	\begin{definition}[Matrix Mapping]
		\label{def:matrix_mapping}
		Define the bijection $\mathcal{M}_{t}: \Sigma_q^n \to \Sigma_q^{t \times \frac{n}{t}}$ by $\bm{X}_{i,j} = x_{(j-1)t + i}$ for $1\le i\le t$ and $1\le j\le n/t$. Equivalently,
		\begin{equation*}
			\bm{X}=\begin{bmatrix}
				x_1 & x_{t+1} & \cdots &x_{n-t+1}\\
				x_2 & x_{t+2} & \cdots &x_{n-t+2}\\
				\vdots & \vdots & \ddots & \vdots\\
				x_{t} & x_{2t} &\cdots & x_{n}
			\end{bmatrix}.
		\end{equation*}
	\end{definition}
	
	For a codeword $\bm{x}\in\Sigma_4^n$, we write $\bm{Y} \triangleq \mathcal{M}_t(L_{\mathcal{A}}(\bm{x}))\in\Sigma_{11}^{t\times n/t}$ and define the composite mapping
	\begin{equation*}
		\Phi(\bm{X}) \triangleq \mathcal{M}_t\big( L_{\mathcal{A}}\big(\mathcal{M}_t^{-1}(\bm{X})\big)\big),
	\end{equation*}
	so that $\bm{Y}=\Phi(\bm{X})$. Note that $\Phi$ is not surjective, since $L_{\mathcal{A}}$ is not.
	
	\paragraph*{Error model in the matrix domain}
	A burst of $t$ consecutive deletions (resp.\ insertions) on $L_{\mathcal{A}}(\bm{x})$ induces a structured error pattern in $\bm{Y}$:
	\begin{itemize}
		\item Each row loses (resp.\ gains) exactly one entry.
		\item The column indices of the affected entries in rows $i$ and $i+1$ differ by at most $1$.
	\end{itemize}
	This is the key reduction underlying our scheme: a single $t$-burst in $1$D becomes $t$ single-row indels in $2$D, whose locations are highly correlated across rows.
	
	\subsection{Explicit Construction}
	
	To correct the matrix-domain errors described above, we design a target label matrix set $\mathcal{Y}$ such that every $\bm{Y}\in\mathcal{Y}$ is self-localizing and self-checking. We first fix the dimensions that govern the codeword layout, then state the algebraic constraints.

    \subsubsection{Parameter Configurations}\label{def:mdef}
	
	The construction begins by determining the optimal boundary between the information and redundancy regions. Specifically, to guarantee that the parity region provides sufficient storage capacity for all required checksum parameters, we define the number of columns in the information part, denoted as $m$, as the maximum positive integer satisfying the following inequality:
	\begin{equation} \label{eq:m_constraint}
		\left\lceil \frac{r_d - (t-1)}{t} \right\rceil + m + 3 \le \frac{n}{t},
	\end{equation}
	where the required number of parity symbols $r_d$ is defined as:
	\begin{align*}
		r_d \triangleq 2 + \Biggl\lceil  t \cdot \log_4 22 - \log_4 2 + \log_4 m 
		+ (t-1) \cdot \log_4(P+1) \Biggr\rceil,
	\end{align*}
	and $P = \lceil \log_{8/3} m \rceil + 2$. It is straightforward to verify that a valid solution for $m$ is guaranteed to exist when $n \ge 7t + 3$.
	
	\subsubsection{Constraint Set $\mathcal{Y}$}
	
	With these dimensions fixed, we now define the matrix set $\mathcal{Y}\subset\Sigma_{11}^{t\times n/t}$. As illustrated in Fig.~\ref{fig:1}, every $\bm{Y}\in\mathcal{Y}$ is logically partitioned into three regions:
	
	\begin{figure}[htbp]
		\centering
		\includegraphics[width=0.8\linewidth]{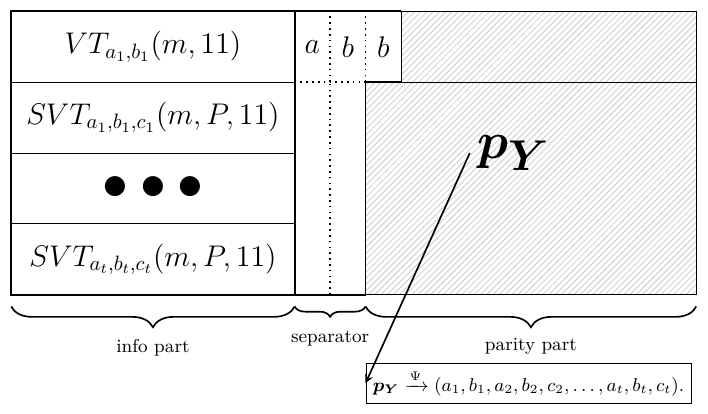}
		
		\caption{Schematic diagram of the structural requirements for the matrix set $\mathcal{Y}$. Note that 'a' and 'b' in the separator merely denote an inequality/equality relationship ($a \neq b$), rather than specific symbol values.}
		\label{fig:1}
	\end{figure}
	
	\begin{itemize}
		\item \textbf{Information Part (Info Part)}: $\bm{Y}_{[1:t], [1:m]}$, dedicated to carrying the information symbols.
		\item \textbf{Separator Part}: $\bm{Y}_{[1:t], [m+1:m+2]} \cup \{\bm{Y}_{1, m+3}\}$, serving as a robust barrier to distinguish whether a deletion or insertion error has occurred within the information part or the parity part.
		\item \textbf{Parity Part}: $\bm{Y}_{[2:t], m+3} \cup \bm{Y}_{[1:t], [m+4:n/t]}$, utilized for storing the redundancy required for the error correction algorithms.
	\end{itemize}
	
	Building upon this tripartite structure, we first serialize the parity part. Let $\bm{p_Y}$ denote the one-dimensional vector obtained by reading the parity region in column-major order:
	\begin{equation*}
		\bm{p_Y} = \left( \bm{Y}_{2,m+3}, \ldots, \bm{Y}_{t,m+3}, \bm{Y}_{1,m+4}, \ldots, \bm{Y}_{t,n/t} \right).
	\end{equation*}
	This parity vector is designed to store the target checksum parameters for the information rows. We can now rigorously define the structural constraints of $\mathcal{Y}$.
	
	\begin{definition}[Target Label Matrix Set $\mathcal{Y}$] \label{def:Y_set}
		Let $L = \lceil \log_{8/3} m \rceil$. The target label matrix set $\mathcal{Y} \subset \Sigma_{11}^{t \times \frac{n}{t}}$ comprises all matrices $\bm{Y}$ that simultaneously satisfy the following three conditions:
		\begin{enumerate}
			\item The first information row $\bm{Y}_{1,[1:m]}$ is an $L$-RLL sequence. 
			
			\item The elements in the separator part satisfy the relation:
			\begin{equation*}
				\bm{Y}_{1, m+1} \neq \bm{Y}_{1, m+2} = \bm{Y}_{1, m+3}.
			\end{equation*}
			
			\item There exists a deterministic injection $\Psi$ that decodes the parity vector $\bm{p_Y}$ into a sequence of target checksum parameters:
			\begin{equation*}
				\Psi(\bm{p_Y}) = (a_1, b_1, a_2, b_2, c_2, \ldots, a_t, b_t, c_t),
			\end{equation*}
			where $(a_1, b_1) \in \mathbb{Z}_m \times \mathbb{Z}_{11}$, and $(a_i, b_i, c_i) \in \mathbb{Z}_{P+1} \times \mathbb{Z}_{11} \times \mathbb{Z}_2$ for $2 \le i \le t$. Given these parameters, the information rows must satisfy:
			\begin{align*}
				\bm{Y}_{1,[1:m]} &\in \mathrm{VT}_{a_1, b_1}(m, 11), \\
				\bm{Y}_{i,[1:m]} &\in \mathrm{SVT}_{a_i, b_i, c_i}(m, P, 11), \quad \forall i \in [2, t].
			\end{align*}
		\end{enumerate}
	\end{definition}
	
	Based on the precisely defined constraint set $\mathcal{Y}$, we can now formally establish the construction of our code.
	
	\begin{construction}
		Let $n \ge 7t+3$ and $t \mid n$. The code $\mathcal{C}_{full} \subseteq \Sigma_4^n$ is defined as follows:
		\begin{equation*}
			\mathcal{C}_{full} = \left\{ \bm{x} \in \Sigma_4^n \mid \bm{Y}=\mathcal{M}_t(L_{\mathcal{A}}(\bm{x})) \in \mathcal{Y} \right\}.
		\end{equation*}
		In formal terms, $\mathcal{C}_{full}$ is the exact pre-image of the matrix set $\mathcal{Y}$ under the composite sequence mapping $\mathcal{M}_t(L_{\mathcal{A}}(\cdot))$.
	\end{construction}
	
	Section~\ref{sec:enc} describes an explicit encoder that maps arbitrary information symbols into a codeword of $\mathcal{C}_{full}$, and Section~\ref{sec:dec} shows that $\mathcal{C}_{full}$ is indeed a burst $t$-deletion/insertion $\mathcal{A}$-labeling code by giving a decoder that uniquely recovers the information from any single burst of length $t$ in the label domain.
	
	\section{Encoding Algorithm} \label{sec:enc}
	
	The encoder operates in two phases. The first phase places the information symbols into the Information Part of $\bm{X}$ so that the first row of the induced label matrix $\bm{Y}$ satisfies the $L$-RLL constraint. The second phase fills the Separator Part and the Parity Part of $\bm{X}$. We use the same Information/Separator/Parity partition for $\bm{X}$ and $\bm{Y}$.
	
	\subsection{Encoding the Information Part} \label{subsec:enc_info}
	
	Given the code length $n$ and the burst length $t$, the number of information columns $m$ is determined by Eq.~\eqref{eq:m_constraint}. First, we determine the maximum number of quaternary information symbols, $k$, that can be encoded.
	
	\subsubsection{Parameter Calculation}
	
	Recall that the required RLL constraint parameter is defined as $L = \lceil \log_{8/3} m \rceil$. To construct the information part, we use a generalized RLL code of length $m-2$ with a tighter constraint parameter of $L-1$.
	
	Let $k_{pre}$ be the maximum integer satisfying:
	\begin{equation}\label{eq:kpre_def}
		16^{k_{pre}} \le |\mathcal{C}(m-2, L-1)|,
	\end{equation}
	where the capacity $|\mathcal{C}(m-2, L-1)|$ is computed using the recurrence relations in Section \ref{subsec:gen_RLL_capacity}.
	
	The total number of quaternary information symbols is given by:
	\begin{equation}\label{eq:k_def}
		k \triangleq 2 k_{pre} + (t-2)m + 2.
	\end{equation}
	
	\subsubsection{Encoding Procedure}
	Given a sequence of $k$ information symbols $\bm{f} = (f_1, f_2, \ldots, f_k) \in \Sigma_4^k$, we utilize Algorithm~\ref{alg:encode_info} to populate the Info Part of the matrix $\bm{X}_{[1:t], [1:m]}$. The validity of Step~12 is guaranteed by Eq.~\eqref{eq:kpre_def}. A schematic diagram illustrating this mapping process is provided in Fig.~\ref{fig:enc}.
	
	\begin{figure}[htbp]
		\centering
		\includegraphics[width=1.0\linewidth]{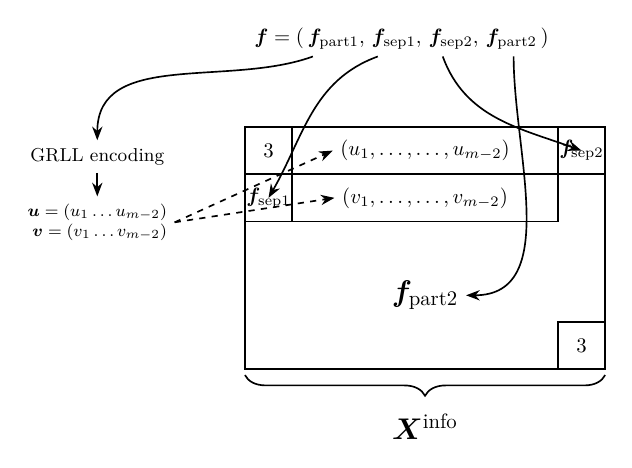}
		\caption{Schematic diagram illustrating the encoding algorithm for the Information Part of matrix $\bm{X}$.}
		\label{fig:enc}
	\end{figure}
	
	\begin{algorithm}[htbp]
		\caption{Encoding Algorithm for Info Part}
		\label{alg:encode_info}
		\begin{algorithmic}[1]
			\State \textbf{Input:} Code parameters $n, t, m$; Capacity parameters $k_{pre}, k$; Information sequence $\bm{f} \in \Sigma_4^k$
			\State \textbf{Output:} The encoded information submatrix $\bm{X}_{[1:t], [1:m]}$
			
			\State \textbf{Initialization}: Initialize all elements of $\bm{X}_{[1:t], [1:m]}$ to $0$.
			\State $\bm{X}_{1,1} \leftarrow 3$, $\bm{X}_{t,m} \leftarrow 3$
			\State Partition the input sequence $\bm{f}$ into four components $(\bm{f}_{\text{part1}}, \bm{f}_{\text{sep1}}, \bm{f}_{\text{sep2}}, \bm{f}_{\text{part2}})$, where:
			\Statex \quad $\bm{f}_{\text{part1}} \leftarrow (f_1, \ldots, f_{2k_{pre}})$
			\Statex \quad $\bm{f}_{\text{sep1}} \leftarrow f_{2k_{pre}+1}$
			\Statex \quad $\bm{f}_{\text{sep2}} \leftarrow f_{2k_{pre}+2}$
			\Statex \quad $\bm{f}_{\text{part2}} \leftarrow (f_{2k_{pre}+3}, \ldots, f_k)$
			\State $\bm{X}_{2,1} \leftarrow \bm{f}_{\text{sep1}}$, $\bm{X}_{1,m} \leftarrow \bm{f}_{\text{sep2}}$
			
			\If{$t > 2$}
			\State Define  $\mathcal{P} = \{\bm{X}_{2,m}\} \cup \bm{X}_{[3:t],[1:m-1]} \cup \bm{X}_{[3:t-1], m}$
			\State Sequentially place the $(t-2)m$ symbols from $\bm{f}_{\text{part2}}$ into the coordinates specified by $\mathcal{P}$.
			\EndIf
			
			\State Convert $\bm{f}_{\text{part1}}$ into a sequence of $k_{pre}$ symbols over $\Sigma_{16}$ utilizing the mapping $\psi$ from Definition \ref{def:DNApair}.
			\State Invoke the generalized RLL encoder  in Section~\ref{subsec:overall_process}:
			\Statex \quad \textit{Inputs}: Information $\bm{f}_{\text{part1}}$, target length $len = m-2$, constraint limit $\ell = L-1$
			\Statex \quad \textit{Output}: Constrained codeword sequence $\bm{c} \in \Sigma_{16}^{m-2}$
			
			\State For each symbol $c_j$ ($1 \le j \le m-2$) in $\bm{c}$, apply the inverse mapping $\psi^{-1}$ to recover the corresponding pair of $\Sigma_4$ symbols, i.e., $(u_j, v_j) = \psi^{-1}(c_j)$.
			\For{$j = 1$ to $m-2$}
			\State $\bm{X}_{1, j+1} \leftarrow u_j$
			\State $\bm{X}_{2, j+1} \leftarrow v_j$
			\EndFor
			
			\State \Return $\bm{X}_{[1:t], [1:m]}$
		\end{algorithmic}
	\end{algorithm}
	
	\subsubsection{Properties of the encoded Information Part}
	The following theorem records the two properties of Algorithm~\ref{alg:encode_info} that the decoder relies on.
	
	\begin{theorem} \label{thm:rll_property}
		The submatrix $\bm{X}_{[1:t],[1:m]}$ generated by Algorithm~\ref{alg:encode_info} satisfies the following properties:
		\begin{enumerate}
			\item \textbf{Fixed Boundaries}: $\bm{X}_{1,1} = \bm{X}_{t,m} = 3$.
			\item \textbf{RLL Satisfaction}: Let $\bm{Y} = \Phi(\bm{X})$. Regardless of the values assigned to the non-information part $\bm{X}_{[1:t], [m+1:n/t]}$, the first information row $\bm{Y}_{1, [1:m]}$ satisfies the $L$-RLL constraint.
		\end{enumerate}
	\end{theorem}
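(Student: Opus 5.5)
The proof will compute explicitly which DNA bases determine each entry of the first label row, and then reduce the RLL claim to Theorem~\ref{thm:gen_rll_implication}. Fixed Boundaries is immediate: Step~4 of Algorithm~\ref{alg:encode_info} sets $\bm{X}_{1,1}=\bm{X}_{t,m}=3$. No later step overwrites these two coordinates. The part-2 placement set $\mathcal{P}$ avoids $(1,1)$ and $(t,m)$, since it uses only rows $3$ to $t-1$ in column $m$. The RLL loop writes only columns $2$ to $m-1$. So I would only have to check index sets.

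For RLL Satisfaction, I would first unwind the definitions. By Definition~\ref{def:matrix_mapping}, $\bm{Y}_{1,j}$ is the label at position $(j-1)t+1$ of $\bm{x}=\mathcal{M}_t^{-1}(\bm{X})$. Since every label in $\mathcal{S}$ has length two, this label is determined by the dinucleotide $(x_{(j-1)t+1},x_{(j-1)t+2})=(\bm{X}_{1,j},\bm{X}_{2,j})$, using $t\ge 2$. Hence $\bm{Y}_{1,j}=\phi_{\text{label}}(\psi(\bm{X}_{1,j},\bm{X}_{2,j}))$. For $j\le m$ this depends only on entries in the info columns, which gives the claimed independence from $\bm{X}_{[1:t],[m+1:n/t]}$. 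For $2\le j\le m-1$, Steps~13--16 give $\psi(\bm{X}_{1,j},\bm{X}_{2,j})=c_{j-1}$. So $\bm{Y}_{1,[2:m-1]}=(\phi_{\text{label}}(c_1),\dots,\phi_{\text{label}}(c_{m-2}))$. Because $\bm{c}$ satisfies the generalized $(L-1)$-RLL constraint, Theorem~\ref{thm:gen_rll_implication} says this middle segment is an ordinary $(L-1)$-RLL sequence.

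The remaining step is a run-counting argument for the two boundary columns. Any run in $\bm{Y}_{1,[1:m]}$ consists of at most one boundary symbol on each side plus a run of the middle segment. Its length is therefore at most $(L-1)+2=L+1$. Equality would force the run to contain $\bm{Y}_{1,1}$, $\bm{Y}_{1,m}$ and a full middle run of length $L-1$. That means the middle segment is one run of length exactly $m-2=L-1$ and the whole row is constant. Otherwise every run has length at most $L$, which is the claim.

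The main obstacle is excluding this degenerate case $m=L+1$. I would first use the parameter relation $L=\lceil\log_{8/3}m\rceil$: for all sufficiently large $m$ (indeed $m\ge 4$) we have $L\le m-2$, so a middle run of length $m-2=L-1$ cannot occur and the case is impossible. For the remaining small values of $m$ I would exploit the fixed base $\bm{X}_{1,1}=3$ (G). It makes $\bm{Y}_{1,1}$ one of the nonzero labels GA, GC, GG, GT. A middle run equal to that label would need $\psi(\bm{X}_{1,j},\bm{X}_{2,j})$ to be the corresponding G-pair for every middle $j$. If that cannot be excluded directly, I would check the finitely many small $m$ permitted by $n\ge 7t+3$ and Eq.~\eqref{eq:m_constraint} by hand, possibly noting that the decoder's use of the RLL property only matters for large $m$.
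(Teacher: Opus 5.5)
Your handling of Fixed Boundaries and of the interior segment matches the paper. You use that $\bm{Y}_{1,j}$ is the label of $(\bm{X}_{1,j},\bm{X}_{2,j})$ (since $t\ge 2$), so only the info columns matter, and Theorem~\ref{thm:gen_rll_implication} makes $\bm{Y}_{1,[2:m-1]}$ an $(L-1)$-RLL sequence. One small fix in your index check: $\mathcal{P}$ also contains $(2,m)$, and it avoids $(t,m)$ only because $\mathcal{P}$ is used just when $t>2$.

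At the boundary step you are more careful than the paper. The paper simply asserts that appending $\bm{Y}_{1,1}$ and $\bm{Y}_{1,m}$ raises the maximal run to at most $(L-1)+1=L$. That overlooks exactly the case you isolate, where one run absorbs both boundary symbols. You reduce that case to $m=L+1$, and $L=\lceil\log_{8/3}m\rceil\le m-2$ for $m\ge 4$. Together these give a complete proof whenever $m\ge 4$.

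The remaining case cannot be closed, because Property~2 is false there, so both of your fallbacks fail. Take the parameters of Example~\ref{ex:encoding}: $n=20$, $t=2$, hence $m=3$ and $L=2$.
\begin{itemize}
\item The generalized $1$-RLL code of length $m-2=1$ is all of $\Sigma_{16}$, so $k_{pre}=1$ and the encoder can output every symbol.
\item The input $\bm{f}=(3,3,3,3)$ gives rank $M=16$, hence $c_1=15=\psi(3,3)$.
\item The same input gives $\bm{f}_{\text{sep1}}=\bm{f}_{\text{sep2}}=3$, and $\bm{X}_{2,3}=\bm{X}_{t,m}=3$ is fixed.
\end{itemize}
So all six info bases are G, and $\bm{Y}_{1,[1:3]}=(5,5,5)$ is a run of length $3>L$. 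Likewise $n=18$, $t=2$ gives $m=2$ and $L=1$, and $\bm{Y}_{1,[1:2]}=(5,5)$ when both separator symbols are $3$.

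Knowing that $\bm{Y}_{1,1}$ is a G-label therefore excludes nothing. Checking small $m$ by hand would produce counterexamples rather than a proof.

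The correct conclusion is that the statement needs the extra hypothesis $m\ge 4$ (equivalently $L\le m-2$). Your argument proves it under that hypothesis, and the restriction is harmless for the asymptotic results, since $m\to\infty$. For $m\le 3$ one would have to change either the statement or the encoder. The paper's own proof has the same hole.
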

	
	\begin{IEEEproof}
		Property~1 follows directly from the assignments in Step~4 of Algorithm~\ref{alg:encode_info}. We now prove Property~2.
		
		For the first row of $\bm{Y}$, each element $\bm{Y}_{1,j}$ is determined by the corresponding pair $(\bm{X}_{1,j}, \bm{X}_{2,j})$. Therefore, $\bm{Y}_{1, [1:m]}$ depends only on the first two rows and the first $m$ columns of $\bm{X}$, and is independent of the non-information part $\bm{X}_{[1:t], [m+1:n/t]}$.
		
		In the encoding process, $\bm{f}_{\text{part1}}$ is mapped to a sequence $\bm{c} \in \Sigma_{16}^{m-2}$ that satisfies the generalized $(L-1)$-RLL constraint. The sequence $\bm{c}$ is then converted back into symbol pairs, which are assigned to $\bm{X}_{1,j}$ and $\bm{X}_{2,j}$ for $j=2, \ldots, m-1$.
		
		By Theorem \ref{thm:gen_rll_implication}, the inner sequence $\bm{Y}_{1, [2:m-1]}$ satisfies the $(L-1)$-RLL constraint. This means the maximum run length of identical symbols within this segment is at most $L-1$.
		
		The complete sequence $\bm{Y}_{1, [1:m]}$ is formed by appending the boundary elements $\bm{Y}_{1,1}$ and $\bm{Y}_{1,m}$ to the left and right of $\bm{Y}_{1, [2:m-1]}$, respectively.
		
		Since the maximum run length in the interior segment is $L-1$, these boundary elements can increase the overall run length to at most $(L-1) + 1 = L$. Thus, $\bm{Y}_{1, [1:m]}$ satisfies the $L$-RLL constraint.
	\end{IEEEproof}
	
	\subsection{Encoding the Parity Part}
	
	The parity part stores the critical parameters of the VT and SVT codes derived from the information part, which are utilized to recover the lost data when a burst deletion strikes the information section. The set of parameters that must be stored is denoted as $\mathcal{S}_{par} \triangleq (a_1, b_1, a_2, b_2, c_2, \ldots, a_t, b_t, c_t)$, where $a_i, b_i, c_i$ are defined in Definition~\ref{def:Y_set}.
	
	\subsubsection{Parameter Space and Mapping}
	
	The domains of these parameters are as follows:
	\begin{itemize}
		\item $a_1 \in[0, m-1]$, yielding $m$ possible values;
		\item $b_i \in[0, 10]$, yielding $11$ possible values ($1 \le i \le t$);
		\item $a_i \in [0, P]$, yielding $P+1$ possible values ($2 \le i \le t$);
		\item $c_i \in [0, 1]$, yielding $2$ possible values ($2 \le i \le t$).
	\end{itemize}
	The total number of possible parameter combinations, $N_{total}$, is evaluated as:
	\begin{equation*}
		N_{total} = m \cdot 11 \cdot \prod_{i=2}^t \left( (P+1) \cdot 11 \cdot 2 \right) = 11^t \cdot m \cdot (2P+2)^{t-1}.
	\end{equation*}
	
	Based on the definition of $r_d$, we have $r_d - 2 = \lceil \log_4 N_{total} \rceil$. This indicates that we can injectively map the parameter combinations into a quaternary sequence of length $r_d-2$.
	
	We define a mapping $\Psi_{int}: \mathcal{S}_{par} \to \mathbb{Z}_{N_{total}}$, which utilizes a mixed-radix representation~\cite{donald1999art} to convert the parameter tuple into a single integer $V$. Let the parameter sequence $\mathbf{v}$ and its corresponding radix sequence $\mathbf{M}$ be:
	\begin{align*}
		\mathbf{v} &= (a_1, b_1, a_2, b_2, c_2, \ldots, a_t, b_t, c_t), \\
		\mathbf{M} &= (m, 11, P+1, 11, 2, \ldots, P+1, 11, 2).
	\end{align*}
	The recursive formula to compute the integer value $V$ is:
	\begin{equation*}
		V = (\ldots((a_1 \cdot 11 + b_1) \cdot (P+1) + a_2) \cdot 11 + b_2) \cdot 2 + c_2 \ldots 
	\end{equation*}
	More formally, assuming $\mathbf{v}$ has a length of $K$, then $V = \sum_{i=1}^K v_i \cdot \left( \prod_{j=i+1}^K M_j \right)$ (where an empty product is treated as 1).
	
	Next, the integer $V$ is converted into a base-4 sequence of length $r_d-2$, denoted as $\bm{s}_{code} \in \Sigma_4^{r_d-2}$.
	Finally, to mark the boundaries of the parity information, we append a fixed symbol $3$ (representing the DNA base G) to both ends of $\bm{s}_{code}$. This yields the final parity sequence $\bm{a}_{parity}$:
	\begin{equation*}
		\bm{a}_{parity} \triangleq (3, \bm{s}_{code}, 3) \in \Sigma_4^{r_d}.
	\end{equation*}
	
	\subsubsection{Populating the Parity Matrix}
	
	We define the serialized vector $\bm{p_X}$ for the parity part of matrix $\bm{X}$ as:
	\begin{align*}
		\bm{p_X} = ( \bm{X}_{2,m+3}, \ldots, \bm{X}_{t,m+3}, 
		\bm{X}_{1,m+4}, \ldots, \bm{X}_{t,m+4}, \ldots, 
		&\bm{X}_{1,n/t}, \ldots, \bm{X}_{t,n/t} ).
	\end{align*}
	This implies reading all remaining elements starting from $\bm{X}_{2,m+3}$ in a column-major order.
	
	According to Eq.~\eqref{eq:m_constraint}, the total capacity of the parity region, $|\bm{p_X}| = (t-1) + t \cdot (n/t - m - 3)$, is at least $r_d$. We place the generated sequence $\bm{a}_{parity}$ into the first $r_d$ positions of $\bm{p_X}$, and fill all remaining positions in $\bm{p_X}$ with the symbol \textbf{1} (representing the DNA base T).
	
	\subsubsection{Proof of Parameter Recoverability}
	
	We must formally prove Condition 3 of the matrix set $\mathcal{Y}$, which asserts the existence of a fixed injection $\Psi$ capable of recovering the parameter set $\mathcal{S}_{par}$ from the labeled parity sequence $\bm{p}_{\bm{Y}}$. We formalize this recoverability in the following theorem, demonstrating that the encoding rules ensure a one-to-one mapping back to $\mathcal{S}_{par}$.
	
	\begin{theorem}\label{thm:pY_to_Spar}
		If $\bm{p}_{\bm{X}}$ is generated by the above encoding rules and $\bm{p}_{\bm{Y}} = L_{\mathcal{A}}(\bm{p}_{\bm{X}})$, then the parameter set $\mathcal{S}_{par}$ can be uniquely recovered from $\bm{p}_{\bm{Y}}$.
	\end{theorem}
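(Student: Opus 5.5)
We plan to reduce the claim to Theorem~\ref{thm:reconstruction}, the unique reconstruction of a DNA string from its $\mathcal{S}$-labeling once both boundary bases are known. The encoding makes $\bm{p_X}$ a DNA string of the form $(3,\bm{s}_{code},3,1,1,\ldots,1)$ with $\bm{s}_{code}\in\Sigma_4^{r_d-2}$.

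\textbf{Step 1: fix the boundaries.} The first base of $\bm{p_X}$ is $3$ (G) by construction. The last base is $1$ (T) if padding positions exist. If there is no padding, the last base is the trailing $3$. In either case both boundary bases are deterministic functions of the code parameters $n,t,m$ alone, so the decoder knows them.

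\textbf{Step 2: invoke reconstruction.} We will read $\bm{p_Y}$ as the $\mathcal{S}$-labeling of $\bm{p_X}$, viewed as a stand-alone DNA string. Each label of a pair that lies inside the serialized parity region depends only on consecutive symbols of $\bm{p_X}$. This holds because column-major serialization of the region starting at $\bm{X}_{2,m+3}$ is a contiguous segment of $\mathcal{M}_t^{-1}(\bm{X})$, namely positions $(m+2)t+2$ through $n$. So $\bm{p_Y}$ coincides with the corresponding contiguous segment of $L_{\mathcal{A}}(\bm{x})$.

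One subtlety is the last label of the segment. Globally, that position is position $n$ of $L_{\mathcal{A}}(\bm{x})$, which is $0$ because it has no successor. Locally, it is also the final position of $L_{\mathcal{A}}(\bm{p_X})$. The two conventions therefore agree.

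With both boundary bases known, Theorem~\ref{thm:reconstruction} says $\bm{p_X}$ is uniquely determined by $\bm{p_Y}$. Following the domino argument, we will recover the bases from left to right starting at the known $3$.

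\textbf{Step 3: extract the parameters.} Since $r_d$ is a fixed function of $(n,t,m)$, the decoder strips the first symbol and keeps the next $r_d-2$ symbols as $\bm{s}_{code}$. It reads these as a base-$4$ integer $V$, which lies in $[0,N_{total}-1]$ because $r_d-2=\lceil\log_4 N_{total}\rceil$. It then inverts the mixed-radix map $\Psi_{int}$ by successive division by the radices $M_K,\ldots,M_1$. The mixed-radix representation is a bijection between $\prod M_i$ tuples and $\mathbb{Z}_{N_{total}}$. Hence $\mathcal{S}_{par}$ is recovered uniquely, and the composition of these maps is the required injection $\Psi$.

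\textbf{Main obstacle.} The delicate step is justifying Step 2. We must confirm that the paper's notation $\bm{p_Y}=L_{\mathcal{A}}(\bm{p_X})$ matches the actual labels in $\bm{Y}$ on the parity region, including the boundary label conventions. We must also confirm that the boundary bases are known even in the edge case with no padding. We expect this to be settled by the contiguity observation together with the fact that the region ends at position $n$.
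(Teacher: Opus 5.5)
Your proof is correct, but it follows a different route from the paper.

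\textbf{The paper's route.} The paper treats the end of the parity data as something to be \emph{detected}. It uses the fact that the pairs GT and TT label to $6$ and $10$. It then scans the tail of $\bm{p_Y}$ for the pattern ``a $6$ followed by $10$s'' (Algorithm~\ref{alg:recover_parity}) and cuts $\bm{p_Y}$ there. Finally it applies Theorem~\ref{thm:reconstruction} only to $\bm{a}_{parity}$, whose two boundary bases are both $3$.

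\textbf{Your route.} You observe that the padding length $|\bm{p_X}|-r_d=(n-tm-2t-1)-r_d$ is a fixed function of $(n,t,m)$. Hence the last base of the whole $\bm{p_X}$ (either $1$ or $3$) is known in advance. You then apply Theorem~\ref{thm:reconstruction} once to all of $\bm{p_X}$ and truncate at the known position $r_d$.

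\textbf{What your version buys.}
\begin{itemize}
\item It needs no case analysis on label values and no tail-scanning algorithm.
\item It avoids a small inaccuracy in the paper. When padding is present, the prefix returned by Algorithm~\ref{alg:recover_parity} ends with the label $6$ of the pair GT. The stand-alone labeling $L_{\mathcal{A}}(\bm{a}_{parity})$ ends with $0$ instead. This is harmless, since the final label carries no information once the final base is known, but the paper's claim that the output is ``exactly'' $L_{\mathcal{A}}(\bm{a}_{parity})$ is not literally true.
\item It explicitly justifies why the parity labels in $\bm{Y}$ equal the stand-alone labeling of $\bm{p_X}$, which the paper takes for granted. The parity region is the contiguous suffix $x_{(m+2)t+2},\dots,x_n$ of $\bm{x}$, and position $n$ is labeled $0$ under both conventions.
\end{itemize}

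\textbf{What the paper's version buys.} Its self-delimiting scan would matter only if the decoder did not know the parity length. Here the decoder does know it: inverting the mixed-radix map already requires $m$ and $P$, and these determine $r_d$. So nothing is lost by your simplification.
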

	
	\begin{IEEEproof}
		According to the encoding rules, the structure of $\bm{p}_{\bm{X}}$ consists of the  data $\bm{a}_{parity}$ followed by a certain number of padding symbols $1$. Since $\bm{a}_{parity}$ strictly ends with a $3$, the tail pattern of $\bm{p}_{\bm{X}}$ is inevitably $(\dots, 3, 1, 1, \dots, 1)$ or simply $(\dots, 3)$ (when there is no padding).
		
		In the label domain $\Sigma_{11}$ (based on the label set $\mathcal{S}$):
		\begin{itemize}
			\item The DNA pair $(3, 1)$ (i.e., GT) corresponds to the label \textbf{6}.
			\item The DNA pair $(1, 1)$ (i.e., TT) corresponds to the label \textbf{10}.
		\end{itemize}
		We analyze the final symbols of $\bm{p}_{\bm{Y}}$ to determine the exact structure of $\bm{p}_{\bm{X}}$. There are two possible scenarios:
		\begin{enumerate}
			\item \textbf{Case 1: Padding exists ($r_d < |\bm{p}_{\bm{X}}|$).} 
			In this scenario, the tail of $\bm{p}_{\bm{Y}}$ will appear as $(\dots, 6, 10, \dots, 10, 0)$ or $(\dots, 6, 0)$. We can determine the exact length of the padding region by finding the continuous $10$s preceded by a $6$ at the end. This allows us to accurately extract the valid data portion.
			
			\item \textbf{Case 2: No padding ($r_d = |\bm{p}_{\bm{X}}|$).} 
			Here, $\bm{p}_{\bm{X}}$ ends exactly with $3$. The last valid label in $\bm{p}_{\bm{Y}}$ corresponds to the pair formed by the second-to-last element of $\bm{a}_{parity}$ and the terminal $3$. Because $\bm{p}_{\bm{X}}$ ends with $3$ (G), this final pair cannot be $(1, 1)$ (TT) or $(3, 1)$ (GT). Therefore, the pattern of ``a $6$ followed by several $10$s'' from Case 1 is a unique indicator of padding. The absence of this pattern clearly shows that no padding occurred.
		\end{enumerate}
		
		We can employ Algorithm \ref{alg:recover_parity} to extract the effective labeled sequence $L_{\mathcal{A}}(\bm{a}_{parity})$ from $\bm{p}_{\bm{Y}}$, where $\bm{p_Y}[cur]$ denotes the element at index $cur$ in $\bm{p_Y}$ (assuming 1-based indexing).
		
		\begin{algorithm}[htbp]
			\caption{Recover Parity Labels}
			\label{alg:recover_parity}
			\begin{algorithmic}[1]
				\State \textbf{Input:} The labeled sequence of the parity part $\bm{p_Y}$
				\State \textbf{Output:} The labeled sequence of the effective parity data $\bm{y}_{target} = L_{\mathcal{A}}(\bm{a}_{parity})$
				
				\State $cur \leftarrow |\bm{p_Y}| - 1$ 
				\If{$\bm{p_Y}[cur] == 10$}
				\While{$\bm{p_Y}[cur] == 10$}
				\State $cur \leftarrow cur - 1$
				\EndWhile
				\ElsIf{$\bm{p_Y}[cur] == 6$}
				\State $cur \leftarrow |\bm{p_Y}| - 1$
				\Else
				\State $cur \leftarrow |\bm{p_Y}|$
				\EndIf
				
				\State $\bm{y}_{target} \leftarrow$ the first $cur$ elements of $\bm{p_Y}$
				\State \Return $\bm{y}_{target}$
			\end{algorithmic}
		\end{algorithm}
		
		The sequence obtained through the algorithm above is exactly $L_{\mathcal{A}}(\bm{a}_{parity})$. 
		
		The parameter recovery process proceeds as follows:
		\begin{enumerate}
			\item \textbf{Reconstruct $\bm{a}_{parity}$}: Since the boundary elements of $\bm{a}_{parity}$ are a priori known to be both $3$, we can invoke Theorem \ref{thm:reconstruction} to uniquely decode the original sequence $\bm{a}_{parity}$ from $L_{\mathcal{A}}(\bm{a}_{parity})$.
			\item \textbf{Extract $\bm{s}_{code}$}: Remove the boundary symbols $3$ from both ends of $\bm{a}_{parity}$ to extract $\bm{s}_{code}$.
			\item \textbf{Recover Parameters}: Convert the quaternary sequence $\bm{s}_{code}$ back into the integer $V$, and reversely execute the mixed radix representation (successively applying modulo and integer division operations using the radices in $\mathbf{M}$) to uniquely reconstruct the parameter tuple $\mathcal{S}_{par}$.
		\end{enumerate}
	\end{IEEEproof}
	
	\subsection{Encoding the Separator Part}
	
	The separator part marks the boundary between the information and parity parts, helping the decoder determine whether an insertion or deletion error occurred in the information region or the parity region. The encoding of this part is straightforward. Its main goal is to insert a fixed pattern into $\bm{X}$ so that the corresponding labels in $\bm{Y}$ satisfy the separator condition: $\bm{Y}_{1, m+1} \neq \bm{Y}_{1, m+2} = \bm{Y}_{1, m+3}$.
	
	We simply set the values of matrix $\bm{X}$ in the separator region as follows:
	\begin{equation*}
		\bm{X}_{1, m+1} = 1, \quad \bm{X}_{1, m+2} = 0, \quad \bm{X}_{1, m+3} = 0.
	\end{equation*}
	All other elements in these three separator columns (i.e., from row 2 down to $t$) are set to 0:
	\begin{equation*}
		\bm{X}_{i,j} = 0, \quad \forall i \in [2:t], j \in \{m+1, m+2, m+3\}.
	\end{equation*}
	
	\begin{theorem}\label{thm:separator_neq}
		The above separator-encoding rule ensures that the induced label matrix $\bm{Y} = \Phi(\bm{X})$ satisfies the separator condition:
		\begin{equation*}
			\bm{Y}_{1, m+1} \neq \bm{Y}_{1, m+2} = \bm{Y}_{1, m+3}.
		\end{equation*}
	\end{theorem}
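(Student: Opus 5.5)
We verify the separator condition by computing the three labels directly from the DNA entries they depend on. Under the column-major interleaving of Definition~\ref{def:matrix_mapping}, $\bm{X}_{i,j}=x_{(j-1)t+i}$. The label at a position is determined by the dinucleotide starting there, so $\bm{Y}_{1,j}=L_{\mathcal{A}}(\bm{x})_{(j-1)t+1}$ depends only on the pair $(x_{(j-1)t+1},x_{(j-1)t+2})$. For $t\ge 2$ this pair is $(\bm{X}_{1,j},\bm{X}_{2,j})$. For $t=1$ the matrix is a single row and the pair is instead $(\bm{X}_{1,j},\bm{X}_{1,j+1})$. The construction in the paper takes $t\ge 2$, so we treat that case as primary and check $t=1$ separately only if needed.

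\textbf{Case $t\ge 2$.} The separator rule sets $\bm{X}_{1,m+1}=1$, $\bm{X}_{1,m+2}=\bm{X}_{1,m+3}=0$, and every entry of rows $2,\dots,t$ in columns $m+1,m+2,m+3$ to $0$. Reading off the pairs:
\begin{itemize}
\item $\bm{Y}_{1,m+1}$ comes from $(1,0)$, i.e.\ $\mathrm{TA}$, which is a label in $\mathcal{S}$ (the seventh), so $\bm{Y}_{1,m+1}=7\neq 0$.
\item $\bm{Y}_{1,m+2}$ and $\bm{Y}_{1,m+3}$ both come from $(0,0)$, i.e.\ $\mathrm{AA}$, which is not in $\mathcal{S}$, so both equal $0$.
\end{itemize}
Together these give $\bm{Y}_{1,m+1}=7\neq 0=\bm{Y}_{1,m+2}=\bm{Y}_{1,m+3}$. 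Position $(m+3-1)t+1$ is not the last DNA index, because the parity region occupies at least one further column; this follows from Eq.~\eqref{eq:m_constraint} together with $r_d\ge 2$. So the boundary clause $i\le n-\ell_j+1$ in the labeling definition does not force a $0$ here, although a $0$ would be harmless anyway.

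\textbf{Case $t=1$.} Here the pairs are $(\bm{X}_{1,m+1},\bm{X}_{1,m+2})=(1,0)=\mathrm{TA}$, giving $7$, and $(\bm{X}_{1,m+2},\bm{X}_{1,m+3})=(0,0)=\mathrm{AA}$, giving $0$. The label $\bm{Y}_{1,m+3}$ is formed by $x_{m+3}=0$ together with the first parity symbol $3$, i.e.\ $\mathrm{AG}$, which is also not in $\mathcal{S}$, so it is $0$. The condition therefore holds in this case as well.

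The only delicate point is the indexing: confirming exactly which entries of $\bm{X}$ form the dinucleotide behind each $\bm{Y}_{1,j}$, and that these reads do not cross into the information or parity regions in a way that changes the label. Once the pairs are fixed, the result reduces to checking membership of $\mathrm{TA}$ and $\mathrm{AA}$ (and $\mathrm{AG}$ when $t=1$) in $\mathcal{S}$.
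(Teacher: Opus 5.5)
Your proposal is correct and follows the paper's own proof. Both read $\bm{Y}_{1,j}$ off the DNA pair $(\bm{X}_{1,j},\bm{X}_{2,j})$ and check that $\mathrm{TA}\mapsto 7$ in column $m+1$ while $\mathrm{AA}\mapsto 0$ in columns $m+2$ and $m+3$.

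Your boundary remark does not need Eq.~\eqref{eq:m_constraint}. For $t\ge 2$ the partner base $\bm{X}_{2,m+3}$ lies in column $m+3$ itself, so index $(m+2)t+1$ is never the last one. The appeal would in any case need more than $r_d\ge 2$ to guarantee an extra column when $t\ge 3$. The $t=1$ case is outside the construction's scope of $t\ge 2$.
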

	
	\begin{IEEEproof}
		The label value of $\bm{Y}_{1,j}$ is strictly determined by the DNA pair $(\bm{X}_{1,j}, \bm{X}_{2,j})$. Based on our explicit configuration:
		
		\begin{itemize}
			\item The pair in column $m+1$ is $(\bm{X}_{1, m+1}, \bm{X}_{2, m+1}) = (1, 0)$, corresponding to the DNA dinucleotide TA. In the label set $\mathcal{S}$, TA corresponds to the label value $7$. Thus, $\bm{Y}_{1, m+1} = 7$.
			
			\item The pair in column $m+2$ is $(\bm{X}_{1, m+2}, \bm{X}_{2, m+2}) = (0, 0)$, corresponding to the DNA dinucleotide AA. Since AA is completely absent from the label set $\mathcal{S}$, this pair maps to the mismatch label $0$. Thus, $\bm{Y}_{1, m+2} = 0$.
			
			\item The pair in column $m+3$ is $(\bm{X}_{1, m+3}, \bm{X}_{2, m+3}) = (0, 0)$. Following the identical logic, it corresponds to AA and maps to the mismatch label $0$. Thus, $\bm{Y}_{1, m+3} = 0$.
		\end{itemize}
		
		Consequently, $\bm{Y}_{1, m+1} = 7 \neq 0 = \bm{Y}_{1, m+2} = \bm{Y}_{1, m+3}$, which gives the pattern $(7,0,0)$ and fulfils the separator condition $\bm{Y}_{1,m+1} \neq \bm{Y}_{1,m+2} = \bm{Y}_{1,m+3}$.
	\end{IEEEproof}

\begin{example}\label{ex:encoding}
To intuitively illustrate the encoding process, consider a system with code length $n=20$. According to Eq.~\eqref{eq:m_constraint}, the number of information columns evaluates to $m=3$. Suppose the input information sequence is $\bm{f} = (0, 2, 3, 0)$. 

\begin{itemize}[leftmargin=*]
    \item \textbf{Step 1: Encoding the Information Submatrix.}
    According to Algorithm~\ref{alg:encode_info}, the input sequence $\bm{f}$ is first partitioned into four components based on the capacity constraint $k_{pre}=1$ and $t=2$: 
    \begin{itemize}
        \item $\bm{f}_{part1} = (f_1, f_2) = (0, 2)$
        \item $\bm{f}_{sep1} = f_3 = 3$
        \item $\bm{f}_{sep2} = f_4 = 0$
        \item $\bm{f}_{part2} = \emptyset$ (since $t=2$, $(t-2)m = 0$)
    \end{itemize}
    Next, $\bm{f}_{part1}$ is mapped to a $\Sigma_{16}$ sequence (yielding a raw hex sequence of $(2)$) and passed to the generalized RLL encoder, which outputs the constrained sequence $\bm{c} = (3)$. This symbol is inversely mapped back to DNA pairs: $c_1=3 \Rightarrow (u_1, v_1) = (0, 3)$.

    These pairs, along with the separator symbols $f_{sep1}$ and $f_{sep2}$ (assigned to $\bm{X}_{2,1}$ and $\bm{X}_{1,m}$ respectively), and the fixed boundary `3`s ($\bm{X}_{1,1}=3, \bm{X}_{t,m}=3$), are placed into the $2 \times 3$ information submatrix $\bm{X}_{info}$ column by column. The corresponding label matrix $\bm{Y}_{info}$ is generated strictly based on the vertical and horizontal pairings of these bases:
    \begin{equation*}
        \bm{X}_{info} = \begin{bmatrix} 3 & 0 & 0 \\ 3 & 3 & 3 \end{bmatrix}, \quad \bm{Y}_{info} = \begin{bmatrix} 5 & 0 & 0 \\ 3 & 3 & 6 \end{bmatrix}.
    \end{equation*}

    \item \textbf{Step 2: Calculating Checksum Parameters and Matrix Assembly.}
    We compute the VT and SVT parameters for the rows of $\bm{Y}_{info}$:
    \begin{itemize}
        \item \textbf{Row 1 ($\bm{Y}_{1,[1:3]} = [5, 0, 0]$):} The VT parameters evaluate to $a_1 = 2, b_1 = 5$.
        \item \textbf{Row 2 ($\bm{Y}_{2,[1:3]} = [3, 3, 6]$):} The SVT parameters evaluate to $a_2 = 0, b_2 = 1, c_2 = 1$.
    \end{itemize}
    This generates the exact parameter tuple $\mathcal{S}_{par} = (2, 5, 0, 1, 1)$. This tuple is serialized into a base-4 sequence. Accommodating the boundary `3`s, we obtain the length-9 parity sequence $\bm{p_X} = (3, 1, 2, 1, 1, 1, 2, 0, 3)$. 
    
    The separator columns ($j=4,5,6$) are assigned to enforce the condition $\bm{Y}_{1,4} \neq \bm{Y}_{1,5} = \bm{Y}_{1,6}$, while $\bm{p_X}$ is placed into the parity region starting at $\bm{X}_{2,6}$. This yields the final $2 \times 10$ DNA matrix $\bm{X}$:
    \begin{equation*}
        \bm{X} = \begin{bmatrix} 3 & 0 & 0 & 1 & 0 & 0 & 1 & 1 & 1 & 0 \\ 3 & 3 & 3 & 0 & 0 & 3 & 2 & 1 & 2 & 3 \end{bmatrix},
    \end{equation*}
    which translates to the DNA strands (Row 0: \texttt{GAATAATTTA}, Row 1: \texttt{GGGAAGCTCG}, mapped via $0=\mathrm{A}, 1=\mathrm{T}, 2=\mathrm{C}, 3=\mathrm{G}$).
\end{itemize}
\end{example}
	
	\subsection{Correctness and Complexity of the Encoder}
	
	The following theorem combines the three parts above and bounds the encoder's complexity.
	
	\begin{theorem}
		For any arbitrary sequence of $k$ quaternary information symbols, the codeword $\bm{x} = \mathcal{M}_{t}^{-1}(\bm{X})$ generated by the proposed encoding algorithm belongs to the defined code $\mathcal{C}_{full}$. Furthermore, the overall complexity of this encoding procedure is $\mathcal{O}(n^2)$.
	\end{theorem}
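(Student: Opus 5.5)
To prove membership in $\mathcal{C}_{full}$, I will check the three conditions of Definition~\ref{def:Y_set} one at a time for $\bm{Y}=\Phi(\bm{X})$, where $\bm{X}$ is the output of the combined encoder. Before doing so, I will record a locality fact that governs all three checks. Under the column-major mapping $\mathcal{M}_t$, the label $\bm{Y}_{i,j}$ is determined by the DNA pair $(x_{(j-1)t+i}, x_{(j-1)t+i+1})$. For $i<t$ this pair is $(\bm{X}_{i,j},\bm{X}_{i+1,j})$, and for $i=t$ it is $(\bm{X}_{t,j},\bm{X}_{1,j+1})$. Consequently, each region of $\bm{Y}$ depends only on the entries of $\bm{X}$ in that region plus at most one entry just to its right. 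This is what allows the three encoding phases to be analysed independently.

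\textbf{Condition~1.} This is exactly Property~2 of Theorem~\ref{thm:rll_property}. That theorem holds for arbitrary values in the non-information columns, so nothing further is needed.

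\textbf{Condition~2.} This is Theorem~\ref{thm:separator_neq}. It uses only the pairs $(\bm{X}_{1,j},\bm{X}_{2,j})$ for $j\in\{m+1,m+2,m+3\}$, and all of these are fixed by the separator rule.

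\textbf{Condition~3.} This is the step that needs real care.
\begin{enumerate}
\item The parameters $(a_1,b_1)$ and $(a_i,b_i,c_i)$ are computed from the information rows $\bm{Y}_{i,[1:m]}$. Rows $i<t$ depend only on $\bm{X}_{[1:t],[1:m]}$.
\item Row $t$ has a complication. The entry $\bm{Y}_{t,m}$ depends on $\bm{X}_{1,m+1}$, which is the fixed separator value $1$. So the separator must be placed before the checksums are computed. Equivalently, I will note that the encoder fixes the separator independently of the data, and then define $\mathcal{S}_{par}$ from the true rows of $\Phi(\bm{X})$. With that definition, each row lies in its VT or SVT code by construction.
\item The injection $\Psi$ is supplied by Theorem~\ref{thm:pY_to_Spar}. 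To apply it, I must argue that $\bm{p_Y}$ depends only on $\bm{p_X}$. The first entry of $\bm{p_Y}$ is $\bm{Y}_{2,m+3}$, which is formed from $(\bm{X}_{2,m+3},\bm{X}_{3,m+3})$, both inside the parity region when $t\ge3$. When $t=2$, the pair is $(\bm{X}_{2,m+3},\bm{X}_{1,m+4})$, which also lies inside. The last parity label involves only parity entries, with the final symbol labelled $0$ by definition.
\item The mixed-radix map into $\Sigma_4^{r_d-2}$ is injective because $r_d-2=\lceil\log_4 N_{total}\rceil$. The parity region has room for $\bm{a}_{parity}$ by Eq.~\eqref{eq:m_constraint}.
\end{enumerate}
The main obstacle I expect is making this cross-region dependence airtight: the last label of each region reaches one symbol into the next region. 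In particular, $\bm{Y}_{1,m+3}$ uses $\bm{X}_{2,m+3}=3$ rather than $0$. I would resolve this by checking the boundary pairs directly against $\mathcal{S}$. For this pair, AG is not in $\mathcal{S}$, so $\bm{Y}_{1,m+3}=0$ still holds. If needed, I would state that the separator assignment of row $2$, column $m+3$ is overridden by the first parity symbol.

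\textbf{Complexity.} The generalized RLL unranking costs $\mathcal{O}(m^2)\subseteq\mathcal{O}(n^2)$ by Section~\ref{subsec:gen_RLL_complexity}. Computing $\Phi$ takes $\mathcal{O}(n)$. Computing the $t$ VT/SVT checksums takes $\mathcal{O}(tm)=\mathcal{O}(n)$. The mixed-radix conversion handles integers of $\mathcal{O}(\log n)$ bits, at negligible cost. Filling the matrix takes $\mathcal{O}(n)$. Summing these gives $\mathcal{O}(n^2)$.
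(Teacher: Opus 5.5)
Your proposal is correct and follows the same route as the paper. Membership in $\mathcal{C}_{full}$ is reduced to the three defining conditions of $\mathcal{Y}$, discharged by Theorems~\ref{thm:rll_property}, \ref{thm:separator_neq} and \ref{thm:pY_to_Spar}, and the cost is dominated by the $\mathcal{O}(m^2)$ generalized RLL unranking. Your extra locality checks are correct and tighten points the paper glosses over: fixing $\bm{X}_{1,m+1}=1$ before computing the row-$t$ checksum, and noting that $\bm{X}_{2,m+3}$ is the parity symbol $3$, so $\bm{Y}_{1,m+3}$ comes from AG and still equals $0$.
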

	
	\begin{IEEEproof}
		\begin{enumerate}
			\item \textbf{Proof of Correctness ($\bm{x} \in \mathcal{C}_{full}$)}:
			
			By definition, $\mathcal{C}_{full}$ consists of all sequences $\bm{x}$ whose corresponding label matrix $\bm{Y} = \Phi(\mathcal{M}_t(\bm{x}))$ belongs to the target set $\mathcal{Y}$. The set $\mathcal{Y}$ is defined by three critical constraints, which we verify as follows:
			\begin{enumerate}
				\item \textbf{First-Row RLL Constraint}: As established in Theorem \ref{thm:rll_property}, the mapped first row $\bm{Y}_{1, [1:m]}$ satisfies the $L$-RLL constraint.
				\item \textbf{Separator Pattern}: As demonstrated in Theorem \ref{thm:separator_neq}, the assigned values ensure that $\bm{Y}$ satisfies $\bm{Y}_{1, m+1} \neq \bm{Y}_{1, m+2} = \bm{Y}_{1, m+3}$.
				\item \textbf{Parameter Embedding}: As proven in Theorem \ref{thm:pY_to_Spar}, the parameter set $\mathcal{S}_{par}$ can be uniquely extracted from the corresponding label sequence $\bm{p_Y}$. This confirms the existence of the injection $\Psi$ such that $\bm{p_Y} \xrightarrow{\Psi} \mathcal{S}_{par}$.
			\end{enumerate}
			Hence the label matrix of the generated codeword satisfies all three defining conditions of $\mathcal{Y}$, so $\bm{x} \in \mathcal{C}_{full}$.
			
			\item \textbf{Proof of Complexity}:
			
			We analyze the time complexity of each encoding step:
			\begin{itemize}
				\item \textbf{Information Part}: 
				\begin{itemize}
					\item Invoking the generalized RLL encoding algorithm requires $\mathcal{O}(m^2) = \mathcal{O}(n^2)$ time.
					\item Computing the VT and SVT checksums requires scanning the information submatrix (of size $t \times m \le n$). This involves only modular additions and takes $\mathcal{O}(n)$ time.
				\end{itemize}
				\item \textbf{Parity Part}:
				\begin{itemize}
					\item The parameter mapping $\Psi_{int}$ involves mixed-radix conversions for $\mathcal{O}(t)$ parameters. Since $t$ is a constant and the parameter values are bounded by $n$, this computation takes $\mathcal{O}(1)$ time.
					\item Populating the parity bits involves sequentially assigning values to the vector $\bm{p_X}$, taking $\mathcal{O}(n)$ time.
				\end{itemize}
				\item \textbf{Overall}: Aggregating these stages, the total complexity is dominated by the RLL encoder, resulting in an overall complexity of $\mathcal{O}(n^2)$.
			\end{itemize}
		\end{enumerate}
	\end{IEEEproof}
	
	\section{Decoding Algorithm} \label{sec:dec}
	
	We describe the decoder for burst deletions and burst insertions separately. In both cases, decoding has two phases: (i) recover the Information Part of $\bm{X}$ from the corrupted label matrix; (ii) invert the encoding pipeline to recover the user data $\bm{f}$.
	
	\subsection{Handling $t$-Burst Deletions}\label{subsec:burst_del}
	
	Suppose the received labeling sequence suffers a burst deletion of length $t$, denoted as $\bm{y}^{deleted}$. Given that the burst length $t$ satisfies $t \mid n$, we can leverage the matrix mapping $\mathcal{M}_t$ to reshape it into a $t \times (n/t - 1)$ matrix, denoted as $\bm{Y}^{deleted}$:
	\begin{equation*}
		\bm{Y}^{deleted} = \mathcal{M}_{t}(\bm{y}^{deleted}).
	\end{equation*}
	
	Due to the geometric projection of the burst deletion in the matrix domain, each row of $\bm{Y}^{deleted}$ is missing exactly one element compared to the original matrix $\bm{Y}$. The primary task in decoding is to localize the region where the deletion occurred, which determines whether the Information Part requires error correction. As illustrated in Fig. \ref{fig:dec}, we divide the comprehensive decoding process into five sequential stages.
	
	\begin{figure}[htbp]
		\centering
		\includegraphics[width=.8\linewidth]{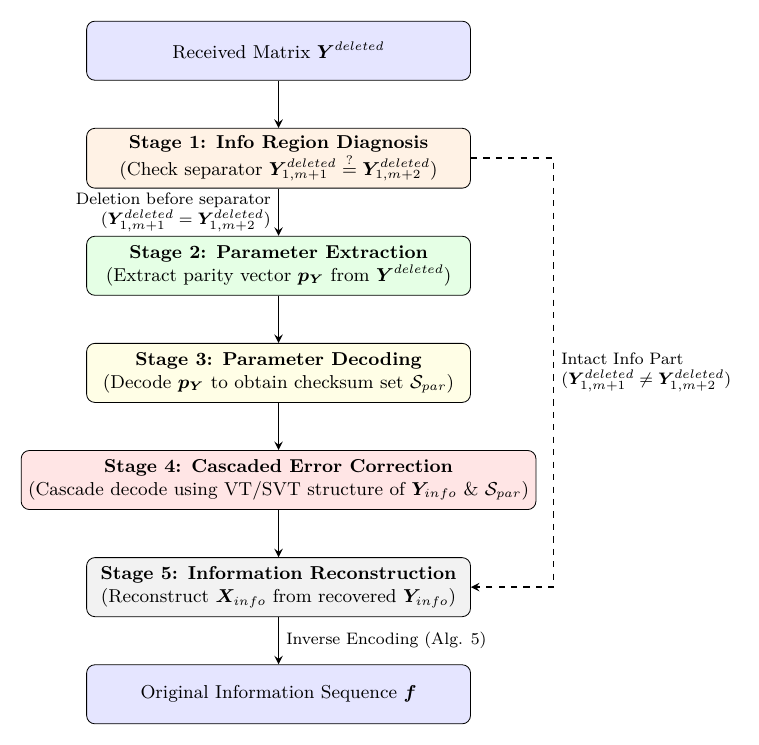}
		\caption{Flowchart of the comprehensive decoding algorithm.}
		\label{fig:dec}
	\end{figure}
	
	\subsubsection{Stage 1: Info Region Diagnosis}
	
	We inspect the elements in $\bm{Y}^{deleted}$ corresponding to the original separator positions by comparing the values of $\bm{Y}^{deleted}_{1, m+1}$ and $\bm{Y}^{deleted}_{1, m+2}$.
	
	\begin{itemize}
		\item \textbf{Case A (Intact Info Part): $\bm{Y}^{deleted}_{1, m+1} \neq \bm{Y}^{deleted}_{1, m+2}$}
		
		\textbf{Analysis}: If the received matrix preserves this inequality, the separator structure remains uncorrupted. This indicates that the entire $t$-burst deletion occurred in the sequence \emph{after} the separator $\bm{Y}_{1, m+1}$. In the matrix domain, the deleted element in each row is strictly located at column index $j \ge m+1$.
		
		\textbf{Action}: Because the Information Part remains fully intact, we directly extract the first $m$ columns of $\bm{Y}^{deleted}$ as our original information matrix: $\bm{Y}_{info} = \bm{Y}^{deleted}_{[1:t], [1:m]}$. We then bypass the error correction stages and proceed directly to \textbf{Stage 5}.
		
		\item \textbf{Case B (Deletion before separator): $\bm{Y}^{deleted}_{1, m+1} = \bm{Y}^{deleted}_{1, m+2}$}
		
		\textbf{Analysis}: If these two elements are equal, the original separator structure is corrupted. This indicates that the entire $t$-burst deletion occurred \emph{before} the separator $\bm{Y}_{1, m+2}$. The deleted element in each row is strictly located at column index $j \le m+1$, inducing a leftward shift of all subsequent elements.
		
		\textbf{Action}: We proceed to \textbf{Stage 2} to initiate the correction protocol.
	\end{itemize}
	
	\subsubsection{Stage 2: Parameter Extraction}
	
	For Case B, the parity part itself is structurally complete but shifted leftward. Therefore, we first extract the shifted first $m-1$ columns, denoting this corrupted information submatrix as $\bm{Y}' = \bm{Y}^{deleted}_{[1:t], [1:m-1]}$. Concurrently, we extract the parity vector $\bm{p_Y}$ from $\bm{Y}^{deleted}$ by reading from the original parity column indices shifted left by one coordinate.
	
	\subsubsection{Stage 3: Parameter Decoding}
	
	Based on Theorem \ref{thm:pY_to_Spar}, we decode the extracted parity vector $\bm{p_Y}$ to recover the exact VT and SVT checksum parameters $\mathcal{S}_{par}$ required for correcting the information part.
	
	\subsubsection{Stage 4: Cascaded Error Correction}
	
	Using the checksum parameters in $\mathcal{S}_{par}$, we correct the single deletion in each row of the corrupted matrix $\bm{Y}'$:
	\begin{itemize}
		\item \textbf{Recovering the First Row}: The first row $\bm{Y}_{1, [1:m]}$ satisfies the $L$-RLL constraint and constitutes an 11-ary VT code governed by parameters $(a_1, b_1) \in \mathcal{S}_{par}$. Employing the standard VT deletion-correcting algorithm~\cite{1984VT_def}, we uniquely recover $\bm{Y}_{1, [1:m]}$ from $\bm{Y}'_{1, [1:m-1]}$. Crucially, the $L$-RLL property restricts the ambiguity of the deletion location $j_1$ to an interval of length at most $L$.
		\item \textbf{Recovering the Remaining Rows}: Based on the geometric projection of a burst deletion, if the deletion in the first row occurred at index $j_1$, the deletion in the $i$-th row ($2 \le i \le t$) must fall within the neighborhood $[j_1-1, j_1+1]$. This bounds the potential deletion location to a window of length $P = L + 2$. Since the $i$-th row is an SVT code parameterized by $(a_i, b_i, c_i) \in \mathcal{S}_{par}$, we invoke the SVT decoding algorithm~\cite{2017q-ary-QVT_def} to recover the original $\bm{Y}_{i, [1:m]}$ from $\bm{Y}'_{i, [1:m-1]}$.
	\end{itemize}
	
	\subsubsection{Stage 5: Information Reconstruction}
	
	Through the preceding stages, we successfully obtain the fully recovered information matrix, $\bm{Y}_{info}$. Because the boundary elements of the info part in $\bm{X}$ were fixed to 3 during encoding ($\bm{X}_{1,1}=3, \bm{X}_{t,m}=3$), we apply the invertibility of the DNA label mapping (Theorem \ref{thm:reconstruction}). Relying on our recovered $\bm{Y}_{info}$ and these boundary conditions, we losslessly reconstruct the Info Part of $\bm{X}$, denoting it as $\bm{X}_{info} = \bm{X}_{[1:t], [1:m]}$.
	
	Once the Info Part $\bm{X}_{[1:t], [1:m]}$ is recovered, reconstructing the original information symbols $\bm{f}$ requires performing the inverse operations of the padding and encoding procedures, as outlined in Algorithm \ref{alg:info_recovery}.
	
	\begin{algorithm}[htbp]
		\caption{Info Symbol Recovery Algorithm}
		\label{alg:info_recovery}
		\begin{algorithmic}[1]
			\State \textbf{Input:} The recovered matrix information block $\bm{X}_{[1:t], [1:m]}$; parameters $m, t, k_{pre}$
			\State \textbf{Output:} The original information sequence $\bm{f}$
			
			\State Initialize an empty sequence $\bm{c} = (c_1, \ldots, c_{m-2}) \in \Sigma_{16}^{m-2}$
			\For{$j = 1$ to $m-2$}
			\State Extract symbols from corresponding columns in rows 1 and 2: $(u_j, v_j) \leftarrow (\bm{X}_{1, j+1}, \bm{X}_{2, j+1})$
			\State Convert the DNA pair to a $\Sigma_{16}$ symbol using mapping $\psi$: $c_j \leftarrow \psi(u_j, v_j)$
			\EndFor
			
			\State Invoke the generalized RLL decoder in Section~\ref{subsec:overall_process}:
			\Statex \quad \textit{Input}: Codeword $\bm{c}$
			\Statex \quad \textit{Output}: $k_{pre}$ information symbols over $\Sigma_{16}$, denoted as $\bm{f}_{part1}$
			\State Convert $\bm{f}_{\text{part1}}$ into $2k_{pre}$ symbols over $\Sigma_{4}$ using the inverse mapping $\psi^{-1}$.
			\State Extract edge symbols: $\bm{f}_{edge1} \leftarrow \bm{X}_{2,1}$, $\bm{f}_{edge2} \leftarrow \bm{X}_{1,m}$
			
			\If{$t > 2$}
			\State Define the target coordinate set $\mathcal{P} = \{\bm{X}_{2,m}\} \cup \bm{X}_{[3:t], [1:m-1]} \cup \bm{X}_{[3:t-1], m}$
			\State Extract $(t-2)m$ symbols from $\mathcal{P}$ adhering to the fixed order used in encoding to yield $\bm{f}_{part2}$.
			\Else
			\State $\bm{f}_{part2} \leftarrow \emptyset$
			\EndIf
			\State $\bm{f} \leftarrow (\bm{f}_{part1}, \bm{f}_{edge1}, \bm{f}_{edge2}, \bm{f}_{part2})$
			
			\State \Return $\bm{f}$
		\end{algorithmic}
	\end{algorithm}
	
	\subsection{Handling $t$-Burst Insertions} \label{subsec:burst_ins}
	
	The decoding procedure for burst insertions is symmetric to that of burst deletions. Suppose the received sequence suffers a $t$-burst insertion, denoted as $\bm{y}^{inserted}$. Using the matrix mapping $\mathcal{M}_t$, we reshape it into a $t \times (n/t + 1)$ matrix, $\bm{Y}^{inserted}$. Due to the geometric projection of the burst insertion, each row of $\bm{Y}^{inserted}$ contains exactly one additional element compared to the original matrix $\bm{Y}$.
	
	Because an insertion induces a rightward shift, we diagnose the error location by examining the right-side equality pattern of the separator, $\bm{Y}_{1, m+2} = \bm{Y}_{1, m+3}$. Mirroring the logic established in Section \ref{subsec:burst_del}, we divide the recovery process into the same five sequential stages.
	
	\subsubsection{Stage 1: Info Region Diagnosis}
	
	We inspect the separator region by comparing $\bm{Y}^{inserted}_{1, m+2}$ and $\bm{Y}^{inserted}_{1, m+3}$:
	\begin{itemize}
		\item \textbf{Case A (Intact Info Part): $\bm{Y}^{inserted}_{1, m+2} = \bm{Y}^{inserted}_{1, m+3}$.}
		
		\textbf{Analysis}: The equality pattern is preserved, confirming that the insertion occurred \emph{after} the Information Part.
		
		\textbf{Action}: We directly extract the first $m$ columns as our information matrix, $\bm{Y}_{info} = \bm{Y}^{inserted}_{[1:t], [1:m]}$, bypass the error correction stages, and proceed to \textbf{Stage 5}.
		
		\item \textbf{Case B (Insertion before separator): $\bm{Y}^{inserted}_{1, m+2} \neq \bm{Y}^{inserted}_{1, m+3}$.}
		
		\textbf{Analysis}: The separator structure is broken, indicating the insertion occurred \emph{prior} to the parity part (i.e., within the Information Part or separator), causing a rightward shift.
		
		\textbf{Action}: We proceed to \textbf{Stage 2} to initiate the correction protocol.
	\end{itemize}
	
	\subsubsection{Stage 2: Parameter Extraction}
	
	For Case B, the information submatrix has expanded. We extract this corrupted region as $\bm{Y}' = \bm{Y}^{inserted}_{[1:t], [1:m+1]}$. To accommodate the rightward shift, we extract the parity vector $\bm{p_Y}$ from the original parity column indices shifted right by one coordinate.
	
	\subsubsection{Stage 3: Parameter Decoding}
	
	Exactly as in the deletion process, we decode the extracted parity vector $\bm{p_Y}$ to recover the exact VT and SVT checksum parameters $\mathcal{S}_{par}$ required for error correction.
	
	\subsubsection{Stage 4: Cascaded Error Correction}
	
	VT and SVT codes can inherently correct both deletions and insertions. Leveraging $\mathcal{S}_{par}$, we correct the single insertion in each row of $\bm{Y}'$:
	\begin{enumerate}
		\item \textbf{Row 1 (VT Decoding):} Using parameters $(a_1, b_1) \in \mathcal{S}_{par}$, we apply the standard VT insertion-correcting algorithm~\cite{1984VT_def} to uniquely recover $\bm{Y}_{1, [1:m]}$. The $L$-RLL property restricts the ambiguity of the insertion location $j_1$ to a narrow interval.
		\item \textbf{Remaining Rows (SVT Decoding):} Due to the geometric projection of the burst, the insertion in the $i$-th row ($2 \le i \le t$) is tightly bounded to the neighborhood $[j_1-1, j_1+1]$. We independently invoke the SVT insertion decoding algorithm~\cite{2017QVT_def} for each row $i$ using parameters $(a_i, b_i, c_i) \in \mathcal{S}_{par}$ to recover $\bm{Y}_{i, [1:m]}$.
	\end{enumerate}
	
	\subsubsection{Stage 5: Information Reconstruction}
	
	As with the deletion scenario, the preceding stages yield the fully recovered information matrix, $\bm{Y}_{info}$. Because the subsequent reconstruction steps are identical, we apply the inverse DNA label mapping (Theorem \ref{thm:reconstruction}) using the known boundary conditions ($\bm{X}_{1,1}=3, \bm{X}_{t,m}=3$) to losslessly reconstruct the Info Part, yielding $\bm{X}_{info} = \bm{X}_{[1:t], [1:m]}$. Finally, we execute Algorithm~\ref{alg:info_recovery} on $\bm{X}_{info}$ to retrieve the original information sequence $\bm{f}$.

\begin{example}[Decoding Process]
Continuing from the encoded matrix $\bm{X}$ and its corresponding label matrix $\bm{Y}$ generated in Example~\ref{ex:encoding}, we demonstrate the cascaded decoding process when the sequence is subjected to a $t$-burst deletion and insertion with $t=2$. 

\begin{itemize}[leftmargin=*]
    \item \textbf{Handling a $t$-Burst Deletion (at index 1):} 
    Suppose a burst deletion of length $t=2$ strikes the labeling sequence within the information region. The received sequence is reshaped into a $2 \times 9$ matrix $\bm{Y}^{deleted}$:
    \begin{equation*}
        \bm{Y}^{deleted} = \begin{bmatrix} 5 & 0 & 7 & 0 & 0 & 8 & 10 & 8 & 0 \\ 3 & 6 & 0 & 0 & 6 & 0 & 10 & 2 & 0 \end{bmatrix}.
    \end{equation*}
    
    \begin{itemize}
        \item \textbf{Stage 1 (Info Region Diagnosis):} We evaluate the shifted separator condition. Checking columns $m+1=4$ and $m+2=5$, we find $\bm{Y}^{deleted}_{1, 4} = 0$ and $\bm{Y}^{deleted}_{1, 5} = 0$. Since $0 = 0$, the original separator pattern ($7 \neq 0 = 0$) is broken, signaling that the deletion occurred before the separator (Case B).
        
        \item \textbf{Stages 2 \& 3 (Parameter Extraction \& Decoding):} Acknowledging the leftward shift, we extract the parity sequence from the shifted parity region. The extracted vector corresponds to the original base-4 parity sequence $\bm{p_X} = (3, 1, 2, 1, 1, 1, 2, 0, 3)$. Decoding this yields the exact checksum parameters $\mathcal{S}_{par} = (2, 5, 0, 1, 1)$, meaning $(a_1,b_1)=(2,5)$ and $(a_2,b_2,c_2)=(0,1,1)$.
        
        \item \textbf{Stage 4 (Cascaded Error Correction):} We isolate the damaged information rows: $\bm{Y}'_{1,[1:2]} = [5, 0]$ and $\bm{Y}'_{2,[1:2]} = [3, 6]$. Using the VT decoder with $(a_1, b_1)$, we perfectly correct the missing label in Row 1 to reconstruct $[5, 0, 0]$, localizing the deletion error index. Propagating this positional bound to the SVT decoder with $(a_2,b_2,c_2)$, Row 2 is corrected to $[3, 3, 6]$.
    \end{itemize}

    \item \textbf{Handling a $t$-Burst Insertion (at index 1):} 
    Alternatively, suppose a burst insertion of length $t=2$ expands the sequence. The received labels are reshaped into a $2 \times 11$ matrix $\bm{Y}^{inserted}$:
    \begin{equation*}
        \bm{Y}^{inserted} = \begin{bmatrix} 5 & 4 & 0 & 0 & 7 & 0 & 0 & 8 & 10 & 8 & 0 \\ 8 & 3 & 3 & 6 & 0 & 0 & 6 & 0 & 10 & 2 & 0 \end{bmatrix}.
    \end{equation*}

    \begin{itemize}
        \item \textbf{Stage 1 (Info Region Diagnosis):} We check the insertion separator condition at columns $m+2=5$ and $m+3=6$. Here, $\bm{Y}^{inserted}_{1, 5} = 7$ and $\bm{Y}^{inserted}_{1, 6} = 0$. Since $7 \neq 0$, the equality pattern is destroyed (Case B), confirming the insertion occurred within the information region causing a rightward shift.
        
        \item \textbf{Stages 2, 3 \& 4 (Extraction \& Correction):} Similar to the deletion case, the shifted parity vector is extracted and decoded to $\mathcal{S}_{par} = (2, 5, 0, 1, 1)$. We isolate the corrupted info rows which now contain an extra symbol: $\bm{Y}'_{1,[1:4]} = [5, 4, 0, 0]$ and $\bm{Y}'_{2,[1:4]} = [8, 3, 3, 6]$. The VT and SVT insertion decoders identify the erroneous inserted labels (`4` in Row 1, `8` in Row 2) and remove them, successfully restoring the corrected rows $[5, 0, 0]$ and $[3, 3, 6]$.
    \end{itemize}
    
    \item \textbf{Stage 5 (Information Reconstruction):} 
    Both correction scenarios successfully restore the $2 \times 3$ information submatrix $\bm{Y}_{info}$ and its corresponding DNA matrix $\bm{X}_{info}$:
    \begin{equation*}
        \bm{X}_{info} = \begin{bmatrix} 3 & 0 & 0 \\ 3 & 3 & 3 \end{bmatrix}.
    \end{equation*}
    Following Algorithm~\ref{alg:info_recovery}, we extract the internal column pairs ($j=2$). The DNA pair $(u_1, v_1) = (0, 3)$ is mapped to the generalized RLL symbol $c_1 = 3$. Passing $\bm{c}=(3)$ through the generalized RLL decoder recovers the raw hex sequence $(2)$, which translates to $\bm{f}_{part1} = (0, 2)$. Finally, extracting the boundary symbols $f_{sep1} = \bm{X}_{2,1} = 3$ and $f_{sep2} = \bm{X}_{1,m} = 0$, we perfectly reconstruct the original quaternary sequence $\bm{f} = (0, 2, 3, 0)$.
\end{itemize}
\end{example}

	\subsection{Correctness and Complexity of the Decoder}
	
	We now verify that the decoder recovers $\bm{f}$ exactly for any single burst of length $t$, and bound its complexity.
	
	\begin{theorem}[Correctness and Complexity]
		The code generated by the encoding algorithm in Section~\ref{sec:enc} is a valid burst $t$-deletion/insertion $\mathcal{A}$-labeling code. Specifically, when subjected to a burst indel of length $t$, the proposed decoding algorithm successfully recovers the original information sequence $\bm{f}$. Furthermore, the  complexity of this decoding algorithm is $\mathcal{O}(n^2)$.
	\end{theorem}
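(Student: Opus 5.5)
The plan is to follow the five decoding stages and show that each is correct for every possible burst location. The deletion and insertion cases are handled in parallel. The final step is a complexity count.

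\textbf{Step 1: reduction to the matrix error model.} I first make precise the geometric projection claimed in the matrix-domain error model. Suppose a burst of $t$ consecutive deletions starts at label index $s=(j_0-1)t+i_0$. Then row $i$ of $\bm{Y}^{deleted}$ equals row $i$ of $\bm{Y}$ with a single entry removed. That entry is at column $j_0$ for $i\ge i_0$ and at column $j_0+1$ for $i<i_0$. The same holds for insertions, with one extra entry per row. Entries of $\bm{Y}$ strictly before the burst keep their coordinates, and entries strictly after it shift by exactly one column. This gives two facts used below: the perturbed column indices in consecutive rows differ by at most one, and so every row's error lies within $[j_1-1,j_1+1]$ of the row-1 error column $j_1$.

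\textbf{Step 2: correctness of Stage 1 diagnosis.} I split into cases on the burst position relative to the separator.
\begin{itemize}
\item If the row-1 deletion column is $\ge m+2$, the entries $\bm{Y}_{1,m+1}$ and $\bm{Y}_{1,m+2}$ are unaffected, so the inequality of Theorem~\ref{thm:separator_neq} persists. The first $m$ columns are then untouched, possibly apart from low rows near column $m+1$, which is harmless since $m+1>m$.
\item If the row-1 deletion column is $\le m+1$, the received pair at columns $(m+1,m+2)$ is the original pair $(\bm{Y}_{1,m+2},\bm{Y}_{1,m+3})$, which is equal.
\end{itemize}
The borderline case I must check carefully is a row-1 deletion at column $m+1$ itself. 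There row 1 reads $(0,0)$ and we are in Case~B, while rows below may have their deletion at column $m+1$ or $m+2$. In either event the first $m$ columns of every row carry at most one deletion, and that deletion is at the end of the info segment, which VT/SVT handle. The insertion analysis is symmetric, using $\bm{Y}_{1,m+2}=\bm{Y}_{1,m+3}$. Here I have to check that an inserted symbol at column $m+1$ or $m+2$ breaks equality. This is the main obstacle: an adversarial insertion of $0$ could preserve $0=0$. I would argue that in that situation the info part of row 1 is still intact, and that the lower rows' insertions occur at columns $\ge m+1$, so extracting the first $m$ columns in Case~A is still correct. Thus in every configuration the Case~A output is correct and Case~B has all damage confined to columns $\le m+1$.

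\textbf{Step 3: Stages 2--4.} In Case~B every entry from the parity region is shifted uniformly by one column, because the burst precedes column $m+2$. The extracted $\bm{p_Y}$ is therefore exactly the original, and Theorem~\ref{thm:pY_to_Spar} returns $\mathcal{S}_{par}$. Row 1 of $\bm{Y}'$ is a single deletion/insertion of a word in $\mathrm{VT}_{a_1,b_1}(m,11)$, so it is recovered uniquely. By the $L$-RLL property of Theorem~\ref{thm:rll_property}, the true error column lies within a run of length at most $L$, so it is known to lie in an interval of length $L$. Combined with Step~1, each row $i\ge2$ has its error in a known window of length $L+2=P$. The SVT theorem then recovers $\bm{Y}_{i,[1:m]}$. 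Stage~5 uses the fixed boundaries $\bm{X}_{1,1}=\bm{X}_{t,m}=3$ and Theorem~\ref{thm:reconstruction} to invert $\bm{Y}_{info}$. Since the encoding pipeline (Algorithm~\ref{alg:encode_info} and the bijective unranking) is injective, Algorithm~\ref{alg:info_recovery} returns $\bm{f}$. Unique decodability of every corrupted word implies that the balls of distinct codewords are disjoint, which is exactly Definition~\ref{def:burst_code}.

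\textbf{Step 4: complexity.} The diagnosis costs $\mathcal{O}(1)$ and reshaping costs $\mathcal{O}(n)$. Parity decoding is $\mathcal{O}(n)$, and there are $t$ VT/SVT decodings at $\mathcal{O}(m)$ each. Label inversion is $\mathcal{O}(n)$. The dominant term is the generalized RLL ranking, which is $\mathcal{O}(m^2)=\mathcal{O}(n^2)$ by Section~\ref{subsec:gen_RLL_complexity}.
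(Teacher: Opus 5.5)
Your overall route is the paper's: verify the decoding stages one by one, then tally costs, with the generalized RLL ranking dominating. The paper itself only asserts correctness from the stage descriptions, so your projection lemma and your deletion-side Stage~1 analysis usefully fill that in, and they are correct. The gap is in the insertion half of Step~2, and Step~3 depends on it.

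Let $e$ be the label inserted into row~1 and $j_1$ its received column. Since $\bm{Y}_{1,[m+1:m+3]}=(7,0,0)$, the pair examined in Stage~1 is:
\begin{itemize}
\item $(7,0)$ if $j_1\le m+1$;
\item $(e,0)$ if $j_1=m+2$;
\item $(0,e)$ if $j_1=m+3$;
\item $(0,0)$ if $j_1\ge m+4$.
\end{itemize}
So the delicate columns are $m+2$ and $m+3$, not $m+1$ and $m+2$. Whenever $e\neq 0$ lands at column $m+2$ or $m+3$, the decoder enters Case~B even though the burst is not confined to columns $\le m+1$. Your claim that Case~B has all damage in columns $\le m+1$ is false in exactly these configurations. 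So is the Step~3 justification that $\bm{p_Y}$ is uniformly shifted ``because the burst precedes column $m+2$''. Stages~2--4 are therefore unverified there. Notably, the paper's own description of insertion Case~B (``within the Information Part or separator'') states the correct, weaker invariant.

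The decoder still succeeds in these cases, but it needs that invariant. In Case~B every row's inserted entry lies at column $\le m+3$: row~1 at $j_1\le m+3$, and rows $i\ge2$ at $j_1$ or $j_1-1$. The parity part begins at column $m+3$ in rows $2,\dots,t$ and at column $m+4$ in row~1, so the one-column shift of $\bm{p_Y}$ is still uniform. For any row whose insertion lies at column $\ge m+2$, the first $m+1$ received columns equal $(\bm{Y}_{i,[1:m]},\bm{Y}_{i,m+1})$. That is a single insertion at the tail, which VT corrects in row~1. It also lies in the SVT window for rows $i\ge2$, because row~1's ambiguity run then contains position $m+1$.

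Relatedly, your window count is off by one for insertions. The run of row~1's received word containing the inserted symbol can have length $L+1$, for example when a copy of a symbol is inserted into a run of length $L$. Combined with your symmetric neighbourhood $[j_1-1,j_1+1]$, this gives a window of length $L+3>P$. Instead use the sharper fact your Step~1 already proves: rows $i\ge 2$ err at column $j_1$ or $j_1-1$, never $j_1+1$. That yields windows of length at most $L+2=P$ for both deletions and insertions.
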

	
	\begin{IEEEproof}
		The correctness of the algorithm is guaranteed by the deterministic nature of the decoding phases detailed above. The complexity is analyzed as follows:

		The primary computational overheads of the decoding process are:
		\begin{itemize}
			\item \textbf{Matrix Reshaping and Scanning}: Converting the 1D sequence into a 2D matrix and verifying the separator conditions requires a linear scan, executing in $\mathcal{O}(n)$ time.
			\item \textbf{Parameter Recovery}: Extracting parameters from $\bm{p_Y}$ involves a linear traversal and fundamental arithmetic operations (mixed-radix decoding), running in $\mathcal{O}(n)$ time.
			\item \textbf{VT/SVT Decoding}: The error correction algorithm for each individual row operates in linear time relative to the row length. With $t$ rows, each of length strictly less than $n/t + 1$, the cumulative overhead is bounded by $t \times \mathcal{O}(n/t) = \mathcal{O}(n)$.
			\item \textbf{Generalized RLL Decoding}: As the final step, mapping the constrained codeword $\bm{c}$ back to its integer rank constitutes the most computationally intensive operation. Based on the analysis in Section \ref{subsec:gen_RLL_complexity}, this step requires $\mathcal{O}(n^2)$ time.
		\end{itemize}
		Consequently, aggregating these phases reveals that the overall complexity of the decoding algorithm is dominated by the RLL decoder, resulting in $\mathcal{O}(n^2)$.
	\end{IEEEproof}
	
	\section{Redundancy Analysis and Asymptotic Optimality} \label{sec:redundancy}
	
	This section computes the redundancy of the proposed code and compares it to the lower bound of Theorem~\ref{thm:theoretical_bound_new}. We split the redundancy into two sources—the generalized RLL constraint on $\bm{X}$ and the matrix parity overhead—and bound them separately.
	
	With $r = n-k$, $n$ the code length over $\Sigma_4$, and $k$ as in Eq.~\eqref{eq:k_def}, we write:
	\begin{align}
		r &= n - k \nonumber \\
		&= n - (2k_{pre} + (t-2)m + 2) \nonumber \\
		&= \left( 2(m-2) - 2k_{pre} \right) + \left( n - tm + 2 \right) \nonumber \\
		&\triangleq r_1 + r_2. \label{eq:r_r1_r2}
	\end{align}
	Here:
	\begin{itemize}
		\item $r_1 = 2(m-2) - 2k_{pre}$ denotes the redundancy introduced by the generalized RLL encoding;
		\item $r_2 = n - tm + 2$ denotes the structural redundancy introduced by the Separator and Parity components within the matrix framework.
	\end{itemize}
	
	\subsection{Analysis of Generalized RLL Redundancy ($r_1$)}\label{subsec:r1_cal}
	
	We first evaluate $r_1$. From the definition of $k_{pre}$ (Eq.~\eqref{eq:kpre_def}), we have:
	\begin{equation*}
		k_{pre} \ge \log_{16} |\mathcal{C}(m-2, L-1)| - 1,
	\end{equation*}
	where $L = \lceil \log_{8/3} m \rceil$. According to Theorem \ref{thm:genRLL_redundancy}, as $m \to \infty$, the asymptotic code rate of the generalized RLL code is:
	\begin{equation*}
		r_{\mathcal{C}(m-2, L-1)} = \lim_{m \to \infty} \frac{\log_{16} |\mathcal{C}(m-2, L-1)|}{m-2} = \log_{16} \lambda,
	\end{equation*}
	where $\lambda$ is the largest positive real root of the characteristic equation:
	\begin{equation} \label{eq:charactize_eq}
		10 B\left(\frac{1}{\lambda}\right) A\left(\frac{1}{\lambda}\right) + 9 B\left(\frac{1}{\lambda}\right) = 1.
	\end{equation}
	Here, the auxiliary polynomials are defined as $A(x) = \sum_{i=1}^{L-1} 6^i x^i$ and $B(x) = \sum_{i=1}^{L-1} x^i$.
	
	Consequently, as $n \to \infty$ (and thus $m \to \infty$), the upper bound for $r_1$ is given by:
	\begin{align}
		r_1 &= 2(m-2) - 2k_{pre} \nonumber \\
		&\le 2(m-2) - 2\left( \log_{16} |\mathcal{C}(m-2, L-1)| - 1 \right) \nonumber \\
		&= 2(m-2) \left( 1 - \frac{\log_{16} |\mathcal{C}(m-2, L-1)|}{m-2} \right) + 2 \nonumber \\
		&= 2(m-2) \left( 1 - \log_{16} \lambda \right) + 2 \nonumber \\
		&= 2(m-2) \log_{16} \frac{16}{\lambda} + 2. \label{eq:r1_ineq}
	\end{align}

    Next, we estimate $\lambda$. Evaluating the geometric series for $A(x)$ and $B(x)$ yields:
	\begin{align*}
		A\left(\frac{1}{\lambda}\right) &= \frac{6}{\lambda-6} \left( 1 - \left(\frac{6}{\lambda}\right)^{L-1} \right), \\
		B\left(\frac{1}{\lambda}\right) &= \frac{1}{\lambda-1} \left( 1 - \left(\frac{1}{\lambda}\right)^{L-1} \right).
	\end{align*}
	Substituting these into the characteristic Eq.~\eqref{eq:charactize_eq} and rearranging:
	\begin{equation*}
		\frac{1 - \lambda^{-(L-1)}}{\lambda - 1} \left( 60 \frac{1 - (6/\lambda)^{L-1}}{\lambda - 6} + 9 \right) = 1.
	\end{equation*}
	As $n \to \infty$, it follows that $L \to \infty$, and one can verify that $\lambda \to 16$. Let $\lambda = 16 - \delta$, where $\delta = o(1)$. Using the asymptotic expansions for $L \to \infty$, we have $\lambda^{-(L-1)} = \mathcal{O}(16^{-L})$ and $(6/\lambda)^{L-1} = (3/8)^{L-1}(1 + o(1))$. The characteristic equation can be rewritten  as:
	\begin{align*}
		15 - \delta &= (1 - \lambda^{-(L-1)}) \left( \frac{60(1 - (6/\lambda)^{L-1})}{10 - \delta} + 9 \right).
	\end{align*}
	By expanding the terms involving $\delta$ with $(10 - \delta)^{-1} = \frac{1}{10}(1 + \frac{\delta}{10} + \mathcal{O}(\delta^2))$, we obtain:
	\begin{align*}
		15 - \delta &= (1 - \mathcal{O}(16^{-L})) 
		\times \Biggl[ 6 \left(1 - \left(\frac{3}{8}\right)^{L-1} \!\!+ o\big((3/8)^L\big)\right) 
		\times \left(1 + \frac{\delta}{10} + \mathcal{O}(\delta^2)\right) + 9 \Biggr] \\
		&= (1 - \mathcal{O}(16^{-L}))  \times \Biggl( 15 + \frac{3}{5}\delta - 6\left(\frac{3}{8}\right)^{L-1}  + o\big((3/8)^L\big) + \mathcal{O}(\delta^2) \Biggr).
	\end{align*}
	Since $\mathcal{O}(16^{-L}) = o((3/8)^L)$, multiplying out the right-hand side yields:
	\begin{equation*}
		15 - \delta = 15 + \frac{3}{5}\delta - 6\left(\frac{3}{8}\right)^{L-1} + o\big((3/8)^L\big) + \mathcal{O}(\delta^2).
	\end{equation*}
	Canceling $15$ from both sides and rearranging gives:
	\begin{equation*}
		\frac{8}{5}\delta = 6\left(\frac{3}{8}\right)^{L-1} + o\big((3/8)^L\big) + \mathcal{O}(\delta^2).
	\end{equation*}
	This establishes that $\delta = \mathcal{O}((3/8)^L)$, which implies $\mathcal{O}(\delta^2) = o((3/8)^L)$. Thus, we obtain the rigorous asymptotic leading term for $\delta$:
	\begin{equation*}
		\delta = \frac{15}{4} \left( \frac{8}{3} \right)^{1-L} (1 + o(1)).
	\end{equation*}
	Substituting $\lambda = 16 - \delta$ into Eq.~\eqref{eq:r1_ineq} and applying the Taylor expansion $\ln(1 - x) = -x + \mathcal{O}(x^2)$ for $x \to 0$:
	\begin{align*}
		r_1 &\le 2(m-2) \frac{\ln(16 / (16-\delta))}{\ln 16} + 2 \nonumber \\
		&= -\frac{2(m-2)}{\ln 16} \ln \left( 1 - \frac{\delta}{16} \right) + 2 \nonumber \\
		&= \frac{2(m-2)}{\ln 16} \left( \frac{\delta}{16} + \mathcal{O}(\delta^2) \right) + 2 \nonumber \\
		&= \frac{m-2}{8 \ln 16} \left[ \frac{15}{4} \left( \frac{8}{3} \right)^{1-L} (1 + o(1)) \right] + 2.
	\end{align*}
	Since $L = \lceil \log_{8/3} m \rceil$, we have $(8/3)^{1-L} \le (8/3) \cdot (8/3)^{-\log_{8/3} m} = \frac{8}{3m}$. Substituting this upper bound into the inequality:
	\begin{align*}
		r_1 &\le \frac{m}{8 \ln 16} \cdot \frac{15}{4} \cdot \frac{8}{3m} (1 + o(1)) + 2 \\
		&= \frac{5}{4 \ln 16} + 2 + o(1) = \mathcal{O}(1).
	\end{align*}
	This indicates that the redundancy introduced by the generalized RLL encoding is bounded by a constant, independent of the code length $n$.

	\subsection{Analysis of Structural and Parity Redundancy ($r_2$)}\label{subsec:r2_cal}
	
	Next, we evaluate $r_2 = n - tm + 2$. This requires estimating the parameter $m$. By definition (Eq.~\eqref{eq:m_constraint}), $m$ is the maximum integer satisfying the constraint. Therefore, $m' \triangleq m+1$ violates the constraint, yielding:
	\begin{equation} \label{eq:m_prime_ineq}
		\left\lceil \frac{r_d' - (t-1)}{t} \right\rceil + m' + 3 > \frac{n}{t},
	\end{equation}
	where $r_d'$ is the parity length corresponding to the information bit length $m'$:
	\begin{align*}
		r_d' = 2 + \Biggl\lceil  t \log_4 22 - \log_4 2 + \log_4 m' + (t-1) \log_4(P'+1) \Biggr\rceil,
	\end{align*}
	and $P' = \lceil \log_{8/3} m' \rceil + 2$.
	
	First, we bound $r_d'$. Since $m \le n/t -3$, it follows that $m' < n/t$, giving:
	\begin{align}
		r_d' &\le 3 + t \log_4 22 + \log_4 \frac{n}{t} + (t-1) \log_4 \left( \log_{8/3} \frac{n}{t} + 4 \right) \nonumber \\
		&= \log_4 n + (t-1) \log_4 (\log_{8/3} n) + \mathcal{O}(1). \label{eq:rd_prime_est}
	\end{align}
	Rearranging Eq.~\eqref{eq:m_prime_ineq}:
	\begin{equation*}
		\frac{r_d' - (t-1)}{t} + m + 5 > \frac{n}{t} \implies m > \frac{n-1}{t} - \frac{r_d'}{t} - 4.
	\end{equation*}
	Substituting this into the definition of $r_2$:
	\begin{align*}
		r_2 &= n - t m + 2  \\
		&< n - t \left( \frac{n-1}{t} - \frac{r_d'}{t} - 4 \right) + 2  \\
		&= r_d' + 4t + 3.
	\end{align*}
	Substituting Eq.~\eqref{eq:rd_prime_est} into the above yields:
	\begin{equation*}
		r_2 < \log_4 n + (t-1) \log_4 (\log_{8/3} n) + \mathcal{O}(1).
	\end{equation*}
	
	\subsection{Total Redundancy and Asymptotic Optimality} \label{subsec:redundancy_optimality}
	
	Based on the preceding analysis of $r_1$ and $r_2$, we establish the total redundancy of the proposed scheme and evaluate its optimality against the theoretical fundamental limits.
	
	\begin{theorem}\label{thm:cons_redundancy}
		For a code length $n$ and a burst deletion/insertion length $t$ (satisfying $t \mid n$, $n \ge 7t + 3$, and $t \ge 2$), the  redundancy $r(n, t)$ of the proposed burst $t$-deletion/insertion $\mathcal{A}$-labeling code satisfies:
		\begin{equation*}
			r(n, t) = \log_4 n + (t-1) \log_4 (\log_{8/3} n) + \mathcal{O}(1).
		\end{equation*}
	\end{theorem}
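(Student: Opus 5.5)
The plan is to combine the decomposition $r = r_1 + r_2$ of Eq.~\eqref{eq:r_r1_r2} with the two estimates already obtained in Sections~\ref{subsec:r1_cal} and~\ref{subsec:r2_cal}. This gives an upper bound of the stated form. For the matching lower bound, I will show that $r_1 \ge 0$ and that $r_2 \ge r_d$, and then verify that $r_d$ has the same leading terms. Since the statement is an equality up to $\mathcal{O}(1)$, both directions are needed.

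\textbf{Upper bound.} By Section~\ref{subsec:r1_cal}, $r_1 \le \frac{5}{4\ln 16} + 2 + o(1) = \mathcal{O}(1)$. This uses the asymptotic root estimate $\lambda = 16 - \delta$ with $\delta = \Theta((3/8)^{L})$ and the choice $L = \lceil \log_{8/3} m\rceil$. By Section~\ref{subsec:r2_cal}, $r_2 < r_d' + 4t + 3$, and Eq.~\eqref{eq:rd_prime_est} bounds $r_d'$ by $\log_4 n + (t-1)\log_4\log_{8/3} n + \mathcal{O}(1)$. Since $t$ is fixed, $4t+3$ is absorbed into the constant. One point should be made explicit: $m\to\infty$ as $n\to\infty$, because $m \ge n/t - r_d'/t - \mathcal{O}(1)$ and $r_d' = \mathcal{O}(\log n)$. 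This justifies applying the asymptotic results of Theorem~\ref{thm:genRLL_redundancy}.

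\textbf{Lower bound.} First, $r_1 = 2(m-2) - 2k_{pre} \ge 0$, because $16^{k_{pre}} \le |\mathcal{C}(m-2,L-1)| \le 16^{m-2}$. Second, Eq.~\eqref{eq:m_constraint} gives $t\lceil (r_d-(t-1))/t\rceil + tm + 3t \le n$, and hence
\[
r_2 = n - tm + 2 \ge r_d - (t-1) + 3t + 2 \ge r_d.
\]
Third, from the definition of $r_d$ we have $r_d \ge \log_4 m + (t-1)\log_4(P+1) + \mathcal{O}(1)$, where $P+1 > \log_{8/3} m$. Since $m \ge n/t - \mathcal{O}(\log n)$, we get $\log_4 m = \log_4 n - \log_4 t + o(1)$ and $\log_4\log_{8/3} m = \log_4\log_{8/3} n + o(1)$.

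The main obstacle is the bookkeeping: checking that all hidden constants, namely $t\log_4 22$, the ceilings, $\log_4 t$, and the shifts $+2$ and $+4$ inside $\log_{8/3}(\cdot)$, are genuinely $\mathcal{O}(1)$ for fixed $t$. For the $\log\log$ comparison, I would write $\log_{8/3}(n/t + 4) = \log_{8/3} n\,(1+o(1))$ and take $\log_4$, which gives an additive $o(1)$. Combining the two directions yields $r(n,t) = \log_4 n + (t-1)\log_4\log_{8/3} n + \mathcal{O}(1)$.
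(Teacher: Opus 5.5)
Your upper-bound half is exactly the paper's proof. The paper combines $r=r_1+r_2$ with $r_1=\mathcal{O}(1)$ from Section~\ref{subsec:r1_cal} and $r_2<r_d'+4t+3$ from Section~\ref{subsec:r2_cal}, and then simply asserts the equality. Strictly, that argument only yields $r(n,t)\le \log_4 n+(t-1)\log_4\log_{8/3}n+\mathcal{O}(1)$, because the paper never proves the reverse inequality.

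Your lower-bound half supplies what the stated ``$=$'' actually requires, and it is correct:
\begin{itemize}
\item $r_1\ge 0$, since $k_{pre}\le m-2$.
\item $r_2\ge r_d+2t+3$, from Eq.~\eqref{eq:m_constraint} via $t\lceil x/t\rceil\ge x$.
\item $r_d\ge \log_4 m+(t-1)\log_4\log_{8/3}m+\mathcal{O}(1)$, using $P+1>\log_{8/3}m$, together with $m=(n/t)(1-o(1))$.
\end{itemize}

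Theorem~\ref{thm:theoretical_bound_new} could not replace this step. It only gives $\log_4 n+\mathcal{O}(1)$ and cannot produce the $(t-1)\log_4\log_{8/3}n$ term. Your direct bound $r\ge r_d$ shows that this term is a genuine cost of the construction, not an artifact of a loose upper estimate.

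One small caveat concerns your justification for the upper bound. The fact that $m\to\infty$ does not by itself license Theorem~\ref{thm:genRLL_redundancy}. That theorem is an $n\to\infty$ limit for a fixed $\ell$, whereas here $\ell=L-1$ grows with $m$. So $r_1=\mathcal{O}(1)$ really needs an estimate of $\log_{16}|\mathcal{C}(m-2,L-1)|$ that is uniform in $\ell$. Section~\ref{subsec:r1_cal} glosses over the same point, so you inherited this rather than introduced it. It also does not affect your lower bound, which uses only the trivial $r_1\ge 0$.
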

	
	\begin{IEEEproof}
		From Eq.~\eqref{eq:r_r1_r2}, the total redundancy is $r = r_1 + r_2$. Section~\ref{subsec:r1_cal} shows that $r_1 = \mathcal{O}(1)$, and Section~\ref{subsec:r2_cal} establishes that $r_2 \le \log_4 n + (t-1) \log_4 (\log_{8/3} n) + \mathcal{O}(1)$. Hence the dominant term of $r(n,t)$ is $\log_4 n$, the sub-dominant term is $(t-1) \log_4 \log_{8/3} n$, and all remaining constants are absorbed into the $\mathcal{O}(1)$ term.
	\end{IEEEproof}
	
	Comparing with Theorem~\ref{thm:theoretical_bound_new}, which gives $r_{\text{lower}} = \log_4 n + \mathcal{O}(1)$, we see that the leading term $\log_4 n$ of $r(n,t)$ matches the lower bound. The gap $(t-1)\log_4\log_{8/3} n = \mathcal{O}(\log\log n)$ grows extremely slowly, so the proposed scheme is asymptotically optimal in $n$ for every fixed $t\ge 2$.
	\section{Conclusion} \label{sec:conclusion}
	
	We have studied burst deletion and insertion errors in label-based DNA data retrieval over the minimum-cardinality length-two label set $\mathcal{S}$. We formally introduced burst $t$-deletion/insertion $\mathcal{A}$-labeling codes and proved a redundancy lower bound of $\log_4 n + \mathcal{O}(1)$ for all $t\ge 1$ with $t\mid n$, settling a problem that was previously open even for $t=1$. We also gave an explicit construction whose encoder and decoder run in $\mathcal{O}(n^2)$ time and achieve redundancy $\log_4 n + (t-1)\log_4\log_{8/3} n + \mathcal{O}(1)$, matching the lower bound up to an $\mathcal{O}(\log\log n)$ gap.
	
	Several questions remain open. The most natural is whether the $(t-1)\log_4\log_{8/3} n$ overhead can be removed, yielding an exactly optimal construction. Beyond this, extending the framework to (i) multiple bursts, (ii) substitution--indel mixed bursts in the label domain, and (iii) other minimal label families would broaden its practical scope.
	
	\bibliographystyle{IEEEtran}
	\bibliography{ref}

    \appendices
    
    \section{Generalization of the Fundamental Limit to Arbitrary Constant Label Lengths}
\label{sec:appendix_generalization}

In this appendix, we generalize the fundamental information-theoretic lower bound on the redundancy of burst $t$-deletion $\mathcal{A}$-labeling codes (Theorem 4) to accommodate any label set $\mathcal{A}$ with a constant label length $l \ge 2$.

\begin{theorem}
Let $t \ge 1$ and $l \ge 2$ be fixed integers. Let $\mathcal{A}$ be a label set consisting of length-$l$ labels such that the original sequence can be uniquely reconstructed from its labeling sequence given the boundary bases. For $t \mid n$, the redundancy of any burst $t$-deletion $\mathcal{A}$-labeling code $\mathcal{C} \subseteq \Sigma_4^n$ satisfies
\begin{equation*}
    r(\mathcal{C}) \ge \log_4 n + \mathcal{O}(1).
\end{equation*}
\end{theorem}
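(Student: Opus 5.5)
We mirror the five-step sphere-packing proof of Theorem~\ref{thm:theoretical_bound_new}. The only change is in the label-preservation step: a window of length $l$ now couples $l$ consecutive bases instead of two.

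\emph{Step 1: packing over the valid image.} Let $L'_{\mathcal{A}}$ be the labeling map on $\Sigma_4^{n-t}$ and set $\mathcal{R}_{\text{valid}} \triangleq L'_{\mathcal{A}}(\Sigma_4^{n-t})$. Then $|\mathcal{R}_{\text{valid}}| \le 4^{n-t}$. The effective balls $\mathcal{B}_{\text{valid}}(\bm{x}) = \mathcal{B}L^{\text{del}}_{\mathcal{A}}(\bm{x},t)\cap\mathcal{R}_{\text{valid}}$ are pairwise disjoint, so $\sum_{\bm{x}\in\mathcal{C}}|\mathcal{B}_{\text{valid}}(\bm{x})| \le 4^{n-t}$, exactly as in Eq.~\eqref{eq:sphere_packing_valid}.

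\emph{Step 2: a run-type lower bound on the ball.} For $t\ge l-1$, a deletion of $\bm{x}_{[k:k+t-1]}$ leaves every label unchanged iff $\bm{x}_{[k:k+l-2]} = \bm{x}_{[k+t:k+t+l-2]}$. Indeed, then every length-$l$ window straddling the cut reads the same bases as the corresponding window of $\bm{x}$ starting at $k+t$ or earlier. For $t < l-1$ the same condition is required, but now the two blocks overlap, and it amounts to $t$-periodicity of $\bm{x}_{[k:k+t+l-2]}$.

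To keep counting simple, we work with blocks of length $b \triangleq \mathrm{lcm}$-free choice $b = t\lceil (l-1)/t\rceil$ (a multiple of $t$ that is at least $l-1$). Let $w = n/t$ and consider the block sequence $\bm{u}\in(\Sigma_4^{l-1})^{w'}$ with $w' \approx w$, where $u_s = \bm{x}_{[1+(s-1)t:\,(s-1)t+l-1]}$. Its symbols live in the alphabet $\Sigma_4^{l-1}$ of size $4^{l-1}$. A ``repeat'' $u_s = u_{s+1}$ gives a label-preserving deletion at $k = 1+(s-1)t$.

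We exclude indices near the ends, within a constant number $c_l$ of them, so that the first and last $l-1$ bases are untouched. The reconstruction hypothesis on $\mathcal{A}$ then plays the role of Theorem~\ref{thm:reconstruction}. Injectivity across distinct maximal runs of repeats follows as in part (ii) of the proof of Theorem~\ref{thm:theoretical_bound_new}: a value change $u_m\neq u_{m+1}$ between the two start indices yields a coordinate where the two deleted strands differ. Hence
\[
|\mathcal{B}_{\text{valid}}(\bm{x})| \ge R_{\ge2}(\bm{u}) - c_l .
\]

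\emph{Step 3: few sequences with few repeats.} This is the main obstacle, because the block sequence $\bm{u}$ is not free. Consecutive blocks overlap when $t<l-1$, so we cannot simply treat $\bm{u}$ as an arbitrary string over $\Sigma_4^{l-1}$.

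We would avoid the issue by subsampling. Keep only the blocks $s \equiv 1 \pmod{d}$, where $d = \lceil (l-1)/t\rceil + 1$; these are disjoint stretches of $\bm{x}$. Count a pair $(s,s+1)$ as a repeat only when both belong to a chosen disjoint family of pairs. Concretely, pair up indices $s$ and $s+1$ using disjoint supports of length $t+l-1$, spaced by $D = \lceil (t+l-1)/t\rceil+1$ columns. This gives about $w/D$ independent trials. Each trial is a ``match'' with probability exactly $4^{-(l-1)}$ under the uniform measure, and these events are independent because they depend on disjoint base sets.

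By a Chernoff bound (or the entropy bound~\cite[Theorem~11.1.3]{Cover2005}), the number of $\bm{x}\in\Sigma_4^n$ with fewer than $\tfrac12 4^{-(l-1)} w/D$ matches is at most $4^n e^{-\Omega(w)} = o(4^n/n^2)$. This replaces the explicit run enumeration of Step 3 in Theorem~\ref{thm:theoretical_bound_new} and sidesteps the overlap dependence. Matches in the family are also distinct runs, or else can be thinned by a further constant factor, so $R_{\ge2}$ is at least a constant times $n$ for all remaining $\bm{x}$.

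\emph{Steps 4--5: conclusion.} Split $\mathcal{C}$ into $\mathcal{C}_{\text{low}}$ and $\mathcal{C}_{\text{high}}$ by this threshold. We have $|\mathcal{C}_{\text{low}}| = o(4^n/n^2)$. For $\bm{x}\in\mathcal{C}_{\text{high}}$, the ball bound is at least $\gamma_{l,t}\,n - c_l$, where $\gamma_{l,t}>0$ is a constant depending only on $l$ and $t$. Packing then gives $|\mathcal{C}_{\text{high}}| = \mathcal{O}(4^{n-t}/n)$. Hence $|\mathcal{C}| = \mathcal{O}(4^n/n)$, and $r(\mathcal{C})\ge\log_4 n+\mathcal{O}(1)$, with the constant depending on the fixed $l$ and $t$.
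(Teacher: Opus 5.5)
Your Steps 1, 2, 4 and 5 follow the paper's template. The gap is the sentence that links your Step 2 to your Step 3. Your Step 2 lower bound is in terms of $R_{\ge 2}(\bm{u})$, the number of \emph{distinct maximal runs} of $\bm{u}$ of length at least two. Your Chernoff argument controls only the number of \emph{matches} $u_s=u_{s+1}$ among the disjoint trial pairs.

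These are different quantities, and the bridging claim (``matches in the family are also distinct runs, or else can be thinned by a further constant factor'') is false. Take $\bm{x}$ to be $t$-periodic, for example constant. Then every trial is a match, yet $\bm{u}$ is a single run, so $R_{\ge 2}(\bm{u})=1$, and every label-preserving deletion you exhibit yields the same strand $\bm{x}_{[1:n-t]}$. No constant-factor thinning changes this. What is true is only that strands whose matches cluster into few runs are rare, and proving that needs its own counting, which you do not supply. As written, you have not shown that the balls of strands outside $\mathcal{C}_{\text{low}}$ have linear size.

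The missing ingredient is a run-breaking condition built into each trial. The paper uses disjoint blocks of length $t+l$. A block starting at $k$ is a valid splice center if it satisfies the splice condition $x_j=x_{j+t}$ for $k\le j\le k+l-2$ and also the mismatch condition $x_{k+l-1}\neq x_{k+t+l-1}$. This event still has constant probability $3\cdot 4^{-l}$, independent across blocks. For two centers $k<k'$, the deleted strands differ at coordinate $k+l-1$: one reads $x_{k+t+l-1}$, the other $x_{k+l-1}$. So the injection into $\mathcal{B}_{\text{valid}}(\bm{x})$ is immediate, and the run bookkeeping disappears, including your overlap issue for $t<l-1$.

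Within your own framework, the equivalent repair is to use the trial event $u_s=u_{s+1}\neq u_{s+2}$ on disjoint supports of length $2t+l-1$. This event has positive constant probability. Each successful trial marks a distinct run end $s+1$, hence a distinct run of length at least two, and the Chernoff step then bounds $R_{\ge 2}(\bm{u})$ directly.

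Two smaller points in Step 2:
\begin{itemize}
\item Only the ``if'' direction of your ``iff'' holds, since distinct windows can carry equal labels. Only that direction is needed.
\item For $t<l-1$, part (ii) does not carry over verbatim. The mismatch coordinate $p=1+(e-1)t+j$ coming from $u_e\neq u_{e+1}$ (with $e$ the end of the run starting at $s$) can land in $[k',k'+l-2]$. There both deleted strands read $x_{p+t}$, so $p$ does not separate them. You must invoke $u_{s'}=u_{s'+1}$ to rule this case out and place $p$ in $[k,k'-1]$.
\end{itemize}
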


\begin{IEEEproof}
Let $\mathcal{C} \subseteq \Sigma_4^n$ be a valid burst $t$-deletion $\mathcal{A}$-labeling code. For any $x \in \mathcal{C}$, let $y = L_{\mathcal{A}}(x) \in \Sigma_{|\mathcal{A}|+1}^n$ be its labeling sequence. A burst $t$-deletion shortens $y$ to length $n-t$. We denote the restricted labeling mapping defined on length-$(n-t)$ sequences by $L_{\mathcal{A}}': \Sigma_4^{n-t} \rightarrow \Sigma_{|\mathcal{A}|+1}^{n-t}$.

\textit{Step 1: The Reduced Received Space.}
Let the valid received space be $\mathcal{R}_{\text{valid}} \triangleq L_{\mathcal{A}}'(\Sigma_4^{n-t})$. Trivially, $|\mathcal{R}_{\text{valid}}| \le 4^{n-t}$. 
Define the effective deletion ball of $x$ as $\mathcal{B}_{\text{valid}}(x) \triangleq \mathcal{B}L_{\mathcal{A}}^{\text{del}}(x,t) \cap \mathcal{R}_{\text{valid}}$. Since $\mathcal{C}$ is a $t$-burst deletion correcting code, the effective balls are pairwise disjoint. Thus, we have the sphere-packing constraint:
\begin{equation} \label{eq:gen_sphere_packing}
    \sum_{x \in \mathcal{C}} |\mathcal{B}_{\text{valid}}(x)| \le |\mathcal{R}_{\text{valid}}| \le 4^{n-t}.
\end{equation}

\textit{Step 2: Constructing Independent Splice Centers.}
To establish a lower bound on $|\mathcal{B}_{\text{valid}}(x)|$, we partition the sequence $x$ into $W \triangleq \lfloor n / (t+l) \rfloor$ non-overlapping contiguous blocks of length $t+l$. 
For a block starting at index $k$, it spans the coordinates $x_{[k, k+t+l-1]}$. We define this block to be a \textit{valid splice center} if it simultaneously satisfies two conditions:
\begin{enumerate}
    \item \textbf{Splice Condition:} $x_j = x_{j+t}$ for all $k \le j \le k+l-2$. This enforces a local periodicity of $t$ for $l-1$ consecutive bases.
    \item \textbf{Mismatch Condition:} $x_{k+l-1} \neq x_{k+t+l-1}$. This explicitly breaks the periodicity immediately after the splice region.
\end{enumerate}
Let $N_{\text{splice}}(x)$ denote the number of valid splice centers in $x$. We claim that $|\mathcal{B}_{\text{valid}}(x)| \ge N_{\text{splice}}(x)$. 

To prove this, we construct an injection from the set of valid splice centers to $\mathcal{B}_{\text{valid}}(x)$. For a valid splice center starting at $k$, let $x_{\text{del}}(k) \in \Sigma_4^{n-t}$ be the sequence obtained by deleting $x_{[k, k+t-1]}$. 
Because the splice condition ensures $x_j = x_{j+t}$ for the $l-1$ symbols bridging the gap, every length-$l$ window in $x_{\text{del}}(k)$ strictly preserves the dinucleotides (or $l$-mers) present in $x$. Consequently, $L_{\mathcal{A}}'(x_{\text{del}}(k))$ is exactly the sequence obtained by deleting a burst of $t$ labels starting at index $k$ in $y$. Therefore, $L_{\mathcal{A}}'(x_{\text{del}}(k)) \in \mathcal{B}_{\text{valid}}(x)$.

Now, consider two distinct valid splice centers starting at $k$ and $k'$, with $k < k'$. Since the blocks are non-overlapping, $k' \ge k + t + l$. We compare $x_{\text{del}}(k)$ and $x_{\text{del}}(k')$ at the specific coordinate $k+l-1$:
\begin{itemize}
    \item In $x_{\text{del}}(k)$, the deletion of $x_{[k, k+t-1]}$ shifts the original coordinate $k+t+l-1$ leftward by $t$. Hence, the symbol at index $k+l-1$ in $x_{\text{del}}(k)$ is the original $x_{k+t+l-1}$.
    \item In $x_{\text{del}}(k')$, since the deletion occurs at $k' > k+l-1$, no shift occurs at index $k+l-1$. The symbol at index $k+l-1$ in $x_{\text{del}}(k')$ remains the original $x_{k+l-1}$.
\end{itemize}
By the Mismatch Condition of the block at $k$, $x_{k+l-1} \neq x_{k+t+l-1}$. This guarantees that $x_{\text{del}}(k) \neq x_{\text{del}}(k')$. 
Furthermore, since both length-$(n-t)$ sequences share identical boundary bases, the unique reconstruction property of $\mathcal{A}$ ensures that $L_{\mathcal{A}}'(x_{\text{del}}(k)) \neq L_{\mathcal{A}}'(x_{\text{del}}(k'))$. This confirms the injection, yielding $|\mathcal{B}_{\text{valid}}(x)| \ge N_{\text{splice}}(x)$.

\textit{Step 3: Bounding the Low-Density Subset.}
We now evaluate the probability that a random block of length $t+l$ constitutes a valid splice center. Over the $4^{t+l}$ possible assignments, the first $t$ bases are unconstrained ($4^t$ choices). The Splice Condition deterministically fixes the next $l-1$ bases. The Mismatch Condition offers $3$ choices for the final base (avoiding equality with the corresponding base $t$ positions prior). Thus, the number of valid assignments is $4^t \times 1 \times 3 = 3 \cdot 4^t$.
The probability of forming a valid splice center is $p = \frac{3 \cdot 4^t}{4^{t+l}} = 3 \cdot 4^{-l}$. Crucially, $p$ is a positive constant strictly independent of $n$.

Let $\delta = p/2$, and define the threshold $\rho = \lfloor \delta W \rfloor$. We partition $\mathcal{C}$ into $\mathcal{C}_{\text{low}} \triangleq \{x \in \mathcal{C} : N_{\text{splice}}(x) \le \rho\}$ and $\mathcal{C}_{\text{high}} \triangleq \mathcal{C} \setminus \mathcal{C}_{\text{low}}$.

Since the $W$ blocks are non-overlapping and structurally independent in the unconstrained space $\Sigma_4^n$, we can establish an upper bound on $V_{\text{low}}$ by counting the number of sequences that contain exactly $j$ valid splice centers ($0 \le j \le \rho$) strictly within these $W$ designated blocks. The counting process decomposes into four independent factors:

\begin{enumerate}
    \item \textbf{The Unconstrained Tail:} The sequence is partitioned into $W$ blocks of length $t+l$, leaving a tail of length $n - W(t+l)$. Since we place no restrictions on this tail, it contributes exactly $4^{n - W(t+l)}$ possible combinations.
    \item \textbf{Block Selection:} There are $\binom{W}{j}$ ways to choose exactly $j$ blocks out of the $W$ available blocks to act as the valid splice centers.
    \item \textbf{Valid Blocks:} As derived previously, a single block of length $t+l$ has exactly $3 \cdot 4^t$ sequences that satisfy both the Splice and Mismatch conditions. The $j$ valid blocks therefore contribute a factor of $(3 \cdot 4^t)^j$.
    \item \textbf{Invalid Blocks:} The remaining $W-j$ blocks must strictly \textit{not} be valid splice centers. A block of length $t+l$ has $4^{t+l}$ total possible quaternary sequences. Subtracting the valid ones leaves $4^{t+l} - 3 \cdot 4^t$ invalid sequences per block, contributing a factor of $(4^{t+l} - 3 \cdot 4^t)^{W-j}$.
\end{enumerate}

Any sequence with at most $\rho$ valid splice centers in total across the entire length $n$ must, by definition, have at most $\rho$ valid splice centers localized within our specific $W$ disjoint blocks. Therefore, multiplying these independent combinatorial choices and summing over all possible values of $j$ from $0$ to $\rho$ provides a rigorous upper bound for $V_{\text{low}}$:
\begin{equation*}
    V_{\text{low}} \le \sum_{j=0}^{\rho} \binom{W}{j} (3 \cdot 4^t)^j (4^{t+l} - 3 \cdot 4^t)^{W-j} \cdot 4^{n - W(t+l)}.
\end{equation*}
Recall that the probability of a block being a valid splice center is $p = \frac{3 \cdot 4^t}{4^{t+l}}$. We can factor out $(4^{t+l})^W$ from the summation:
\begin{align*}
    V_{\text{low}}
    \le& 4^{n - W(t+l)} (4^{t+l})^W \sum_{j=0}^{\rho} \binom{W}{j} \left(\frac{3 \cdot 4^t}{4^{t+l}}\right)^j \left(1 - \frac{3 \cdot 4^t}{4^{t+l}}\right)^{W-j} \\
    =& 4^n \sum_{j=0}^{\rho} \binom{W}{j} p^j (1-p)^{W-j}.
\end{align*}
Notice that the summation term is exactly the cumulative distribution function $\mathbb{P}(S \le \rho)$ of a binomial random variable $S \sim \text{Binomial}(W, p)$~\cite{Cover2005}. The expected value is $\mu = \mathbb{E}[S] = pW$. Since $\delta = p/2$, our threshold is $\rho = \lfloor \frac{p}{2} W \rfloor \le \frac{1}{2}\mu$. 

We can now apply the Chernoff bound for the lower tail of a binomial distribution~\cite{Cover2005}, which states that for any $0 < \epsilon < 1$, $\mathbb{P}(S \le (1-\epsilon)\mu) \le \exp\left(-\frac{\epsilon^2 \mu}{2}\right)$. Setting $\epsilon = 1/2$, we obtain:
\begin{equation*}
    \mathbb{P}\left(S \le \frac{1}{2}pW\right) \le \exp\left(-\frac{(1/2)^2 pW}{2}\right) = \exp\left(-\frac{p}{8} W\right).
\end{equation*}
Since $W = \lfloor \frac{n}{t+l} \rfloor > \frac{n}{t+l} - 1$, we can further bound the exponential decay. Let $\gamma \triangleq \frac{p}{8(t+l)}$, which is a strictly positive constant because $p$, $t$, and $l$ are all independent of $n$. We have:
\begin{equation*}
    \exp\left(-\frac{p}{8} W\right) < \exp\left(-\frac{p}{8} \left(\frac{n}{t+l} - 1\right)\right) = \exp\left(\frac{p}{8}\right) e^{-\gamma n}.
\end{equation*}
Substituting this back into the bound for $V_{\text{low}}$, we get $V_{\text{low}} < 4^n \exp(p/8) e^{-\gamma n}$. To show that this term is negligible compared to the received space size, we examine the ratio:
\begin{equation*}
    \frac{V_{\text{low}}}{4^{n-t} / n^2} < \frac{4^n \exp(p/8) e^{-\gamma n}}{4^{n-t} n^{-2}} = 4^t \exp(p/8) \cdot n^2 e^{-\gamma n}.
\end{equation*}
Because exponential decay strictly dominates any polynomial growth asymptotically, we have $\lim_{n \to \infty} n^2 e^{-\gamma n} = 0$. Consequently, $V_{\text{low}} = o(4^{n-t}/n^2)$. 

Since the codebook $\mathcal{C}$ is a subset of $\Sigma_4^n$, the number of codewords falling into the low-density category must be bounded by the total number of such sequences in the entire space. Thus, $|\mathcal{C}_{\text{low}}| \le V_{\text{low}}$, immediately implying:
\begin{equation*}
    |\mathcal{C}_{\text{low}}| = o\left(\frac{4^{n-t}}{n^2}\right).
\end{equation*}

\textit{Step 4: Sphere-Packing for the High-Density Subset.}
For every $x \in \mathcal{C}_{\text{high}}$, we have $|\mathcal{B}_{\text{valid}}(x)| \ge \rho = \Omega(n)$. Substituting this into Eq.~\eqref{eq:gen_sphere_packing}:
\begin{equation*}
    |\mathcal{C}_{\text{high}}| \cdot \Omega(n) \le \sum_{x \in \mathcal{C}_{\text{high}}} |\mathcal{B}_{\text{valid}}(x)| \le 4^{n-t}.
\end{equation*}
This yields $|\mathcal{C}_{\text{high}}| = \mathcal{O}(4^{n-t}/n)$.

\textit{Step 5: Conclusion.}
Combining both subsets, the total cardinality is bounded by:
\begin{equation*}
    |\mathcal{C}| = |\mathcal{C}_{\text{low}}| + |\mathcal{C}_{\text{high}}| = \mathcal{O}\left(\frac{4^{n-t}}{n}\right).
\end{equation*}
Consequently, the redundancy evaluates to:
\begin{equation*}
    r(\mathcal{C}) = n - \log_4 |\mathcal{C}| \ge \log_4 n + \mathcal{O}(1),
\end{equation*}
which matches the dominant term of the original theorem, completing the proof.
\end{IEEEproof}
\end{document}